\DeclareFontFamily{U}{mathx}{\hyphenchar\font45}
\DeclareFontShape{U}{mathx}{m}{n}{
      <5> <6> <7> <8> <9> <10>
      <10.95> <12> <14.4> <17.28> <20.74> <24.88>
      mathx10
      }{}
\DeclareSymbolFont{mathx}{U}{mathx}{m}{n}
\DeclareMathSymbol{\bigplus}{1}{mathx}{"90}
\DeclareMathSymbol{\bigtimes}{1}{mathx}{"91}
\newcounter{lemqtypicalcq}
\newcounter{lemqtypicaldistance}
\newcounter{lemqtypicalcompleteness1}
\newcounter{lemqtypicalsoundness1}
\newcounter{thmqtypicalcq}
\newcounter{thmqtypicaldistance}
\newcounter{thmqtypicalcompleteness1}
\newcounter{thmqtypicalsoundness1}
\newtheorem{definition}{Definition}
\newtheorem{fact}{Fact}
\newtheorem{theorem}{Theorem}
\newenvironment{proof}{\noindent {\bf Proof:}}{\ \hfill \ $\Box$}
\newcommand{\vecl}{\mathbf{l}}
\newcommand{\vecx}{\mathbf{x}}
\newcommand{\cC}{\mathcal{C}}
\newcommand{\cE}{\mathcal{E}}
\newcommand{\cH}{\mathcal{H}}
\newcommand{\cL}{\mathcal{L}}
\newcommand{\cQ}{\mathcal{Q}}
\newcommand{\cU}{\mathcal{U}}
\newcommand{\cX}{\mathcal{X}}
\newcommand{\cY}{\mathcal{Y}}
\newcommand{\cZ}{\mathcal{Z}}
\newcommand{\hl}{\hat{l}}
\newcommand{\hQ}{\hat{Q}}
\newcommand{\hU}{\hat{U}}
\newcommand{\hX}{\hat{X}}
\newcommand{\hY}{\hat{Y}}
\newcommand{\hPi}{\hat{\Pi}}
\newcommand{\C}{\mathbb{C}}
\newcommand{\I}{\mathbb{I}}
\newcommand{\chan}{\mathfrak{C}}
\DeclareMathOperator*{\E}{{\rm {\bf E}}\,}
\DeclareMathOperator*{\Tr}{{\rm Tr}\;}
\newcommand{\zero}{\leavevmode\hbox{\small l\kern-3.5pt\normalsize0}}
\newcommand{\one}{\leavevmode\hbox{\small1\kern-3.8pt\normalsize1}}
\newcommand{\cupdot}{\mathbin{\mathaccent\cdot\cup}}
\newcommand{\elltwo}[1]{\left\|{ #1 }\right\|_2}
\newcommand{\ellone}[1]{\left\|{ #1 }\right\|_1}
\newcommand{\ellinfty}[1]{\left\|{ #1 }\right\|_\infty}
\newcommand{\ket}[1]{| #1 \rangle}
\newcommand{\bra}[1]{\langle #1 |}
\newcommand{\ketbra}[1]{\ket{#1}\bra{#1}}
\begin{document}

\title{{\bf Inner bounds via simultaneous decoding in quantum
network information theory
}}

\author{Pranab Sen\textsuperscript{1,*}}
\affilOne{\textsuperscript{1}
School of Technology and Computer Science, Tata Institute of Fundamental
Research, Mumbai 400005, India.
Email: {\sf pranab.sen.73@gmail.com}
}

\twocolumn[{

\maketitle

\begin{abstract}
We prove new inner bounds for several multiterminal channels with
classical inputs and quantum outputs. Our inner bounds are all proved
in the one-shot setting and are natural analogues of the best classical
inner bounds for the respective channels. For some of these channels,
similar quantum inner bounds were unknown even in the asymptotic 
independent and identically distributed setting. We prove 
our inner bounds by appealing to a new 
classical-quantum joint typicality lemma established in a
companion paper~\cite{sen:oneshot}. This lemma allows us to lift to
the quantum setting
many inner bound proofs for classical multiterminal channels that use
intersections and unions of typical sets.
\end{abstract}


\keywords{quantum simultaneous decoder; one-shot inner bounds;
broadcast channel; interference channel; network information theory}

}]



\markboth{Pranab Sen}{Inner bounds via simultaneous decoding in
quantum network information theory}

\section{Introduction}
An important technical tool used in proving inner bounds in 
classical network
information theory is the so-called conditional joint typicality lemma
\cite{book:elgamalkim}. 
What is equally important but often not emphasised are the implicit
{\em union} and {\em intersection} arguments used in the inner bound 
proofs. 
For quantum channels, proving a joint typicality lemma that can
withstand union and intersection arguments was a big bottleneck. As
a result of this bottleneck, many inner bounds in classical network
information theory were hitherto not known to be extendable to the 
quantum setting. 

Most inner bounds in information theory were first proved in the 
traditional setting of many
independent and identically distributed (iid) uses of a classical 
communication channel. Recently, 
attention has shifted 
to proving inner bounds in the {\em one-shot} setting where the 
classical or quantum channel can be
used only once. This is the most general setting. The aim is to prove
good one-shot inner bounds which ideally yield the best known inner
bounds when restricted to the asymptotic iid and asymptotic non-iid
(information spectrum) settings. In the one-shot setting, the importance
of union and intersection arguments increases and they often need to be 
made explicit. This is because the technique of 
time sharing often used in the asymptotic iid setting
does not apply in the one-shot setting. In other words, the one-shot
setting forces us to look for so-called {\em simultaneous decoders} for
multiterminal channels.
The inner bound analyses for simultaneous decoders generally use
union and intersection arguments.

Fawzi {\it et al}~\cite{fawzi:interference} and 
Sen~\cite{sen:interference} did construct a simultaneous decoder for
the two sender multiple access channel with classical inputs and quantum
output (cq-MAC) but their constructions, which 
were given in the 
asymptotic iid setting, are not known to work in the one-shot setting.
Qi, Wang and Wilde~\cite{qi:simultaneous} constructed a one-shot 
simultaneous
decoder for the cq-MAC with an arbitrary number of senders, but their
achievable rates restricted to the asymptotic iid setting are inferior
to the optimal rates obtained by Winter~\cite{winter:cqmac} using
successive cancellation.
Thus, for more than two senders a 
simultaneous decoder for the cq-MAC achieving optimal rates was
hitherto unknown even in the asymptotic
iid setting.  A simultaneous decoder for the 
MAC with three senders is used as a crucial ingredient in
the proof of the
Han-Kobayashi inner bound for the interference channel
\cite{HanKobayashi}, even in the asymptotic iid classical setting. 
Thus, the
lack of a simultaneous decoder for the asymptotic iid quantum setting
is a bottleneck, which was sidestepped by Sen~\cite{sen:interference} 
by constructing a simultaneous decoder for a restricted type of three
sender cq-MAC which sufficed to prove the Han-Kobayashi inner bound in
the asymptotic iid setting
for sending classical information over a quantum interference channel.
Hirche, Morgan and Wilde~\cite{hirche:interference} also 
proved the Han-Kobayashi inner bound for sending classical information
over a quantum interference channel in the asymptotic iid setting.
They did so using successive cancellation and polar coding.
However, both Sen's and Hirche {\it et al}'s techniques are tied to 
the asymptotic iid setting
and do not give any non-trivial inner
bound for the interference channel in the one-shot setting. 
Additionally, those techniques do not seem to give any 
non-trivial inner bound for
the entanglement assisted interference channel even in the asymptotic
iid quantum setting.

Very recently, in a companion paper Sen~\cite{sen:oneshot} proved 
a one-shot quantum joint typicality
lemma that possesses strong union and intersection properties.
Using that lemma, he also constructed a one-shot simultaneous decoder 
for the cq-MAC with an arbitrary number of senders. 
In this paper\footnote{Journal version of \cite{sen:simultaneous}},
we use the quantum joint typicality lemma from \cite{sen:oneshot} to 
obtain for the first
time non-trivial one-shot inner bounds for sending classical information
over several multiterminal quantum channels.
The channels that we consider are the broadcast channel and interference
channel, both without and with entanglement assistance.
For both 
channels our one-shot quantum inner bounds are the natural analogues
of the best known classical asymptotic iid inner bounds, and reduce
to them in the iid limit.

\subsection{Organisation of the paper}
In the next section, we state some preliminary facts which will be
useful throughout the paper. 
In Section~\ref{sec:cqtypical}, we state two simple versions of
Sen's quantum joint typicality lemma \cite{sen:oneshot} which suffice
for the applications in this paper.
In Section~\ref{sec:broadcast}, we prove a one-shot Marton inner bound 
with common message~\cite{Marton} for sending classical information through
unassisted as well as entanglement assisted quantum broadcast channel.
Section~\ref{sec:interference} proves the achievability of the
Han-Kobayashi~\cite{HanKobayashi} and 
Chong-Motani-Garg-El Gamal~\cite{CMGElGamal} inner bounds for 
one-shot use of a 
cq-interference channel.
Finally, we make some concluding remarks and list some open problems
in Section~\ref{sec:conclusions}.

\section{Preliminaries}
\label{sec:preliminaries}
All Hilbert spaces in this paper are finite dimensional. 
The symbol $\oplus$ always denotes the orthogonal direct
sum of Hilbert spaces.
For a subspace $X$ of a Hilbert space $\cH$, 
let $\Pi^{\cH}_X$ denote the orthogonal projection in $\cH$ 
onto $X$. When clear from the context, we may use $\Pi_X$ instead of
$\Pi^{\cH}_X$ for brevity of notation.

By a quantum 
state or a density matrix in a Hilbert space $\cH$, we mean a Hermitian,
positive semidefinite linear operator on $\cH$ with trace equal to one.
By a POVM
element $\Pi$ in $\cH$, we mean a Hermitian
positive semidefinite linear operator on $\cH$ with eigenvalues between
$0$ and $1$. Stated in terms of inequalities on Hermitian operators,
$\zero \leq \Pi \leq \one$, where $\zero$, $\one$ denote the zero and
identity operators on $\cH$. 

Let $\elltwo{v}$ denote the $\ell_2$-norm of a vector $v \in \cH$.
For an operator $A$ on $\cH$, we use $\ellone{A}$ to denote the
Schatten $\ell_1$-norm, also known as trace norm, of $A$, 
which is nothing but
the sum of singular values of $A$. We use $\ellinfty{A}$ to denote
the Schatten $\ell_\infty$-norm, also known as operator norm, 
of $A$, which is 
nothing but the largest singular value of $A$. For operators $A$, $B$
on $\cH$, we have the inequality 
\[
|\Tr [A B]| \leq \ellone{A B} \leq 
\min\{\ellone{A} \ellinfty{B}, \ellinfty{A} \ellone{B}\}.
\]

Let $\cX$ be a finite set. By a {\em classical-quantum} (hereafter called
cq for short) state on
$\cX \cH$ we mean a quantum state of the form
$
\rho^{\cX \cH} =
\sum_{x \in \cX} p_x \ketbra{x}^{\cX} \otimes \rho_x^{\cH},
$
where $x$ ranges over computational basis vectors of $\cX$ viewed
as a Hilbert space, $\{p_x\}_{x \in \cX}$ is a probability distribution
on $\cX$ and the operators $\rho_x$ for all $x \in \cX$ are 
quantum states in
$\cH$. We will also use the terminology that $\rho$ is classical on
$\cX$ and quantum on $\cH$. In this paper, superscripts in the notation
for a quantum state will denote the Hilbert space in which it lies. 
A similar convention will be used for classical probability distributions.

In this paper all our quantum operations will be trace non-increasing
completely positive superoperators,
generalising unitary evolution, POVM measurement and tracing out 
subsystems. For brevity, we will use the term {\em superoperator} 
to denote such operations. An expression like
$\chan^{A_1 A_2 \rightarrow B_1 B_2 B_3}$ will denote a superoperator
taking operators on $A_1 \otimes A_2$ to operators on 
$B_1 \otimes B_2 \otimes B_3$.
We use $\I^{A}$ to denote the identity superoperator on $A$.
When there is a need for very
precise notation, we will use expressions like 
$(\chan^{A_1 A_2 \rightarrow B}(\rho^{A_1 A_2}))^B$ to denote
the (possibly subnormalised) quantum state in $B$ obtained by
applying the superoperator $\chan$ to quantum state $\rho^{A_1 A_2}$.

For a positive integer $c$, we will use $[c]$ to denote the set 
$\{1, 2, \ldots, c\}$. If $c = 0$, we define $[c] := \{\}$.
We shall study systems that are classical on $\cX^{\otimes [c]}$ and
quantum on $\cH$. 
If $\vecx$ is a computational basis vector of $\cX^{\otimes [c]}$,
for a subset $S \subseteq [c]$,
$\vecx_S$ will denote its restriction to the system $\cX^{\otimes S}$. 
Thus, $\vecx \equiv \vecx_{[c]}$. We also use $\vecx_S$ to denote
computational basis vectors of $\cX^{\otimes S}$ without reference to
the systems in $[c] \setminus S$. 
The notation
$(\cdot)^{\otimes S}$ denotes a tensor product only for the coordinates
in $S$. We will use the notation $(S_1, \ldots, S_l) \subseteq [c]$ to
denote a collection of subsets $S_1, \ldots, S_l$, $l > 0$ of $[c]$. 
Note that
order does not matter in describing a collection of subsets of $[c]$.

We will need Winter's gentle measurement lemma~\cite{winter:gentle}.
\begin{fact}[\cite{OgawaNagaoka}]
\label{fact:gentle}
Let $\Lambda$ be a POVM element and $\rho$ be a quantum state such
that $\Tr[\Lambda \rho] \geq 1 - \epsilon$. Then,
$
\ellone{\rho - \Lambda^{1/2} \rho \Lambda^{1/2}} \leq 
2 \sqrt{\epsilon}.
$
\end{fact}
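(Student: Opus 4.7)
The plan is to reduce the trace-norm bound on $\rho - \Lambda^{1/2}\rho\Lambda^{1/2}$ to the scalar hypothesis $\Tr[\Lambda\rho]\geq 1-\epsilon$ by means of a spectral decomposition and a pointwise inequality for $\sqrt{\Lambda}$. First I would diagonalize $\rho=\sum_i p_i\ketbra{e_i}$, so that
\[
\rho-\Lambda^{1/2}\rho\Lambda^{1/2}=\sum_i p_i\bigl(\ketbra{e_i}-\Lambda^{1/2}\ketbra{e_i}\Lambda^{1/2}\bigr).
\]
Each summand has the form $\ketbra{u}-\ketbra{v}$ with $\ket{u}=\ket{e_i}$ a unit vector and $\ket{v}=\Lambda^{1/2}\ket{e_i}$ of $\ell_2$-norm at most one (using $\Lambda\leq\one$). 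Writing $\ketbra{u}-\ketbra{v}=(\ket{u}-\ket{v})\bra{u}+\ket{v}(\bra{u}-\bra{v})$ and noting that the trace norm of a rank-one operator $\ket{x}\bra{y}$ equals $\elltwo{x}\elltwo{y}$ gives the pointwise bound $\ellone{\ketbra{u}-\ketbra{v}}\leq 2\,\elltwo{\ket{u}-\ket{v}}$.

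Next, summing over $i$ and applying Jensen's inequality yields
\[
\ellone{\rho-\Lambda^{1/2}\rho\Lambda^{1/2}}\leq 2\sum_i p_i\,\elltwo{(\one-\Lambda^{1/2})\ket{e_i}}\leq 2\sqrt{\sum_i p_i\,\langle e_i|(\one-\Lambda^{1/2})^2|e_i\rangle}.
\]
The last ingredient is the operator inequality $(\one-\Lambda^{1/2})^2\leq \one-\Lambda$: on eigenvalues $\lambda\in[0,1]$ we have $(1-\sqrt{\lambda})^2=1-2\sqrt{\lambda}+\lambda\leq 1-\lambda$ since $\sqrt{\lambda}\geq\lambda$, and this pointwise bound lifts to $\Lambda$ via the functional calculus. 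Substituting, the sum under the square root is bounded by $\Tr[(\one-\Lambda)\rho]=1-\Tr[\Lambda\rho]\leq\epsilon$, giving the claimed $2\sqrt{\epsilon}$.

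The only step where one can easily lose a constant (or overshoot to $2\sqrt{2\epsilon}$) is the passage from trace norm to $\ell_2$ distance; one must use both $\elltwo{u}=1$ and $\elltwo{v}\leq 1$, the latter being a direct consequence of the POVM constraint $\Lambda\leq\one$. Everything else is routine, and I anticipate no real obstacle beyond keeping track of this constant.
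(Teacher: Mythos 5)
Your proof is correct, and since the paper simply cites Winter's gentle measurement lemma as a fact without supplying its own argument, there is nothing internal to compare against. Your route is self-contained and tight: the rank-one decomposition $\ketbra{u}-\ketbra{v}=(\ket{u}-\ket{v})\bra{u}+\ket{v}(\bra{u}-\bra{v})$ with $\ellone{\ket{x}\bra{y}}=\elltwo{x}\elltwo{y}$ does give the pointwise bound $2\elltwo{(\one-\Lambda^{1/2})\ket{e_i}}$ (using $\Lambda\leq\one$ for $\elltwo{v}\leq1$), Jensen converts the $p_i$-average of norms into the square root of $\Tr[(\one-\Lambda^{1/2})^2\rho]$, and the scalar inequality $(1-\sqrt\lambda)^2\leq 1-\lambda$ on $[0,1]$ lifts by functional calculus to $(\one-\Lambda^{1/2})^2\leq\one-\Lambda$, yielding the claimed $2\sqrt{\epsilon}$. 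The common alternative is to go through fidelity and the Fuchs--van de Graaf inequality; your direct spectral argument is more elementary and arrives at the same constant, so there is no gap and nothing to fix.
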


We recall the definition of the {\em hypothesis testing relative
entropy} given by Wang and Renner~\cite{wang:DepsH}. Very similar
quantities were defined and used in earlier 
works~\cite{buscemi:qchannel, brandao:entanglement}.
\begin{definition}
\label{def:hyptestingrelentropy}
Let $\alpha$, $\beta$ be two quantum states in the same Hilbert space.
Let $0 \leq \epsilon < 1$.
Then the {\em hypothesis testing relative entropy} of $\alpha$ with respect
to $\beta$ is defined by
\[
D^{\epsilon}_H(\alpha \| \beta) :=
\max_{\Pi: \Tr [\Pi \alpha] \geq 1 - \epsilon}
-\log \Tr [\Pi \beta],
\]
where the maximisation is over all POVM elements $\Pi$ acting on 
the Hilbert space.
\end{definition}
The definition quantifies the minimum probability of `accepting' $\beta$
by a POVM element $\Pi$ that `accepts' $\alpha$ with probability
at least $1 - \epsilon$.
From the definition, it is
easy to see that if $\epsilon < \epsilon'$,
$D^{\epsilon}_H(\alpha \| \beta) < D^{\epsilon'}_H(\alpha \| \beta)$.
We now define the {\em hypothesis testing mutual information} of 
a bipartite quantum state $\rho^{AB}$.
\begin{definition}
Let $0 \leq \epsilon < 1$.
Let $\rho^{AB}$ be a quantum state in a bipartite system $AB$. 
The {\em hypothesis testing mutual information} is defined as
$
I^\epsilon_H(A : B)_\rho := 
D^\epsilon_H(\rho^{AB} \| \rho^A \otimes \rho^B).
$
\end{definition}
For a cq-state, we can define the {\em hypothesis testing conditional
mutual information}.
\begin{definition}
Let $0 \leq \epsilon < 1$.
Let $\rho^{ABC}$ be a state which is classical on $A$ and quantum on
$BC$. It can be expressed as 
$
\rho^{ABC} =
\sum_a p(a) \ketbra{a}^A \otimes \rho_a^{BC}.
$
Consider a state $\sigma^{ABC}$ which is classical on $A$ and quantum on
$BC$ defined as
$
\sigma^{ABC} =
\sum_a p(a) \ketbra{a}^A \otimes \rho_a^{B} \otimes \rho_a^C.
$
The {\em hypothesis testing conditional mutual information} is defined as
$
I^\epsilon_H(B : C | A)_\rho := 
D^\epsilon_H(\rho^{ABC} \| \sigma^{ABC}).
$
\end{definition}

Let $0 \leq \epsilon \leq 1$. Let $P$, $Q$ be probablity distributions
on the same sample space $\cX$. For non-negative vectors $v_1, v_2$ 
supported on $\cX$, we use the notation $v_1 \leq v_2$ to denote
$v_1(x) \leq v_2(x)$ for all sample points $x \in \cX$. 
We now define the smooth max relative entropy of 
$P$ with respect to $Q$. The definition below is obtained by
taking the classical
version of the quantity defined by Datta~\cite{Datta}, coupled with
the observation that there exists a minimising $P'$ in the definition 
satisfying $P' \leq P$. This condition will be useful when we 
prove a one-shot mutual covering lemma in Fact~\ref{fact:mutualcovering}.
\begin{definition}
The $\epsilon$-smooth max relative entropy of $P$ with respect to $Q$
is defined as
\[
D^\epsilon_\infty(P \| Q) :=
\min_{0 \leq P' \leq P: \ellone{P - P'} \leq \epsilon}
\max_{x \in \cX} \log \frac{P'(x)}{Q(x)},
\] 
where $\frac{0}{0} := 1$. Note that $D^\epsilon_\infty(P \| Q)$ can
be $+\infty$ if the support of $P$ is not contained in the support of
$Q$ and $\epsilon$ is small.
\end{definition}
For completeness, we recall Datta's definition of $\epsilon$-smooth max 
relative entropy of quantum state $\rho$ with respect to quantum
state $\sigma$:
\[
D^\epsilon_\infty(\rho \| \sigma) :=
\min_{\rho': \ellone{\rho - \rho'} \leq \epsilon}
\;
\min_{\lambda \in \mathbb{R}: \rho \leq 2^\lambda \sigma} 
\;
\lambda.
\] 
Note that $D^\epsilon_\infty(\rho \| \sigma)$ can
be $+\infty$ if the support of $\rho$ is not contained in the support of
$\sigma$ and $\epsilon$ is small.

For a joint probability distribution $P$  on the sample space 
$\cX \times \cY \times \cZ$, we
define the smooth max conditional mutual information as follows.
\begin{definition}
\label{def:Imax}
The $\epsilon$-smooth max mutual information between random 
variables
$X$ and $Y$ conditioned on $Z$ under the joint distribution $P$ is 
defined as
\[
I^\epsilon_\infty(X : Y | Z)_P :=
D^\epsilon_\infty(P^{XYZ} \| P^Z \times (P^X|Z) \times (P^Y|Z)),
\]
where the superscripts denote the sample spaces of the 
respective probability distributions, and 
$P^Z \times (P^X|Z) \times (P^Y|Z)$ denote the probability distribution
on $\cX \times \cY \times \cZ$ obtained by first taking a sample
according to the marginal on $\cZ$ followed by independently taking a
pair of samples according to the marginals on $\cY$ and $\cZ$ conditioned
on the chosen sample from $\cZ$.
\end{definition}

We now define the so-called `restricted smooth conditional max mutual 
information' for a quantum state $\rho^{ZXY}$ which is classical on
$Z$ and quantum on $X$ and $Y$. Our definition is the conditional 
version of a quantity defined in \cite{anshu:broadcast}.
\begin{definition}
\label{def:Imaxcond}
The $\epsilon$-smooth restricted conditional max mutual information 
between $X$ and $Y$ conditioned on $Z$ under the state
$\rho^{XYZ}$, where $X$, $Y$ are quantum and $Z$ is classical,  
is defined as
\[
I^{\epsilon,\delta}_\infty(X : Y | Z)_\rho :=
D^{\epsilon,\delta}_\infty(\rho^{XYZ}
\| \rho^Z \otimes \rho^X|Z \otimes \rho^Y|Z),
\]
where the classical quantum state
$
\rho^Z \otimes \rho^X|Z \otimes \rho^Y|Z
$
is obtained by taking a sample $z$ according to the marginal probability 
distribution $\rho^Z$ followed by the tensor product of the marginal
quantum states $\rho^X|z$ and $\rho^Y|z$ obtained by conditioning on $z$,
and the smoothing in the definition of 
$D^{\epsilon,\delta}_\infty(\cdot \| \cdot)$ is done only over
classical quantum states $(\rho')^{XYZ}$ $\epsilon$-close to
$\rho^{XYZ}$ satistying $(\rho')^{XZ} \leq (1+\delta) \rho^{XZ}$
and $(\rho')^{YZ} \leq (1+\delta) \rho^{YZ}$.
\end{definition}

We next state a {\em one-shot mutual covering lemma} which strengthens
the one-shot mutual covering lemma of Radhakrishnan 
{\it et al}~\cite[Lemma 3]{radhakrishnan:broadcast}.
Our mutual covering lemma is closely related to the {\em bipartite 
convex split lemma} of Anshu, Jain and Warsi~\cite{anshu:broadcast} 
specialised to the classical setting. We state it in this form so that
it may be useful for other problems in network information theory.
For the broadcast channel, it allows us to give a clean one-shot
proof of Marton's inner bound with the added advantage of decoding 
Alice's `input random variables' exactly and not just `up to the band' as
in the traditional forms of Marton's inner bound. 
\begin{fact}[One-shot mutual covering lemma]
\label{fact:mutualcovering}
Let $(U_0, U_1, U_2)$ be a triple of random variables in the sample
space $\cU_0 \times \cU_1 \times \cU_2$ with joint 
distribution function $P^{U_0 U_1 U_2}$. 
Let $0 < \epsilon < 1$.
Define $I_\infty := I^\epsilon_\infty(U_1 : U_2 | U_0)_P$.
Let $r_1$, $r_2$ be positive integers such that
\[
r_1 + r_2 \geq I_\infty + 2 \log \frac{1}{\epsilon}.
\]

We now define two probability distributions on the set 
$
\cU_0 \times (\cU_1)^{2^{r_1}} \times (\cU_2)^{2^{r_2}} 
\times [2^{r_1}] \times [2^{r_2}]
$ 
as follows. 
\begin{enumerate}

\item
For the first distribution 
$(P_1)^{U_0 (U_1)^{2^{r_1}} (U_2)^{2^{r_2}} K_1 K_2}$,
define a new pair of random variables $(K_1, K_2)$ 
taking uniformly random values in $[2^{r_1}] \times [2^{r_2}]$. 
Choose first a sample $u_0$ according to the marginal $P^{U_0}$.
Choose independently a sample $(k_1, k_2)$ from $(K_1, K_2)$.
Let 
\begin{eqnarray*}
\lefteqn{\vec{U_1}^{-k_1} | u_0} \\ 
& := &
(U_1(1) \times \cdots \times 
U_1(k_1 - 1) \times U_1(k_1 + 1) \\
&    &
{} \times \cdots \times U_1(2^{r_1})
) | u_0
\end{eqnarray*}
be $(2^{r_1} - 1)$ independent copies of the random variable 
$U_1 | (U_0 = u_0)$.
Similarly, define 
\begin{eqnarray*}
\lefteqn{\vec{U_2}^{-k_2} | u_0} \\
& := & 
(U_2(1) \times \cdots \times 
U_2(k_2 - 1) \times U_2(k_2 + 1) \\ 
&    &
{} \times \cdots \times U_2(2^{r_2})
) | u_0
\end{eqnarray*}
to be $(2^{r_2} - 1)$ independent copies of the random variable 
$U_2 | (U_0 = u_0)$.
Let $(U_1(k_1), U_2(k_2)) | u_0$ denote
the distribution $P^{U_1 U_2} | (U_0 = u_0)$
on the $(k_1, k_2)$th copy. This completes the definition of
the probability distribution 
$(P_1)^{U_0 (U_1)^{2^{r_1}} (U_2)^{2^{r_2}} K_1 K_2}$ 
denoted in brief by
\begin{eqnarray*}
\lefteqn{(P_1)^{U_0 (U_1)^{2^{r_1}} (U_2)^{2^{r_2}} K_1 K_2}} \\ 
& := &
K_1 K_2 U_0 ((U_1(k_1), U_2(k_2)) | U_0) 
(\vec{U_1}^{-K_1} | U_0)
(\vec{U_2}^{-K_2} | U_0).
\end{eqnarray*}

\item
For the second distribution 
$(P_2)^{U_0 (U_1)^{2^{r_1}} (U_2)^{2^{r_2}} K_1 K_2}$,
choose first a sample $u_0$ according to the marginal $P^{U_0}$.
Let 
\[
\vec{U_1} | u_0 := (U_1(1) \times \cdots \times U_1(2^{r_1})) | u_0
\] 
be $2^{r_1}$ independent copies of the random variable 
$U_1 | (U_0 = u_0)$. 
conditioned on the sample from $U_0$.
Similarly, define 
\[
\vec{U_2} | u_0 := (U_2(1) \times \cdots \times U_2(2^{r_2})) | u_0
\] 
to be $2^{r_2}$ independent copies of the random variable 
$U_2 | (U_0 = u_0)$. 
A pair $(k_1, k_2) \in [2^{r_1}] \times [2^{r_2}]$ is now chosen
conditioned on the other random variables with exactly the same 
conditioning as in the distribution 
$(P_1)^{U_0 (U_1)^{2^{r_1}} (U_2)^{2^{r_2}} K_1 K_2}$.
We shall denote the complete distribution so obtained by
$(P_2)^{U_0 (U_1)^{2^{r_1}} (U_2)^{2^{r_2}} K_1 K_2}$ and denote 
it in brief by
\begin{eqnarray*}
\lefteqn{(P_2)^{U_0 (U_1)^{2^{r_1}} (U_2)^{2^{r_2}} K_1 K_2}} \\
& := &
U_0 (\vec{U_1} | U_0) (\vec{U_2} | U_0)
((K_1, K_2) | U_0 \vec{U_1} \vec{U_2}).
\end{eqnarray*}

\end{enumerate}

Then,
\[
\ellone{
(P_1)^{U_0 (U_1)^{2^{r_1}} (U_2)^{2^{r_2}} K_1 K_2} -
(P_2)^{U_0 (U_1)^{2^{r_1}} (U_2)^{2^{r_2}} K_1 K_2}
} \leq
4 \epsilon.
\]
\end{fact}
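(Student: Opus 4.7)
The plan is to reduce the total-variation distance to a second-moment estimate in the style of the classical specialisation of the Anshu--Jain--Warsi bipartite convex split lemma. The starting observation is that $P_2$ is defined so that the conditional distribution of $(K_1,K_2)$ given $(U_0,\vec U_1,\vec U_2)$ is declared to be identical to the corresponding conditional under $P_1$. Consequently
\[
\ellone{P_1 - P_2} \;=\; \ellone{P_1^{U_0 \vec U_1 \vec U_2} - P_2^{U_0 \vec U_1 \vec U_2}},
\]
which removes the auxiliary index variables from the calculation. A direct expansion then shows that the Radon--Nikodym derivative of $P_1^{U_0 \vec U_1 \vec U_2}$ with respect to the iid product $P_2^{U_0 \vec U_1 \vec U_2}$ is the empirical average
\[
\hat w(U_0, \vec U_1, \vec U_2) \;=\; \frac{1}{2^{r_1+r_2}} \sum_{k_1, k_2} w\bigl(U_0, U_1(k_1), U_2(k_2)\bigr),
\]
where $w(u_0,u_1,u_2)$ is the pointwise ratio of $P^{U_1 U_2|U_0}(u_1,u_2|u_0)$ to $P^{U_1|U_0}(u_1|u_0)\cdot P^{U_2|U_0}(u_2|u_0)$. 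Hence the whole task collapses to bounding $\E_{P_2}|\hat w - 1|$.

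Next I would bring in the smoothing supplied by Definition~\ref{def:Imax}: choose a subdistribution $P'\le P^{U_0 U_1 U_2}$ with $\ellone{P-P'}\le\epsilon$ achieving $I_\infty$, and set $w'(u_0,u_1,u_2) := P'(u_0,u_1,u_2)/(P(u_0)P(u_1|u_0)P(u_2|u_0))$, so that $w'\le 2^{I_\infty}$ pointwise. Define $\hat w'$ by substituting $w'$ for $w$ in the empirical average. Because $w\ge w'\ge 0$,
\[
\E_{P_2}(\hat w - \hat w') \;=\; 1 - \textstyle\sum P' \;\le\; \epsilon,
\]
so by the triangle inequality it suffices to bound $\E_{P_2}|\hat w' - 1|$.

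I would then apply Cauchy--Schwarz: $\E|\hat w' - 1| \le |\E\hat w' - 1| + \sqrt{\mathrm{Var}(\hat w')}$, where the first term is at most $\epsilon$ because $\E \hat w' = \sum P'$. For the variance I would expand $\hat w'^{\,2}$ as a sum over quadruples $(k_1,k_2,\ell_1,\ell_2)$ and partition the $2^{2(r_1+r_2)}$ summands according to which of $k_1=\ell_1$ and $k_2=\ell_2$ hold. In the diagonal case (both equalities, $2^{r_1+r_2}$ terms) the pointwise bound $w'\le 2^{I_\infty}$ gives $\E[W'^{\,2}] \le 2^{I_\infty}\E[W'] \le 2^{I_\infty}$. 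In each of the three off-diagonal regimes the conditional expectation of a single $W'$ given the shared coordinates collapses to the ratio of an appropriate marginal of $P'$ to the corresponding product marginal of $P$, and that ratio is bounded pointwise by $1$ since $P'\le P$; summing squares against the iid measure therefore contributes at most $1 - 2^{-(r_1+r_2)}$ in total. This yields $\mathrm{Var}(\hat w') \le 2^{I_\infty - r_1 - r_2} + O(\epsilon)$, and the rate hypothesis $r_1+r_2 \ge I_\infty + 2\log(1/\epsilon)$ forces the first term down to $\epsilon^2$.

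The hardest part will be the combinatorial variance calculation in the last paragraph: one has to identify the conditional expectation of $W'$ in each of the four overlap regimes, recognize it correctly as a marginal ratio of $P'$, and verify that the dominant contribution comes only from the diagonal $2^{r_1+r_2}$ quadruples, which is the one place where the smoothed max-information bound $w'\le 2^{I_\infty}$ is actually used. Once this is done, collecting the additive $\epsilon$ losses from the smoothing step, from $|\E\hat w' - 1|$, and from the Cauchy--Schwarz step gives the asserted bound.
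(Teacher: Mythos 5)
Your reduction to the marginals on $U_0\,\vec U_1\,\vec U_2$, your identification of the Radon--Nikodym derivative with the empirical average $\hat w$, and your choice of smoothing $w'\le 2^{I_\infty}$ with $\E_{P_2}[\hat w - \hat w']\le\epsilon$ are all correct and match the high-level structure of the argument. Where the paper simply invokes the bipartite convex split lemma (Lemma~3 of \cite{anshu:broadcast}, noting $\delta=0$ for classical subdistributions $P'\le P$) as a black box, you attempt to re-derive the bound from scratch, which is a legitimate and more self-contained route.

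However, the final Cauchy--Schwarz step does not close. You correctly compute that the off-diagonal quadruples in $\E_{P_2}[\hat w'^2]$ contribute, after normalization, roughly $1-2^{-(r_1+r_2)}$, and that the diagonal is forced down to $\epsilon^2$ by the rate hypothesis. But the off-diagonal covariances are only bounded above by $\E[h]-(\E h)^2 = \E[h](1-\E h)$ where $h(u_0)=P'(u_0)/P(u_0)\le 1$ and $\E h = \sum_{u_0} P'(u_0)\ge 1-\epsilon$; this is $\Theta(\epsilon)$ in general, not $O(\epsilon^2)$ (take $h\in\{0,1\}$ as an extremal example). Therefore $\mathrm{Var}(\hat w')$ is of order $\epsilon$, and $\sqrt{\mathrm{Var}(\hat w')}$ is $\Theta(\sqrt\epsilon)$. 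Your bookkeeping, which treats the Cauchy--Schwarz step as contributing another additive $\epsilon$, is thus wrong: the chain $\E|\hat w-1|\le\epsilon + |\E\hat w'-1| + \sqrt{\mathrm{Var}(\hat w')}$ yields a bound of order $\epsilon+\sqrt\epsilon$, which exceeds $5\epsilon$ for all small $\epsilon$. To recover a genuinely $O(\epsilon)$ bound you cannot simply take a square root of a variance that already carries a $\Theta(\epsilon)$ term; you need an argument (as in the convex split lemma proofs) that isolates the smoothing loss additively rather than pushing it through a second-moment estimate --- for instance, by exploiting that the unsmoothed off-diagonal cross terms are \emph{exactly} one, so only the diagonal (controlled by $2^{I_\infty-r_1-r_2}\le\epsilon^2$) contributes to the deviation of $\hat w$ from its conditional mean, with the smoothing entering only additively.
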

\begin{proof}
First, condition on a sample $u_0$ from the marginal $P^{U_0}$.
Consider now the distributions 
$(P_1)^{(U_1)^{2^{r_1}} (U_2)^{2^{r_2}}} | (U_0 = u_0)$,
$(P_2)^{(U_1)^{2^{r_1}} (U_2)^{2^{r_2}}} | (U_0 = u_0)$.
Suppose one can show that
\[
\ellone{
(P_1)^{(U_1)^{2^{r_1}} (U_2)^{2^{r_2}}} | (U_0 = u_0) -
(P_2)^{(U_1)^{2^{r_1}} (U_2)^{2^{r_2}}} | (U_0 = u_0) 
} \leq 4 \epsilon.
\]
This will imply that
\[
\ellone{
(P_1)^{U_0 (U_1)^{2^{r_1}} (U_2)^{2^{r_2}}} -
(P_2)^{U_0 (U_1)^{2^{r_1}} (U_2)^{2^{r_2}}}
} \leq 4 \epsilon.
\]
Now observe that the conditioning of $(K_1, K_2)$ on the other
random variables is exactly the same in the two distributions
$(P_1)^{U_0 (U_1)^{2^{r_1}} (U_2)^{2^{r_2}} K_1 K_2}$,
$(P_2)^{U_0 (U_1)^{2^{r_1}} (U_2)^{2^{r_2}} K_1 K_2}$.
This implies that
\begin{eqnarray*}
\lefteqn{
\ellone{
(P_1)^{U_0 (U_1)^{2^{r_1}} (U_2)^{2^{r_2}} K_1 K_2} -
(P_2)^{U_0 (U_1)^{2^{r_1}} (U_2)^{2^{r_2}} K_1 K_2}
} 
} \\
& = &
\ellone{
(P_1)^{U_0 (U_1)^{2^{r_1}} (U_2)^{2^{r_2}}} -
(P_2)^{U_0 (U_1)^{2^{r_1}} (U_2)^{2^{r_2}}}
} \leq 4 \epsilon.
\end{eqnarray*}

It only remains to show that
\[
\ellone{
(P_1)^{(U_1)^{2^{r_1}} (U_2)^{2^{r_2}}} | (U_0 = u_0) -
(P_2)^{(U_1)^{2^{r_1}} (U_2)^{2^{r_2}}} | (U_0 = u_0) 
} \\ 
\leq 
4 \epsilon.
\]
For this, apply the bipartite convex split lemma of 
Anshu, Jain and Warsi with the observation that 
for classical probability distributions the `smoothing' subdistribution 
$(P')^{U_0 U_1 U_2}$ of Definition~\ref{def:Imax} satisfies
$(P')^{U_0 U_1 U_2} \leq P^{U_0 U_1 U_2}$ which implies that $\delta = 0$ 
in Lemma~3 of \cite{anshu:broadcast}. The above inequality then
follows easily. 

This completes the proof of our one-shot mutual covering lemma.
\end{proof}

We shall use the so-called {\em pretty good measurement (PGM)}
\cite{belavkin:pgm1, belavkin:pgm2}, also known as
square root measurement, in order to
construct our decoders. Given a set of POVM elements $\Pi_m$, $m \in [M]$,
the pretty good measurement is a POVM defined as follows:
\[
\Lambda_m := 
\left(\sum_{m'} \Pi_{m'}\right)^{-1/2}
\Pi_m
\left(\sum_{m'} \Pi_{m'}\right)^{-1/2}
\]
We will use the famous Hayashi-Nagaoka~\cite{HayashiNagaoka} 
operator inequality
in order to analyse the decoding error of the PGM POVM.
\begin{fact}
\label{fact:HN}
\[
\one - \Lambda_m \leq 
2 (\one - \Pi_m) +
4 \sum_{m': m' \neq m} \Pi_{m'}.
\]
\end{fact}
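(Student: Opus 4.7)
Write $T := \Pi_m$ and $S := \sum_{m' \neq m} \Pi_{m'}$, and let $A := T + S$. The first step is to reduce to the case in which $A$ is invertible. On the kernel of $A$, positivity forces every $\Pi_{m'}$ to vanish, so $\Lambda_m = \zero$ and the claimed bound becomes $\one \leq 2 \one$, which is trivial; hence we may restrict attention to the support of $A$, on which $A$ is strictly positive and $A^{-1/2}$ is well defined.

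On the support of $A$, the identity $\one = A^{-1/2} A A^{-1/2}$ together with $A - T = S$ gives
\[
\one - \Lambda_m = A^{-1/2}(A - T) A^{-1/2} = A^{-1/2} S A^{-1/2},
\]
so the fact reduces to the operator inequality
\[
A^{-1/2} S A^{-1/2} \leq 2(\one - T) + 4 S.
\]
Conjugating both sides by $A^{1/2}$ and substituting $A = T + S$, this is in turn equivalent to
\[
2 A^{1/2} T A^{1/2} - 2 T \leq S + 4 A^{1/2} S A^{1/2}.
\]
I would attack this by writing
\[
2 \bigl(A^{1/2} T A^{1/2} - T \bigr) = (A^{1/2} + \one) T (A^{1/2} - \one) + (A^{1/2} - \one) T (A^{1/2} + \one),
\]
which is a direct expansion, and then applying the quadratic positivity bound $X^* Y + Y^* X \leq \alpha X^* X + \alpha^{-1} Y^* Y$ (a consequence of $(\sqrt{\alpha} X - Y/\sqrt{\alpha})^* (\sqrt{\alpha} X - Y/\sqrt{\alpha}) \geq \zero$) to the splitting $X = (A^{1/2} + \one) T^{1/2}$, $Y = T^{1/2}(A^{1/2} - \one)$. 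Substituting $A^{1/2} \cdot A^{1/2} = T + S$ to eliminate the $A$ in favour of $T$ and $S$, and then using $\zero \leq T \leq \one$ and $S \geq \zero$ to discard manifestly non-positive residual terms, the inequality should collapse to $2(\one - T) + 4 S$ on the right, with the constants $2$ and $4$ forced by the choice of weight $\alpha$ that makes every cross term balance.

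The main obstacle is the noncommutativity of $T$, $S$ and $A^{1/2}$. In the commuting (scalar) case the inequality $s/(t+s) \leq 2(1-t) + 4 s$ reduces to the elementary rearrangement $2t(1 - t + s) + s(1 + 4 s) \geq 0$, which is a sum of non-negative scalars; in the operator setting no such term-by-term rearrangement is available, because one cannot freely reorder non-commuting positive factors without destroying operator positivity. The Cauchy-Schwarz trick is precisely the tool for producing operator-valued rearrangements whose residuals are all of the manifestly positive form $X^* X$, and I would expect a short but delicate accounting of which cross terms cancel to be the crux of the proof.
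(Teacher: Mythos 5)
The paper states this as a cited result (Hayashi--Nagaoka) and does not prove it, so there is no in-paper argument to compare against; I will assess your proposal on its own terms. Your setup is fine: the restriction to the support of $A = \Pi_m + \sum_{m'\neq m}\Pi_{m'}$, the identity $\one-\Lambda_m = A^{-1/2}SA^{-1/2}$ there, the conjugation by $A^{1/2}$ to reach $2A^{1/2}TA^{1/2}-2T \leq S + 4A^{1/2}SA^{1/2}$ (with $T:=\Pi_m$, $S:=\sum_{m'\neq m}\Pi_{m'}$), and the algebraic identity $2(BTB-T) = (B+\one)T(B-\one) + (B-\one)T(B+\one)$ for $B:=A^{1/2}$ are all correct. (Your stated $X=(B+\one)T^{1/2}$, $Y=T^{1/2}(B-\one)$ do not reproduce $X^*Y+Y^*X=2(BTB-T)$; you need $X=T^{1/2}(B+\one)$, but that is a small slip.)

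The step that fails is the plan to bound $2(BTB-T)$ by $\alpha X^*X + \alpha^{-1}Y^*Y = \alpha(B+\one)T(B+\one) + \alpha^{-1}(B-\one)T(B-\one)$ and then squeeze that under $S+4BSB$. No choice of $\alpha>0$ can work: take $S=\zero$ and $T$ a nonzero projector, so $B=A^{1/2}=T$. Then the target is $0\leq 0$, while $\alpha(B+\one)T(B+\one)+\alpha^{-1}(B-\one)T(B-\one)=4\alpha T > \zero = S+4BSB$. So after this AM--GM step you have thrown away too much and cannot recover. (You also write that the right side should ``collapse to $2(\one-T)+4S$,'' but after conjugating by $A^{1/2}$ the target right-hand side is $S+4A^{1/2}SA^{1/2}$, which signals that the book-keeping has slipped.) The usual proof does not conjugate at all. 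Instead it expands $A^{-1/2}SA^{-1/2} = S + (\one-A^{-1/2})S(\one-A^{-1/2}) - (\one-A^{-1/2})S - S(\one-A^{-1/2})$, applies the same AM--GM bound only to the last two (cross) terms with $U=(\one-A^{-1/2})S^{1/2}$ and $V=S^{1/2}$ (a genuine completion of square, with nothing thrown away beyond $\geq\zero$), obtaining $A^{-1/2}SA^{-1/2}\leq 2S + 2(\one-A^{-1/2})S(\one-A^{-1/2})$. One then uses $S\leq A$ to get $(\one-A^{-1/2})S(\one-A^{-1/2})\leq (A^{1/2}-\one)^2 = A - 2A^{1/2} + \one$, and---this is the ingredient missing from your sketch---the operator monotonicity of the square root together with $0\leq T\leq\one$, which gives $T^2\leq T\leq A$ and hence $T\leq A^{1/2}$, so $(A^{1/2}-\one)^2 \leq A - 2T + \one = \one - T + S$. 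Combining yields exactly $2(\one-T)+4S$. The hypothesis $T\leq\one$ must enter through operator monotonicity, not through term-by-term sign inspection; your scalar computation $2t(1-t)+s(1+2t+4s)\geq 0$ correctly identifies that $t\leq 1$ is essential, but the non-commutative surrogate for ``$t(1-t)\geq 0$'' here is $T\leq A^{1/2}$, and that is the piece the proposal lacks.
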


\section{The quantum joint typicality lemma}
\label{sec:cqtypical}
We now state the versions of the
classical-quantum joint typicality lemma from 
\cite{sen:oneshot} which suffice
for the applications in this paper. 
\begin{fact}[cq joint typ. lem., intersec. case]
\label{fact:cqtypical}
Let $\cH$, $\cL$ be two Hilbert spaces and $\cX$ be a finite set. 
We will also use $\cX$ to denote the Hilbert space with computational basis
elements indexed by the set $\cX$. Let $c$ be a non-negative integer.
Let $A$ denote a quantum register with Hilbert space $\cH$.
For every $\vecx \in \cX^c$, 
let $\rho_\vecx$ be a quantum state in $A$.
Consider the extended quantum system 
\[
A' := 
(\cH \otimes \C^2) \oplus 
\bigoplus_{S: \{\} \neq S \subseteq [c]}
(\cH \otimes \C^2) \otimes \cL^{\otimes |S|}.
\]
Also define the {\em augmented} classical system 
$\cX' := \cX \otimes \cL$.

Below, $\vecx$, $\vecl$ denote computational basis vectors of
$\cX^{[c]}$, $\cL^{\otimes [c]}$.
Let $p(\cdot)$ be a probability distribution on the vectors $\vecx$.
Define the classical quantum state
\[
\rho^{\cX_[c] A} :=
\sum_\vecx 
p(\vecx) \ketbra{\vecx}^{\cX_{[c]}} \otimes
\rho_\vecx^{A}.
\]
Let $\frac{\one^{\cL^{\otimes c}}}{|\cL|^{c}}$ denote the completely
mixed state on $c$ tensor copies of $\cL$. 
View 
$
\rho_\vecx^{A} 
\otimes (\ketbra{0})^{\C^2}
$
as a state in $A'$ under the natural embedding viz. the
embedding is into the first summand of $A'$
defined above. Similarly, view 
$
\rho^{\cX_{[c]} A} 
\otimes (\ketbra{0})^{(\C^2)}
\otimes \frac{\one^{\cL^{\otimes c}}}{|\cL|^{c}} 
$
as a state in $\cX'_{[c]} A'$ under the natural 
embedding.

Let $0 \leq \epsilon, \delta \leq 1$. 
Let $(S_1, S_2, S_3)$ be disjoint subsets of $[c]$ such that
$S_1 \cup S_2 \cup S_3 = [c]$.
We allow $S_1$ or $S_3$ or both to be empty, and denote the triple by 
$(S_1, S_2, S_3) \dashv [c]$. 
Choose $\cL$ to have dimension 
$|\cL| = \frac{3^{13} |\cH|^4}{2^4 (1 - \epsilon)^6}$.
Then, there is a state $\rho'$ and a POVM element $\Pi'$ in 
$\cX'_{[c]} A'$ such that:
\begin{enumerate}
\item
\setcounter{lemqtypicalcq}{\value{enumi}}
The state $\rho'$ and POVM element $\Pi'$ are classical 
on $\cX^{\otimes [c]} \otimes \cL^{[c]}$ and quantum on 
$A'$. 
More precisely, $\rho'$, $\Pi'$ can be expressed as
\begin{eqnarray*}
(\rho')^{\cX'_{[c]} A'} 
& = &
|\cL|^{-c} 
\sum_{\vecx, \vecl}
p(\vecx) 
\ketbra{\vecx}^{\cX_{[c]}} \otimes
\ketbra{\vecl}^{\cL_{[c]}} \otimes
(\rho')_{\vecx, \vecl, \delta}^{A'}, \\
(\Pi')^{\cX'_{[c]} A'} 
& = &
\sum_{\vecx, \vecl}
\ketbra{\vecx}^{\cX_{[c]}} \otimes
\ketbra{\vecl}^{\cL_{[c]}} \otimes
(\Pi')_{\vecx,\vecl,\delta}^{A'}, \\
\end{eqnarray*}
where 
$(\rho')_{\vecx,\vecl,\delta}^{A'}$, 
$(\Pi')_{\vecx,\vecl,\delta}^{A'}$ are 
quantum states and POVM elements respectively
for all computational basis vectors 
$\vecx \in \cX^{\otimes [c]}$,
$\vecl \in \cL^{\otimes [c]}$;

\item
\setcounter{lemqtypicaldistance}{\value{enumi}}
\[
\ellone{
(\rho')^{\cX'_{[c]} A'} - 
\rho^{\cX_{[c]} A} 
\otimes (\ketbra{0})^{\C^2}
\otimes \frac{\one^{\cL^{\otimes c}}}{|\cL|^c} 
} \leq
2^{\frac{c+1}{2} +1} \delta;
\]

\item
\setcounter{lemqtypicalcompleteness1}{\value{enumi}}
\[
\Tr [(\Pi')^{\cX'_{[c]} A'} (\rho')^{\cX'_{[c]} A'}] \geq
1 - 
\delta^{-2} 2^{2^{c+5}} 3^c \epsilon -
2^{\frac{c+1}{2}+1} \delta;
\]

\item
\setcounter{lemqtypicalsoundness1}{\value{enumi}}
Let $S \subseteq [c]$.
Let $\vecx_S$, $\vecl_S$ be computational basis vectors in 
$\cX^{\otimes S}$, $\cL^{\otimes S}$.
In the following definition, let $\vecx'_{\bar{S}}$,
$\vecl'_{\bar{S}}$ range over all
computational basis vectors of $\cX^{\otimes ([c] \setminus S)}$,
$\cL^{\otimes ([c] \setminus S)}$.
Define a state in $A'$,
\[
(\rho')_{\vecx_S, \vecl_S, \delta}^{A'} := 
|\cL|^{-|\bar{S}|} 
\sum_{\vecx'_{\bar{S}}, \vecl'_{\bar{S}}} 
p(\vecx'_{\bar{S}} | \vecx_S)
(\rho')_{\vecx_S \vecx'_{\bar{S}}, 
         \vecl_S \vecl'_{\bar{S}}, \delta
        }^{A'}.
\]
Analogously define 
\[
\rho_{\vecx_S}^A :=
\sum_{\vecx'_{\bar{S}}}
p(\vecx'_{\bar{S}} | \vecx_S)
\rho_{\vecx_S \vecx'_{\bar{S}}}^A.
\]
Let $(S_1, S_2, S_3) \dashv [c]$.
Define 
\begin{eqnarray*}
\lefteqn{(\rho')_{(S_1, S_2, S_3)}^{\cX'_{[c]} A'} }\\
& := &
|\cL|^{-c} 
\sum_{\vecx_{S_1}}
p(\vecx_{S_1}) 
\ketbra{\vecx_{S_1}}^{\cX_{S_1}} \otimes
\ketbra{\vecl_{S_1}}^{\cL_{S_1}} \\
&    &
~~~~~~~~~~~
{} \otimes
\left(
\sum_{\vecx_{S_2}}
p(\vecx_{S_2} | \vecx_{S_1})
\ketbra{\vecx_{S_2}}^{\cX_{S_2}}
\right. \\
&   &
~~~~~~~~~~~~~~~~~~
\left.
{} \otimes
\ketbra{\vecl_{S_2}}^{\cL_{S_2}} 
\right) \\
&    &
~~~~~~~~~~~
{} \otimes
\left(
\sum_{\vecx_{S_3}}
p(\vecx_{S_3} | \vecx_{S_1})
\ketbra{\vecx_{S_3}}^{\cX_{S_3}} 
\right. \\
&   &
~~~~~~~~~~~~~~~~~~
\left.
{} \otimes
\ketbra{\vecl_{S_3}}^{\cL_{S_3}} \otimes 
(\rho')_{\vecx_{S_1 \cup S_3}, \vecl_{S_1 \cup S_3}, \delta}^{A'}
\right), \\
\lefteqn{\rho_{(S_1, S_2, S_3)}^{\cX_{[c]} A} } \\
& := &
\sum_{\vecx_{S_1}}
p(\vecx_{S_1}) 
\ketbra{\vecx_{S_1}}^{\cX_{S_1}} \\
&    &
~~~~~~~~~~~
{} \otimes
\left(
\sum_{\vecx_{S_2}}
p(\vecx_{S_2} | \vecx_{S_1})
\ketbra{\vecx_{S_2}}^{\cX_{S_2}} 
\right) \\
&    &
~~~~~~~~~~~
{} \otimes
\left(
\sum_{\vecx_{S_3}}
p(\vecx_{S_3} | \vecx_{S_1})
\ketbra{\vecx_{S_3}}^{\cX_{S_3}} \otimes
\rho_{\vecx_{S_1 \cup S_3}}^{A}
\right).
\end{eqnarray*}

Then,
\[
\Tr [
(\Pi')^{\cX'_{[c]} A'} 
(\rho')_{(S_1, S_2, S_3)}^{\cX'_{[c]} A'}
]
\leq
2^{-I_H^{\epsilon}(X_{S_2} : A X_{S_3} | X_{S_1})_\rho},
\]
where 
$
I_H^{\epsilon}(X_{S_2} : A X_{S_3} | X_{S_1})_\rho :=
D_H^{\epsilon}
    (
      \rho^{\cX_{[c]} A} \| 
      \rho^{\cX_{[c]} A}_{(S_1, S_2, S_3)}
    ).
$
\end{enumerate}
\end{fact}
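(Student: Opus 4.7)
The fact is a specialization of the quantum joint typicality lemma of the companion paper~\cite{sen:oneshot}. The classical counterpart (Fact~\ref{fact:ctypical}) trivially intersects classical POVM elements pointwise, but in the quantum case the $\epsilon$-typical projectors associated with different subsets $S \subseteq [c]$ fail to commute, so a na\"ive intersection is not meaningful. The whole purpose of the direct-sum extension $A' = (\cH\otimes\C^2)\oplus\bigoplus_{\emptyset\neq S\subseteq[c]}(\cH\otimes\C^2)\otimes\cL^{\otimes|S|}$, together with the label registers $\cL^{\otimes[c]}$ appended to $\cX^{\otimes[c]}$, is to purchase approximate commutativity by spreading the competing projectors across orthogonal summands indexed by the ``probed'' subset $S$, with the uniform labels decorrelating them.

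For the construction, I would, for every non-empty $S\subseteq[c]$ and every $\vecx_S$, pick a near-optimal POVM element $\Lambda_{S,\vecx_S}$ on $\cH$ witnessing, up to slack absorbed into the final $\epsilon$, the hypothesis testing relative entropy that appears on the right-hand side of item 4 for the partition that has $S$ as its ``dependent'' part. Tilt each $\Lambda_{S,\vecx_S}$ into the $S$-summand of $A'$ by tensoring it with the rank-one projector onto the basis vector $\ket{\vecl_S}$ of $\cL^{\otimes|S|}$ that matches the classical label $\vecl_S$ carried in $\cX'_S$; call this tilted operator $\Pi'_S$ and set $\Pi' := \sum_{S}\Pi'_S$. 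Define $\rho'$ by gently measuring the canonical embedding $\rho^{\cX_{[c]}A}\otimes\ketbra{0}^{\C^2}\otimes\frac{\one^{\cL^{\otimes c}}}{|\cL|^c}$ with $\Pi'$; item 1 is then immediate from the block structure, and item 2 follows from Winter's gentle measurement lemma (Fact~\ref{fact:gentle}) with the stated $2^{(c+1)/2+1}\delta$ loss.

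For item 3 (completeness), the only summand of $\Pi'$ that non-trivially pairs with $\rho'$ is $S=[c]$, on which $\Pi'_{[c]}$ reduces via the label match to $\Lambda_{[c],\vecx}$ applied to $\rho_\vecx$, giving acceptance at least $1-\epsilon$ per outcome before the $\delta^{-2}\cdot 2^{2^{c+5}}\cdot 3^c\epsilon$ union-bound and variance loss and the $2^{(c+1)/2+1}\delta$ smoothing loss are subtracted. For item 4 (soundness), fix a partition $(S_1,S_2,S_3)\dashv[c]$ and expand $\Tr[\Pi'(\rho')_{(S_1,S_2,S_3)}]$ as a sum over subset-summands. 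The dominant contribution comes from the summand whose tilted POVM targets $S_2$; by the defining property of $\Lambda_{S_2,\cdot}$ this is bounded by $2^{-I_H^\epsilon(X_{S_2}:AX_{S_3}|X_{S_1})_\rho}$, while all other summands are negligible because the uniform classical labels $\vecl_{\bar S}$ average over the quantum labels and kill the overlap.

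The main obstacle is controlling, simultaneously over all partitions $(S_1,S_2,S_3)\dashv[c]$ and all subset-summands of $A'$, the cross-terms introduced by the tilting; this is a second-moment estimate, and the specific choice $|\cL| = 3^{13}|\cH|^4/(2^4(1-\epsilon)^6)$ is exactly what is needed to suppress cross-summand overlaps enough to survive the double-exponential union bound over subset configurations reflected in the $2^{2^{c+5}}$ factor in item 3. Turning this into a compact argument rather than brute enumeration is the substantive technical content carried out in~\cite{sen:oneshot}.
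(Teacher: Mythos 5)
The paper itself does not prove this Fact; it is stated as a special case of the quantum joint typicality lemma from the companion paper~\cite{sen:oneshot} and cited directly. Your proposal similarly defers the technical work to~\cite{sen:oneshot}, which is appropriate, but the construction you sketch has a structural flaw that is inconsistent with the statement itself.

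Your $\Pi'$ is built by placing each tilted $\Pi'_S$ into the $S$-summand of $A'$ for \emph{non-empty} $S\subseteq[c]$, and then summing. Hence $\Pi'$ is supported entirely on $\bigoplus_{\emptyset\neq S\subseteq[c]}(\cH\otimes\C^2)\otimes\cL^{\otimes|S|}$ and has zero overlap with the \emph{first} summand $\cH\otimes\C^2$. But property~2 forces $\rho'$ to be within trace distance $2^{(c+1)/2+1}\delta$ of $\rho^{\cX_{[c]}A}\otimes\ketbra{0}^{\C^2}\otimes\frac{\one^{\cL^{\otimes c}}}{|\cL|^c}$, which under the natural embedding lives entirely in the first summand of $A'$. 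So for small $\delta$, $\Tr[\Pi'\rho']\approx 0$ rather than $\approx 1$, and property~3 cannot hold. Your claim that ``the only summand of $\Pi'$ that non-trivially pairs with $\rho'$ is $S=[c]$'' is the same error in a different guise: the dominant mass of $\rho'$ sits in the $S=\emptyset$ block, not the $S=[c]$ block. In Sen's actual construction the roles are reversed: it is the \emph{state} that is tilted slightly into the auxiliary summands (with the first summand remaining dominant), and the POVM element $\Pi'$ is built with a large component on the first summand and small corrections in the others, so that the individual hypothesis-testing projectors for the various partitions $(S_1,S_2,S_3)\dashv[c]$ approximately commute when evaluated on the tilted state. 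A second issue is that defining $\rho'$ as the post-gentle-measurement state of the embedded $\rho$ under $\Pi'$ is circular and would not obviously produce the explicit cq-form in property~1, whose states $(\rho')_{\vecx,\vecl,\delta}^{A'}$ are meant to be defined directly from $\rho_\vecx$ by a $\vecl$-dependent tilting, independently of $\Pi'$. Your high-level intuition (direct-sum overflow spaces plus uniform labels to decorrelate noncommuting typicality projectors, with $|\cL|$ sized to dominate $|\cH|$) matches the spirit of~\cite{sen:oneshot}, but the concrete construction you give does not, and as stated it contradicts properties 2 and 3 of the Fact.
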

Informally speaking, the above lemma guarantees the existence of
a {\bf single} POVM element $\Pi'$ with robust properties that 
serves as an 
`intersection' of the individual POVM elements
achieving the hypothesis testing relative entropy quantities
arising from the state $\rho^{\cX_{[c]} A}$ by considering all possible
collections of subsets of $[c]$.

We next state a more general classical quantum joint typicality lemma
that guarantees the existence of
a {\bf single} POVM element $\Pi'$ with robust properties that serves as a 
`union of intersection' of individual POVM elements.
\begin{figure*}
\begin{center}
\includegraphics[width=\textwidth]{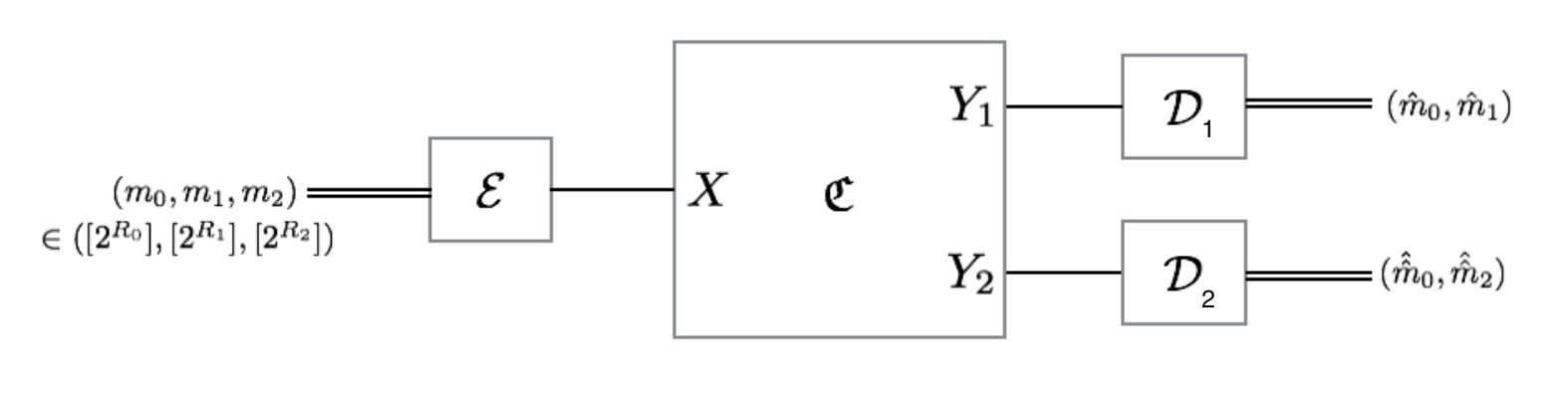}
\end{center}
\caption{Quantum broadcast channel without entanglement assistance.}
\label{fig:broadcast}
\end{figure*}
\begin{fact}[cq joint typ. lem., gen. case]
\label{fact:gencqtypical}
Let $\cH$, $\cL$ be Hilbert spaces and $\cX$ be a finite set. We will
also use $\cX$ to denote the Hilbert space with computational basis
elements indexed by the set $\cX$. Let $c$ be a non-negative integer.
Let  $A$ denote a quantum register with Hilbert space $\cH$.
For every $\vecx \in \cX^c$, 
let $\rho_\vecx$ be a quantum state in $A$.
Let $t$ be a positive integer.
Let $\vecx^{t}$ denote a $t$-tuple of elements of $\cX^c$;
we shall denote its $i$th element by $\vecx^{t}(i)$.
Consider the 
extended quantum system $\hat{A}$ where
$\hat{A} \cong A' \otimes \C^2 \otimes \C^{t+1}$, and
$A'$ is defined as
\[
A' := 
(\cH \otimes \C^2) \oplus 
\bigoplus_{S: i \in S \subseteq [c] \cupdot [k]}
(\cH \otimes \C^2) \otimes \cL^{\otimes |S|}.
\]
Also define the {\em augmented} classical system 
$\hat{\cX} := \cX \otimes \cL$.

Below, $\vecx$, $\vecl$ denote computational basis vectors of
$\cX^{[c]}$, $\cL^{\otimes [c]}$.
Let $p(\cdot)$ denote a probability distribution on 
the vectors $\vecx$. Let $p(1; \cdot), \ldots, p(t; \cdot)$ denote
probability distributions on $\vecx^t$ such that the marginal
of $p(i; \vecx^t)$ on the $i$th element is $p(\vecx^t(i))$.
For $i \in [t]$, define the classical quantum states
\[
\rho^{(\cX_[c])^t A}(i) :=
\sum_{\vecx^{t}} 
p(i; \vecx^t) \ketbra{\vecx^t}^{(\cX_{[c]})^t} \otimes
\rho_{\vecx^t(i)}^{A}.
\]
Let $\frac{\one^{\cL^{\otimes c}}}{|\cL|^{c}}$ denote the completely
mixed state on $c$ tensor copies of $\cL$. 
View 
$
\rho_\vecx^{A} 
\otimes (\ketbra{0})^{\C^2}
\otimes (\ketbra{0})^{\C^2}
\otimes (\ketbra{0})^{\C^{t+1}}
$
as a state in $\hat{A}$ under the natural embedding viz. the
embedding is into the first summand of $A'$
defined above tensored with $\C^2 \otimes \C^{t+1}$. 
Similarly, view 
$
\rho^{(\cX_{[c]})^t A}(i)
\otimes (\ketbra{0})^{\C^2}
\otimes \frac{\one^{\cL^{\otimes ct}}}{|\cL|^{ct}} 
\otimes (\ketbra{0})^{\C^2}
\otimes (\ketbra{0})^{\C^{t+1}}
$
as a state in $(\hat{\cX}_{[c]})^{\otimes t} \hat{A}$ under the natural 
embedding.

Let $0 \leq \alpha, \epsilon, \delta \leq 1$. 
Choose $\cL$ to have dimension 
$|\cL| = \frac{3^{13} |\cH|^4}{2^4 (1 - \epsilon)^6}$.
Then, there are states $\rho'(1), \ldots, \rho'(t)$ and a  
POVM element $\hat{\Pi}$ in 
$(\hat{\cX}_{[c]})^{\otimes t} \hat{A}$ such that:
\begin{enumerate}
\item
\setcounter{thmqtypicalcq}{\value{enumi}}
The states $\rho'(1), \ldots, \rho'(t)$ and 
POVM element $\hat{\Pi}$ are classical 
on $\cX^{\otimes [ct]} \otimes \cL^{[ct]}$ and quantum on 
$\hat{A}$.
More precisely, $\rho'(i)$, $i \in [t]$, 
$\hat{\Pi}$ can be expressed as
\begin{eqnarray*}
\lefteqn{(\rho'(i))^{(\hat{\cX}_{[c]})^t \hat{A}} } \\
& = &
|\cL|^{-ct} 
\sum_{\vecx^t, \vecl^t}
p(i; \vecx^t) 
\ketbra{\vecx^t}^{(\cX_{[c]})^{\otimes t}} \otimes
\ketbra{\vecl^t}^{(\cL_{[c]})^{\otimes t}} \\
&  &
~~~~~
{} \otimes
(\rho')_{\vecx^t(i), \vecl^t(i), \delta}^{A'}
\otimes (\ketbra{0})^{\C^2}
\otimes (\ketbra{0})^{\C^{t+1}}, \\
\lefteqn{(\hPi)^{(\hat{\cX}_{[c]})^t \hat{A}} } \\
& = &
\sum_{\vecx^t, \vecl^t}
\ketbra{\vecx^t}^{(\cX_{[c]})^{\otimes t}} \otimes
\ketbra{\vecl^t}^{(\cL_{[c]})^{\otimes t}} \otimes
(\hat{\Pi})_{\vecx^t,\vecl^t,\delta}^{\hat{A}}, \\
\end{eqnarray*}
where 
$(\rho')_{\vecx,\vecl,\delta}^{A'}$ 
are quantum states for all computational basis vectors 
$\vecx \in \cX^{\otimes [c]}$, $\vecl \in \cL^{\otimes [c]}$ and
$
(\hPi)_{\vecx^t,\vecl^t,\delta}^{\hat{A}}
$ 
are POVM elements for all computational basis vectors 
$\vecx^t \in \cX^{\otimes [ct]}$,
$\vecl^t \in \cL^{\otimes [ct]}$;

\item
\setcounter{thmqtypicaldistance}{\value{enumi}}
For all $i \in [t]$,
\[
\begin{array}{l}
\left\|
(\rho'(i))^{(\hat{\cX}_{[c]})^t \hat{A}} 
\right. \\
~~~
{} - 
(\rho(i))^{(\cX_{[c]})^t A} 
\otimes (\ketbra{0})^{\C^2}
\otimes \frac{\one^{\cL^{\otimes ct}}}{|\cL|^{ct}} \\
~~~~~~~~~~~~~
\left.
\otimes (\ketbra{0})^{\C^2}
\otimes (\ketbra{0})^{\C^{t+1}}
\right\|_1
\; \leq \;
2^{\frac{c+1}{2} +1} \delta;
\end{array}
\]

\item
\setcounter{thmqtypicalcompleteness1}{\value{enumi}}
For all $i \in [t]$,
\[
\Tr [
(\hat{\Pi})^{(\hat{\cX}_{[c]})^t \hat{A}} 
(\rho'(i))^{(\hat{\cX}_{[c]})^t \hat{A}}
] \geq
1 - 
\delta^{-2} 2^{2^{c+5}} 3^c \epsilon -
2^{\frac{c+1}{2}+1} \delta - \alpha;
\]

\item
\setcounter{thmqtypicalsoundness1}{\value{enumi}}
Let $S \subseteq [c]$.
Let $\vecx_{S}$, $\vecl_S$ be computational basis vectors in 
$\cX^{\otimes S}$, $\cL^{\otimes S}$.
In the following definition, let $\vecx'_{\bar{S}}$,
$\vecl'_{\bar{S}}$ range over all
computational basis vectors of $\cX^{\otimes ([c] \setminus S)}$,
$\cL^{\otimes ([c] \setminus S)}$.
Define states in $A'$,
\[
(\rho')_{\vecx_{S}, \vecl_{S}, \delta}^{A'} := 
|\cL|^{-|\bar{S}|} 
\sum_{\vecx'_{\bar{S}}, \vecl'_{\bar{S}}} 
p(\vecx'_{\bar{S}} | \vecx_S)
(\rho')_{\vecx_{S} \vecx'_{\bar{S}}, 
         \vecl_{S} \vecl'_{\bar{S}}, \delta
        }^{A'}.
\]
Analogously define 
\[
\rho_{\vecx_S}^{A} :=
\sum_{\vecx'_{\bar{S}}}
p(\vecx'_{\bar{S}} | \vecx_S)
\rho_{\vecx_{S} \vecx'_{\bar{S}}}^{A}.
\]

For $i \in [t]$, $S \subseteq [c]$, let $q_{i; S}(\cdot)$ be a
probability distribution on $\vecx^t$. 
Define 
\begin{eqnarray*}
\lefteqn{(\rho')_{i; S}^{(\hat{\cX}_{[c]})^t \hat{A}}} \\
& := &
|\cL|^{-ct} 
\sum_{\vecx^t}
q_{i; S}(\vecx^t)
\ketbra{\vecx^t}^{\cX^{\otimes [ct]}} \otimes
\ketbra{\vecl^t}^{\cL^{\otimes [ct]}} \\
&    &
~~~~~~~~~~~~~~
{} \otimes
(\rho')_{\vecx^t(i)_{S}, \vecl^t(i)_{S}, \delta}^{A'}, \\
\lefteqn{\rho_{i; S}^{(\cX_{[c]})^t A}} \\
& := &
\sum_{\vecx^t}
q_{i; S}(\vecx^t)
\ketbra{\vecx^t}^{\cX^{\otimes [ct]}} \otimes
\rho_{\vecx^t(i)_{S}}^{A}.
\end{eqnarray*}

Then,
\begin{eqnarray*}
\lefteqn{
\Tr [
(\hat{\Pi})^{(\hat{\cX}_{[c]})^t \hat{A}} 
(\rho')_{i; S}^{(\hat{\cX}_{[c]})^t \hat{A}}
]
} \\
& \leq &
\frac{1-\alpha}{\alpha}
\sum_{j=1}^t
2^{
    -D_H^{\epsilon}
     (
      \rho(j)^{(\cX_{[c]})^t A} \| 
      \rho_{i; S}^{(\cX_{[c]})^t A}
     )
  }.
\end{eqnarray*}
\end{enumerate}
\end{fact}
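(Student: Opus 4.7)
The plan is to invoke the intersection-case Fact~\ref{fact:cqtypical} once for each index $i\in[t]$ to produce per-index POVMs $\Pi'(i)$, and then combine them into a single $\hat\Pi$ by a \emph{quantum union} construction implemented on the auxiliary $\C^2\otimes\C^{t+1}$ register. This mirrors the proof of the classical Fact~\ref{fact:ctypical} (take $f=\bigcup_i f_i$), except that a quantum OR is inherently lossy: the loss will manifest as the additive $\alpha$ term in the completeness and the odds-ratio factor $(1-\alpha)/\alpha$ in the soundness.

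For the first step I would apply Fact~\ref{fact:cqtypical} to each of the $t$ classical-quantum states $\rho(i)^{(\cX_{[c]})^t A}$, treating $(\cX_{[c]})^t$ as the classical system of Fact~\ref{fact:cqtypical} with $ct$ coordinates and reusing the same dimension $|\cL|$ from the statement across all $t$ invocations. This produces, for each $i$, a smoothed state $\rho'(i)$ on $((\cX\otimes\cL)_{[c]})^{\otimes t} A'$ and a POVM element $\Pi'(i)$ satisfying items (2)--(4) of Fact~\ref{fact:cqtypical}. Padding by the new $\ketbra{0}^{\C^2}\otimes\ketbra{0}^{\C^{t+1}}$ ancilla then lifts $\rho'(i)$ to $(\hat\cX_{[c]})^{\otimes t}\hat A$ and gives item (2) of the present fact immediately, since the ancilla tensor factor leaves the trace-distance from Fact~\ref{fact:cqtypical} unchanged.

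For the second step I would let $V$ be a unitary on the new $\C^2\otimes\C^{t+1}$ ancilla that maps
\[
\ket{0}^{\C^2}\ket{0}^{\C^{t+1}} \longmapsto \sqrt{\alpha}\,\ket{0}\ket{0} + \sqrt{\tfrac{1-\alpha}{t}}\sum_{j=1}^{t}\ket{1}\ket{j},
\]
and define
\[
\hat\Pi := V^{\dagger}\Bigl(\sum_{j=1}^{t}\ketbra{1}^{\C^2}\otimes\ketbra{j}^{\C^{t+1}}\otimes\Pi'(j)\Bigr)V.
\]
The soundness on $(\rho')_{i;S}$ then expands as $\tfrac{1-\alpha}{t}\sum_{j}\Tr[\Pi'(j)\,(\rho')_{i;S}]$, and applying item (4) of Fact~\ref{fact:cqtypical} to each summand produces exactly a sum of the form $\sum_{j}2^{-D_H^{\epsilon}(\rho(j)\,\|\,\rho_{i;S})}$; the advertised prefactor $(1-\alpha)/\alpha$ will then arise from the amplification rescaling introduced at the end of the construction.

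The main technical obstacle is the completeness bound: the naive mixing above gives only $\tfrac{1-\alpha}{t}\sum_{j}\Tr[\Pi'(j)\,\rho'(i)]$, which, using only the per-$i$ completeness $\Tr[\Pi'(i)\,\rho'(i)]\geq 1-\text{(stuff)}$, is at best $\Omega((1-\alpha)/t)$---far short of the target $1-\alpha-\text{(intersection stuff)}$. One therefore has to replace the plain linear mixing by either a Hayashi--Nagaoka pretty-good-measurement combination (Fact~\ref{fact:HN}) on the ensemble $\{\Pi'(j)\}_{j=1}^{t}$, or an amplitude-amplification routine that uses the $(t+1)$-dimensional structure of $\C^{t+1}$ to boost the success amplitude from $\sqrt{(1-\alpha)/t}$ back up to roughly $\sqrt{1-\alpha}$, simultaneously converting the $(1-\alpha)/t$ soundness prefactor into the stated $(1-\alpha)/\alpha$ odds ratio. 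Balancing this amplification so that the completeness loss is exactly $\alpha$ while the soundness factor is exactly $(1-\alpha)/\alpha$ is the technical crux of the proof.
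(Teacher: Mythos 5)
The paper does not prove Fact~\ref{fact:gencqtypical} at all; both Facts~\ref{fact:cqtypical} and~\ref{fact:gencqtypical} are quoted verbatim from the companion paper~\cite{sen:oneshot}, so there is nothing in the present source to compare against. Judged on its own terms, your proposal identifies the correct macro-structure (apply the intersection-case lemma once per index $j\in[t]$ on the $ct$-coordinate classical system, then combine the resulting $\Pi'(1),\ldots,\Pi'(t)$ into a single POVM element via a quantum union supported on the extra $\C^2\otimes\C^{t+1}$ register), and it correctly isolates the precise place where the naive construction fails. But you then leave that place unresolved, so the proposal is a plan rather than a proof.

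Concretely, the unitary
$V:\ket0\ket0\mapsto\sqrt{\alpha}\ket0\ket0+\sqrt{(1-\alpha)/t}\sum_j\ket1\ket{j}$
followed by $\hat\Pi:=V^\dagger(\sum_j\ketbra{1}\otimes\ketbra{j}\otimes\Pi'(j))V$
gives, as you compute, $\Tr[\hat\Pi\,(\rho'(i)\otimes\ketbra0\otimes\ketbra0)]=\tfrac{1-\alpha}{t}\sum_j\Tr[\Pi'(j)\,\rho'(i)]$, whose only usable lower bound is $\tfrac{1-\alpha}{t}\Tr[\Pi'(i)\,\rho'(i)]=\Omega((1-\alpha)/t)$, far from the target $1-\alpha-(\text{intersection error})$. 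You acknowledge this and gesture at either Hayashi--Nagaoka or amplitude amplification as the fix, but neither is worked out, and both routes have serious obstructions: Hayashi--Nagaoka (Fact~\ref{fact:HN}) is tailored to \emph{unique} decoding and does not obviously yield the sharp $(1-\alpha,\,(1-\alpha)/\alpha)$ trade-off between completeness loss and soundness blow-up; amplitude amplification requires a coherent reflection about a good subspace, and since the $\Pi'(j)$ are general POVM elements rather than projectors, the amplified operator is no longer a convex combination of the $\Pi'(j)$, so there is no reason the soundness term would retain the clean form $\frac{1-\alpha}{\alpha}\sum_j\Tr[\Pi'(j)\,(\rho')_{i;S}]$ after amplification. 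Getting both bounds simultaneously, with exactly the stated constants, is the entire content of the quantum union construction and cannot be left as a remark.

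There is a second, subtler gap. You claim that soundness of $\hat\Pi$ on $(\rho')_{i;S}$ reduces to item~(4) of Fact~\ref{fact:cqtypical} applied to each $\Pi'(j)$. But item~(4) of the intersection lemma controls $\Tr[\Pi'\,(\rho')_{(S_1,S_2,S_3)}]$ only for the specific factorised states attached to a partition $(S_1,S_2,S_3)\dashv[c]$, whereas the general lemma allows an \emph{arbitrary} distribution $q_{i;S}$ over $\vecx^t$ in the definition of $\rho_{i;S}$. Your proposal never explains why the $\Pi'(j)$ produced by the intersection lemma also satisfy the required bound against these more general smoothed alternatives; this needs an argument (or a strengthened intersection lemma) and cannot be assumed.
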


\section{Broadcast channel}
\label{sec:broadcast}
We now prove a one-shot Marton inner bound with 
common message for 
sending classical information through a quantum broadcast channel. Such
a result was not known earlier for a quantum broadcast channel even
in the asymptotic iid setting. The
analogous inner bound in the one-shot classical setting was proved by
Radhakrishnan, Sen and Warsi~\cite{radhakrishnan:broadcast} 
(see also Liu {\it et al}~\cite{liu:broadcast}). 
Radhakrishnan, Sen and Warsi
also proved Marton's inner bound, but without common message, in 
the one shot quantum setting. The version with common message
subsumes the version without, as well as the {\em superposition
coding technique} for a broadcast channel
\cite{cover:superposition, bergmans:superposition, 
savov:superposition}.
This problem was also studied earlier by Hirche and 
Morgan~\cite{hirche:broadcast}
for a two user binary input classical quantum broadcast channel.
Recently, Anshu, Jain and Warsi~\cite{anshu:qmac} proved
nearly matching one-shot inner and outer bounds for the quantum
broadcast channel without common message. However, their bounds  are 
not known to reduce to the standard
Marton bounds in the asymptotic iid limit.

In the problem of sending classical information with common message
through a quantum broadcast channel (q-BC),
the sender Alice has three classical messages $m_0 \in [2^{R_0}]$,
$m_1 \in [2^{R_1}]$, $m_2 \in [2^{R_2}]$, and she wants to send
$(m_0, m_1)$ to Bob and $(m_0, m_2)$ to Charlie. The parties have 
at their disposal
a quantum channel $\chan: X \rightarrow Y_1 Y_2$ with input 
Hilbert space $\cX$ and 
output Hilbert spaces $\cY_1$, $\cY_2$. Alice encodes $(m_0, m_1, m_2)$
into a quantum state $\sigma_{m_0, m_1, m_2}^X \in \cX$ and inputs it 
to $\chan$. The channel $\chan$ applies a superoperator to 
$\sigma_{m_0, m_1, m_2}$ and outputs a
quantum  state 
$
\rho_{m_0, m_1, m_2}^{Y_1 Y_2} := 
(\chan^{X \rightarrow Y_1 Y_2}(\sigma_{m_0, m_1, m_2}^X))^{Y_1 Y_2}$ 
jointly
supported in the Hilbert space $\cY_1 \otimes \cY_2$. Bob and Charlie
apply their respective decoding superoperators independently on
$\rho_{m_0, m_1, m_2}^{Y Z}$ in order to produce their respective guesses 
$(\hat{m}_0, \hat{m}_1)$, $(\hat{\hat{m}}_0, \hat{\hat{m}}_2)$ of the 
messages $m_0$, $m_1$, $m_2$. See Figure~\ref{fig:broadcast}.
Let $0 \leq \epsilon \leq 1$. Consider the uniform probability distribution
over the message sets.
We want that 
\[
\Pr[(\hat{m_0}, \hat{m}_1, \hat{\hat{m}}_0, \hat{\hat{m}}_2) \neq 
    (m_0, m_1, m_0, m_2)] \leq \epsilon,
\]
where the probability is over the choice of the messages and actions of the
encoder, channel and decoders. If there exists such encoding and
decoding schemes for a particular channel $\chan$, we say there exists
an $(R_0, R_1, R_2, \epsilon)$-quantum broadcast channel code for 
sending classical information through $\chan$.

It is possible to extend the classical proof of the one-shot Marton's inner
bound with common message of Radhakrishnan, Sen
and Warsi~\cite{radhakrishnan:broadcast} to the quantum setting by
using Fact~\ref{fact:cqtypical} to obtain the quantum analogues of
intersection operations used to define the sets 
$\mathcal{A}_{13}$, $\mathcal{A}_{24}$ just before Equation~(42) of their
paper. In this paper however, we give a different proof following
the style of Anshu, Jain and Warsi~\cite{anshu:broadcast} which 
we believe to be more transparent and intuitive.

We now state our one-shot Marton's inner bound with common message for 
transmitting classical information over a quantum broadcast channel.
\begin{theorem}[One-shot Marton, common message]
\label{thm:marton}
Let $\chan$ be a quantum broadcast channel. Let $\cU_0, \cU_1$, $\cU_2$ be
three new sample spaces and $(U_0, U_1, U_2)$ be a jointly distributed 
random variable on the sample space $\cU_0 \times \cU_1 \times \cU_2$. 
For every
element $(u_0, u_1, u_2) \in \cU_1 \times \cU_2$, let 
$\sigma_{u_0, u_1, u_2}^X$
be a quantum state in the input Hilbert space $\cX$ of $\chan$.
Consider the classical quantum state
\begin{eqnarray*}
\lefteqn{\rho^{U_0 U_1 U_2 Y_1 Y_2}} \\
& := &
\sum_{(u_0, u_1, u_2) \in \cU_0 \times \cU_1 \times \cU_2}
p_{U_0 U_1 U_2}(u_0, u_1, u_2) \\
&    &
~~~~~~~~~~~~~~~~~~~
\ketbra{u_0, u_1, u_2}^{U_0 U_1 U_2} \otimes
\chan(\sigma_{u_0, u_1, u_2}^X)^{Y_1 Y_2}.
\end{eqnarray*}
Let $R_0$, $R_1$, $R_2$, $\epsilon$, 
be such that
\begin{eqnarray*}
R_0 + R_1 
& \leq &
I_H^{\epsilon}(U_0 U_1 : Y_1) - 2 -  \log \frac{1}{\epsilon} \\
R_0 + R_2 
& \leq &
I_H^{\epsilon}(U_0 U_2 : Y_2) - 2 - \log \frac{1}{\epsilon} \\
R_0 + R_1 + R_2 
& \leq &
I_H^{\epsilon}(U_0 U_2 : Y_2) +
I_H^{\epsilon}(U_1 : Y_1 | U_0) \\
&      &
{} - 
I_\infty^{\epsilon}(U_1 : U_2 | U_0) - 4 - 4 \log \frac{1}{\epsilon} \\
R_0 + R_1 + R_2 
& \leq &
I_H^{\epsilon}(U_0 U_1 : Y_1) +
I_H^{\epsilon}(U_2 : Y_2 | U_0) \\
&      &
{} - 
I_\infty^{\epsilon}(U_1 : U_2 | U_0) - 4 - 4 \log \frac{1}{\epsilon} \\
2 R_0 + R_1 + R_2 
& \leq &
I_H^{\epsilon}(U_0 U_1 : Y_1) +
I_H^{\epsilon}(U_0 U_2 : Y_2) \\
&      &
{} - 
I_\infty^{\epsilon}(U_1 : U_2 | U_0) - 4 - 4 \log \frac{1}{\epsilon},
\end{eqnarray*}
where the mutual
information quantitites above are computed with respect to the
cq-state $\rho^{U_0 U_1 U_2 Y_1 Y_2}$. 
Then there exists an 
$(R_0, R_1, R_2, 2^7 \epsilon^{1/6})$-quantum broadcast channel code 
for sending
classical information through $\chan$. 
\end{theorem}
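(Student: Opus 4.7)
I would follow the Anshu--Jain--Warsi outline: random codebook generation, encoding via the one-shot mutual covering lemma, and decoding via the intersection form of the quantum joint typicality lemma combined with the pretty good measurement (PGM) and the Hayashi--Nagaoka inequality.

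First I would introduce auxiliary rates $r_1, r_2 \geq 0$ with $r_1 + r_2 = I_\infty^\epsilon(U_1 : U_2 | U_0) + 2 \log \frac{1}{\epsilon}$ and build the codebook randomly: draw $2^{R_0}$ codewords $U_0(m_0)$ i.i.d.\ from $P^{U_0}$; given each $U_0(m_0)$, draw $2^{R_1 + r_1}$ codewords $U_1(m_0, m_1, k_1)$ i.i.d.\ from $P^{U_1 \mid U_0(m_0)}$ and $2^{R_2 + r_2}$ codewords $U_2(m_0, m_2, k_2)$ i.i.d.\ from $P^{U_2 \mid U_0(m_0)}$. On input $(m_0, m_1, m_2)$, Alice picks bin indices $(k_1, k_2)$ using Fact~\ref{fact:mutualcovering}, so that the induced joint distribution over $(U_0(m_0), U_1(m_0,m_1,k_1), U_2(m_0,m_2,k_2))$ is within $5\epsilon$ in $\ell_1$ of $P^{U_0 U_1 U_2}$. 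She then transmits $\sigma^X_{U_0(m_0), U_1(m_0,m_1,k_1), U_2(m_0,m_2,k_2)}$ through $\chan$.

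Next, I would construct Bob's decoder as a PGM built from the intersection POVM element of Fact~\ref{fact:cqtypical} applied to the cq-state $\rho^{U_0 U_1 Y_1}$ with $c = 2$. The quantum joint typicality lemma produces a single POVM element $\Pi'_{m_0,m_1,k_1}$ whose completeness against the true output state exceeds $1 - O(\epsilon^{1/2})$, and whose soundness against the two relevant alternative hypotheses $(S_1, S_2, S_3) \dashv [2]$, corresponding to resampling only $U_1$ and to resampling both $(U_0, U_1)$, is bounded by $2^{-I_H^\epsilon(U_1 : Y_1 \mid U_0)}$ and $2^{-I_H^\epsilon(U_0 U_1 : Y_1)}$ respectively. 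Charlie's decoder is built symmetrically from $\rho^{U_0 U_2 Y_2}$. Applying the Hayashi--Nagaoka inequality (Fact~\ref{fact:HN}) and averaging the resulting bound over the random codebook, Bob's expected error is controlled by
\[
2^{R_1 + r_1} \cdot 2^{-I_H^\epsilon(U_1 : Y_1 \mid U_0)} \;+\; 2^{R_0 + R_1 + r_1} \cdot 2^{-I_H^\epsilon(U_0 U_1 : Y_1)} \;+\; O(\epsilon^{1/2}),
\]
and symmetrically for Charlie with $(R_2, r_2, Y_2)$. The covering error of $5\epsilon$ in the channel output state is absorbed via trace-distance monotonicity, and Fact~\ref{fact:gentle} is used only if a residual post-measurement argument is needed.

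Finally, I would combine the four decoding conditions (two for Bob, two for Charlie) with the covering constraint $r_1 + r_2 \geq I_\infty^\epsilon(U_1 : U_2 \mid U_0) + 2 \log \frac{1}{\epsilon}$ and perform Fourier--Motzkin elimination over $r_1, r_2 \geq 0$; the five rate bounds stated in the theorem fall out directly, with the additive constants aggregating into the $2 + \log \frac{1}{\epsilon}$ and $4 + 4\log\frac{1}{\epsilon}$ slack terms. Standard derandomization (selecting the best codebook) yields a deterministic $(R_0, R_1, R_2, 2^7 \epsilon^{1/6})$ code. The main obstacle is building a \emph{single} decoding POVM that is simultaneously complete on the true output and sound against every alternative hypothesis in which only $U_1$, or $(U_0, U_1)$, is resampled; this is precisely what the subset structure $(S_1, S_2, S_3) \dashv [c]$ in Fact~\ref{fact:cqtypical} is designed to deliver, so once that lemma is invoked correctly the remainder of the proof is essentially bookkeeping.
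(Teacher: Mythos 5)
Your proposal follows essentially the same route as the paper's proof: random codebook with pages, row/column bands, and indicator pairs selected via the one-shot mutual covering lemma (Fact~\ref{fact:mutualcovering}); a PGM decoder for each receiver built from the intersection POVM of Fact~\ref{fact:cqtypical} applied to $\rho^{U_0 U_1 Y_1}$ (respectively $\rho^{U_0 U_2 Y_2}$) with $c = 2$; error analysis via the Hayashi--Nagaoka inequality (Fact~\ref{fact:HN}) giving exactly the two soundness terms you list; and Fourier--Motzkin elimination over $r_1, r_2$ against the four decoding constraints plus the covering constraint. Your accounting of the residual error as $O(\epsilon^{1/2})$ is a little off — the paper gets $O(\epsilon^{1/3})$ per receiver after setting $\delta = \epsilon^{1/3}$, and the final $2^7 \epsilon^{1/6}$ comes from a further square-root loss (via Fact~\ref{fact:gentle}) when converting the sum of expected errors into a bound on the probability that at least one receiver fails — but these are bookkeeping details, not conceptual gaps.
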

\begin{proof}
We follow the structure of Marton's common message inner bound proof 
as in Radhakrishnan {\it et al}~\cite{radhakrishnan:broadcast}
with the difference that we use the one-shot mutual covering lemma 
of Fact~\ref{fact:mutualcovering}
instead. 
Let $R_0$, $R_1$, $R_2$, $r_1$, $r_2$, $\epsilon$, 
be such that
\begin{eqnarray*}
R_0 + R_1 + r_1
& \leq &
I_H^{\epsilon}(U_0 U_1 : Y_1) - 2 - \log \frac{1}{\epsilon} \\
R_0 + R_2 + r_2
& \leq &
I_H^{\epsilon}(U_0 U_2 : Y_2) - 2 - \log \frac{1}{\epsilon} \\
R_1 + r_1
& \leq &
I_H^{\epsilon}(U_1 : Y_1 | U_0) - 2 - \log \frac{1}{\epsilon} \\
R_2 + r_2
& \leq &
I_H^{\epsilon}(U_2 : Y_2 | U_0) - 2 - \log \frac{1}{\epsilon} \\
r_1 + r_2
& = &
I_\infty^{\epsilon}(U_1 : U_2 | U_0) + 
2 \log \frac{1}{\epsilon},
\end{eqnarray*}
where the mutual
information quantitites above are computed with respect to the
cq-state $\rho^{U_0 U_1 U_2 Y_1 Y_2}$. 
Suppose we show that there is a
$(R_0, R_1, R_2, 2^7 \epsilon^{1/6})$-quantum broadcast channel 
code for sending classical information through $\chan$. 
Standard Fourier-Motzkin elimination
can be used to get rid of $r_1$ and $r_2$ and obtain the inner bound
in the statement of the theorem.

\medskip

\noindent
{\bf Codebook:}

\noindent
The codebook $\cC$ has $2^{R_0}$ {\em pages}. Each page consists of a 
two dimensional array of
`symbols' arranged in $2^{R_1 + r_1}$ rows and $2^{R_2 + r_2}$ columns.
We will index `entries' of $\cC$ by the $4$-tuple 
$(m_0, m_1, k_1, m_2, k_2)$ where
$m_i \in 2^{R_i}$, $k_j \in 2^{r_j}$. The codebook is generated randomly
as follows. First sample $u_0(1), \ldots, u_0(2^{R_0})$ 
independently according
to $p_{U_0}$. We will associate $u_0(m_0)$ with the $m_0$th page of $\cC$. 
Now, to generate the contents of the $m_0$th page, sample 
\[
u_1(m_0, 1), \ldots, u_1(m_0, 2^{R_1 + r_1}),
u_2(m_0, 1), \ldots, u_2(m_0, 2^{R_2 + r_2})
\]
independently according to
$p_{U_1 | u_0(m_0)}$, $p_{U_2 | u_0(m_0)}$. The codebook entry
$\cC(m_0, m_1, k_1, m_2, k_2)$ is the triple
\[
(u_0(m_0), u_1(m_0, m_1, k_1), u_2(m_0, m_2, k_2),
\]
where $(m_i, k_i)$ can be thought of as an element in $[2^{R_i + r_i}]$.
For $(m_0, m_1) \in [2^{R_0}] \times [2^{R_1}]$, define the 
`row band' $\cC(m_0, m_1)$ of samples
$
u_1(m_0, (m_1 - 1) 2^{r_1} + 1), \ldots,
u_1(m_0, m_1 2^{r_1}).
$
Similarly, one can define the `column band' $\cC(m_0, m_2)$ for each
$(m_0, m_2) \in [2^{R_0}] \times [2^{R_2}]$. 
For a triple $(m_0, m_1, m_2)$, we call the 
corresponding page, row and column bands together as the `rectangle'.
For each rectangle, we can now sample the `indicator pair'
$(k_1, k_2)(m_0, m_1, m_2) \in [2^{r_1}] \times [2^{r_2}]$ according to 
the random variable $(K_1, K_2)$ conditioned on the contents of the 
rectangle as described in the distribution $P_2$ of
Fact~\ref{fact:mutualcovering}. 
The full description of the random
codebook $\cC$ consists of the pages, symbols and indicator pairs. 
Given the codebook $\cC$, consider its {\em augmentation} $\cC'$
obtained by additionally choosing independent and uniform samples 
$l_0$, $l_1$, $l_2$  of computational basis vectors of $\cL$ to 
populate all the pages and
the rows and columns of $\cC$. We shall henceforth
work with the augmented codebook $\cC'$, which 
is revealed to Alice, Bob and Charlie.

\medskip

\noindent
{\bf Encoding:}

\noindent
To send message triple $(m_0, m_1, m_2)$, Alice picks up the entry
$\cC(m_0, m_1, k_1, m_2, k_2)$ where $(k_1, k_2)$ is the indicator pair
for the rectangle $(m_0, m_1, m_2)$. 
She then inputs the quantum state
$\sigma^X_{u_0(m_0), u_1(m_0, m_1, k_1), u_2(m_0, m_2, k_2)}$ into 
the channel $\chan$.

\medskip

\noindent
{\bf Decoding:}

\noindent
Consider the marginal cq-state $\rho^{U_0 U_1 Y_1}$. 
Express it as 
\[
\rho^{U_0 U_1 Y_1} =
\sum_{u_0, u_1} p(u_0, u_1) \ketbra{u_0, u_1}^{U_0 U_1}
\otimes \rho_{u_0, u_1}^{Y_1}.
\]
Define the cq-states
$\rho^{U_0 U_1 Y_1}_{(\{U_0\}, \{U_1\}, \{\})}$,
$\rho^{U_0 U_1 Y_1}_{(\{\}, \{U_0, U_1\}, \{\})}$ as in Claim~4 of
Fact~\ref{fact:cqtypical}.
Fix $0 < \delta < 1$. 
Fact~\ref{fact:cqtypical} tells us that there is an augmentation of
the classical systems
$U_0$, $U_1$ to $U'_0 := U_0 \otimes \cL$, $U'_1 := U_1 \otimes \cL$
an extension $Y'_1$ of the quantum system $Y_1$ i.e.
$Y_1 \otimes \C^2 \leq Y'_1$, a cq-state
$(\rho')^{U'_0 U'_1 Y'_1}$ and a cq-POVM element 
$(\Pi')^{U'_0 U'_1 Y'_1}$
such that
\begin{enumerate}

\item
$
(\rho')^{U'_0 U'_1 Y'_1}  =  
|\cL|^{-2}
\sum_{u_0, u_1, l_0, l_1} p(u_0, u_1) 
\ketbra{u_0, u_1}^{U_0 U_1} \otimes
\ketbra{l_0, l_1}^{\cL^{\otimes 2}} \otimes
(\rho')_{u_0, l_0, u_1, l_1}^{Y'_1},
$
for some quantum states $(\rho')_{u_0, l_0, u_1, l_1}^{Y'_1}$;

\item
For some POVM elements $(\Pi')_{u_0, l_0, u_1, l_1, \delta}^{Y'_1}$,
\begin{eqnarray*}
\lefteqn{(\Pi')^{U'_0 U'_1 Y'_1}} \\
& = &
\sum_{u_0, u_1, l_0, l_1} 
\ketbra{u_0, u_1}^{U_0 U_1} \otimes
\ketbra{l_0, l_1}^{\cL^{\otimes 2}} \\
&   &
~~~~~~~~~~~~~~~~
{} \otimes
(\Pi')_{u_0, l_0, u_1, l_1, \delta}^{Y'_1};
\end{eqnarray*}

\item
$
\ellone{
(\rho')^{U'_0 U'_1 Y'_1} -
\rho^{U_0 U_1 Y_1} \otimes 
\ketbra{0}^{\C^2} \otimes
\frac{\one^{\cL^{\otimes 2}}}{|\cL|^2} 
} \leq
8 \delta;
$

\item
$
\Tr [(\Pi')^{U'_0 U'_1 Y'_1} (\rho')^{U'_0 U'_1 Y'_1}] \geq
1 -
2^8 \cdot 3 \cdot \delta^{-2} \epsilon - 8 \delta;
$

\item
Define 
$(\rho')^{U'_0 U'_1 Y'_1}_{(\{U'_0\}, \{U'_1\}, \{\})}$,
$(\rho')^{U'_0 U'_1 Y'_1}_{(\{\}, \{U'_0, U'_1\}, \{\})}$ 
analogous to the states
$\rho^{U_0 U_1 Y_1}_{(\{U_0\}, \{U_1\}, \{\})}$,
$\rho^{U_0 U_1 Y_1}_{(\{\}, \{U_0, U_1\}, \{\})}$.
Then,
\[
\Tr [(\Pi')^{U'_0 U'_1 Y'_1} 
(\rho')^{U'_0 U'_1 Y'_1}_{(\{U'_0\}, \{U'_1\}, \{\})}
] \leq
2^{-I^{\epsilon}_H(U_1 : Y_1 | U_0)_\rho},
\]
\[
\Tr [(\Pi')^{U'_0 U'_1 Y'_1} 
(\rho')^{U'_0 U'_1 Y'_1}_{(\{\}, \{U'_0, U'_1\}, \{\})}
] \leq
2^{-I^{\epsilon}_H(U_0 U_1 : Y_1)_\rho}.
\]
\end{enumerate}

For $(m_0, m_1, k_1) \in [2^{R_0}] \times [2^{R_1 + r_1}]$, define 
the POVM element $\Lambda_{m_0, m_1, k_1}^{Y_1}$ 
as follows: Attach an ancilla of $\ketbra{0}^{\C^2}$
to register $Y_1$ and then apply POVM element 
$\Lambda_{(u_0, l_0)(m_0), (u_1, l_1)(m_0, m_1, k_1)}^{Y'_1}$.
Here $\Lambda_{(u_0, l_0)(m_0), (u_1, l_1)(m_0, m_1, k_1)}^{Y'_1}$
is a POVM element from the PGM
constructed, for the augmented codebook $\cC'$, from the set of 
positive operators
\[
\{
(\Pi')_{(u_0, l_0)(m'_0), (u_1, l_1)(m'_0, m'_1, k'_1), \delta}^{Y'_1}:
m'_0 \in 2^{R_0}, (m'_1, k'_1) \in [2^{R_1 + r_1}]
\},
\]
which in turn is provided by Claim~2 above.
Observe that $\Lambda_{m_0, m_1, k_1}^{Y_1}$ depends only on 
$(u_0, l_0)(m'_0)$, $(u_1, l_1)(m'_0, m'_1, k'_1)$,
$m'_0 \in 2^{R_0}$, $(m'_1, k'_1) \in [2^{R_1 + r_1}]$ of $\cC'$.
Similarly for $(m_0, m_2, k_2) \in [2^{R_0}] \times [2^{R_2 + r_2}]$,
we can define the POVM element 
$\Lambda_{m_0, m_2, k_2}^{Y_2}$.
Bob applies his POVM to the contents of $Y_1$ 
and outputs the result $(\hat{m}_0, \hat{m}_1, \hat{k}_1)$ as his guess 
for $(m_0, m_1, k_1)$.
Similarly, Charlie outputs 
$(\hat{\hat{m}}_0, \hat{\hat{m}}_2, \hat{\hat{k}}_2)$ as 
his guess for $(m_0, m_2, k_2)$. Bob and Charlie thus attempt to do 
the tougher job of decoding their
respective actual symbols inputted into the channel instead of 
just `decoding up to the band'.
\begin{figure*}
\begin{center}
\includegraphics[width=\textwidth]{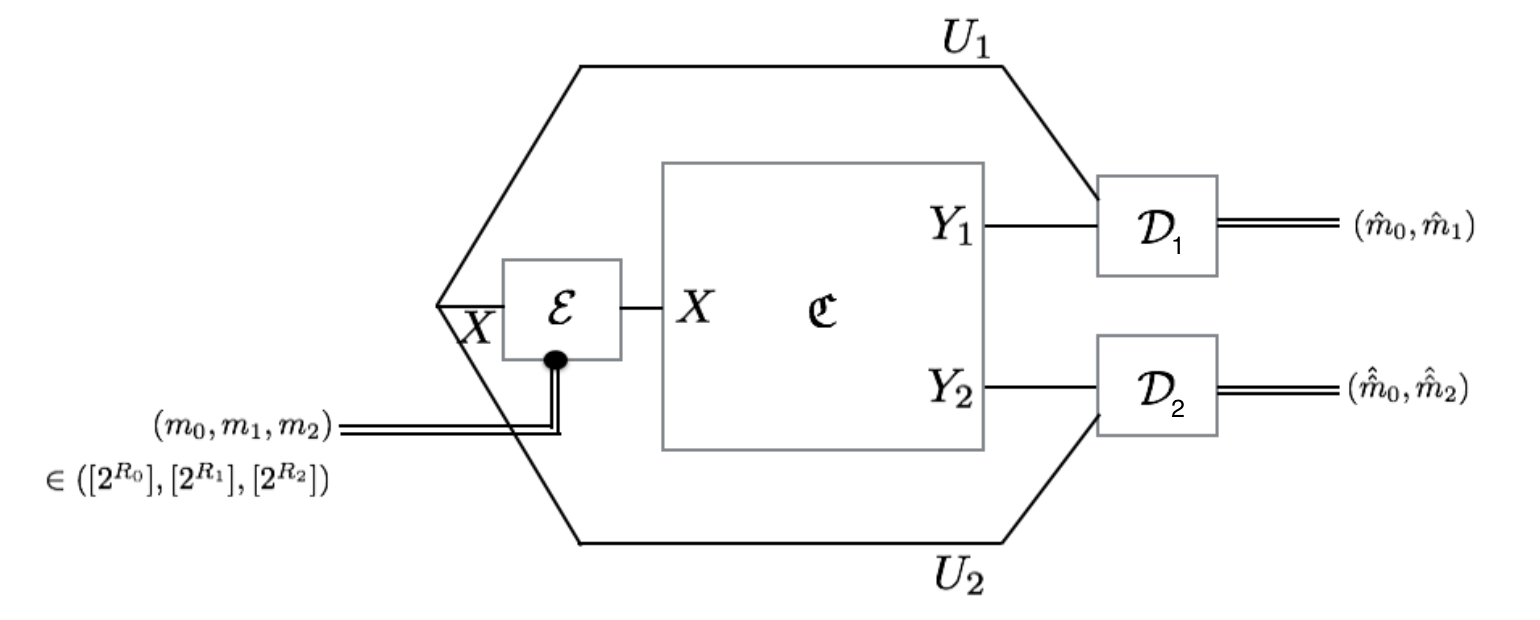}
\end{center}
\caption{Quantum broadcast channel with entanglement assistance.}
\label{fig:broadcastassisted}
\end{figure*}

\medskip

\noindent
{\bf Error probablity:}

\noindent
Suppose Alice transmits $(m_0, m_1, m_2)$. We consider the
expected decoding error of Bob over the choice of a random augmented
codebook $\cC'$.
We first observe that
by Fact~\ref{fact:mutualcovering}, at the cost of an additive decoding
error of $2 \epsilon$, we can pretend that we have
the distribution 
$(P_1)^{U_0 (U_1)^{2^{r_1}} (U_2)^{2^{r_2}} K_1 K_2}$
instead of the actual distribution 
$(P_2)^{U_0 (U_1)^{2^{r_1}} (U_2)^{2^{r_2}} K_1 K_2}$
inside rectangle $(m_1, m_2)$ of page $m_0$ of $\cC$. 
In other words, we can pretend that we first choose a 
uniformly random $(k_1, k_2) \in [2^{r_1}] \times [2^{r_2}]$,
put the cq-state
$\rho^{U_0 U_1 U_2 Y_1}$ between cell $(k_1, k_2)$ of 
rectangle $(m_1, m_2)$ of
page $m_0$ and Bob's output register $Y_1$, and independent copies
of $U_1 | U_0$, $U_2 | U_0$ in the other rows and columns 
of page $m_0$. In other pages, we
continue to have independent samples from the random variables
$U_0$, $U_1 | U_0$, $U_2 | U_0$. In all rectangles other than
rectangle $(m_1, m_2)$ of page $m_0$, we choose the indicator pairs
as described above during
the construction of the codebook $\cC$. We call the modified 
construction of the codebook as $\cC^{m_0, m_1, m_2, k_1, k_2}$ and
its augmentation as $(\cC')^{m_0, m_1, m_2, k_1, k_2}$.
This explains the inequality in Step~(a) below.
Next by Fact~\ref{fact:cqtypical},
at further cost of an additive decoding error of $4 \delta$ 
we shall pretend that we have the cq-state 
$(\rho')^{U'_0 U'_1 Y'_1}$ instead of 
$\rho^{U_0 U_1 Y_1}$
between cell $(k_1, k_2)$ of rectangle $(m_1, m_2)$ of page $m_0$ and
register $Y'_1$. Combining this with Fact~\ref{fact:HN} 
explains the inequality in Step~(b) below.
The inequality in Step~(c) below follows by an
application of Fact~\ref{fact:cqtypical}. 
We thus finally manage to bound Bob's expected decoding error.

For a state
$\sigma_{u_0(m_0), u_1(m_0, m_1, k_1), u_2(m_0, m_2, k_2)}^X$ inputted 
to the channel $\chan$, let
$\rho_{u_0(m_0), u_1(m_0, m_1, k_1), u_2(m_0, m_2, k_2)}^{Y_1}$
denote its output state at Bob's end.
We can bound Bob's expected decoding error as follows:
\begin{eqnarray*}
\lefteqn{
\E_{\cC'}[\Pr[\mbox{Bob's error}]]
} \\
&   =  &
\E_{\cC'}[
\Tr [
(\one^{Y_1} - \Lambda^{Y_1}_{m_0, m_1, k_1})
\rho_{u_0(m_0), u_1(m_0, m_1, k_1), u_2(m_0, m_2, k_2)}^{Y_1}
]
] \\
& \stackrel{a}{\leq} &
2 \epsilon + {} \\
&      &
2^{-r_1 - r_2} \sum_{k_1, k_2}
\E_{(\cC')^{m_0, m_1, m_2, k_1, k_2}}[ \\
&      &
~~~~~~~~~~~~~~~~~~~~~~~~~~~
\Tr [
(\one^{Y_1} - \Lambda^{Y_1}_{m_0, m_1, k_1}) \\
&      &
~~~~~~~~~~~~~~~~~~~~~~~~~~~~~~~~~~~~~
\rho_{u_0(m_0), u_1(m_0, m_1, k_1), u_2(m_0, m_2, k_2)}^{Y_1}
]
] \\
&   =  &
2 \epsilon + {}  \\
&      &
2^{-r_1 - r_2} \sum_{k_1, k_2}
\E_{(\cC')^{m_0, m_1, m_2, k_1, k_2}}[ \\
&      &
~~~~~~~~~~~~~~~~~~~~~~~~~~~~
\Tr [
(\one^{Y_1} - \Lambda^{Y_1}_{m_0, m_1, k_1}) \\
&      &
~~~~~~~~~~~~~~~~~~~~~~~~~~~~~~~~~~~~~~~~~
\rho_{u_0(m_0), u_1(m_0, m_1, k_1)}^{Y_1}
]
] \\
&   =  &
2 \epsilon + {} \\
&      &
2^{-r_1 - r_2} \sum_{k_1, k_2}
\E_{(\cC')^{m_0, m_1, m_2, k_1, k_2}}[ \\
&      &
~~~~~~~~~~~~~~~~~~~~~~~~~~~~
\Tr [
(\one^{Y'_1} - 
 \Lambda^{Y'_1}_{(u_0, l_0)(m_0), (u_1, l_1)(m_0, m_1, k_1)}) \\
&      &
~~~~~~~~~~~~~~~~~~~~~~~~~~~~~~~~~~~~~~~~~
(
\rho_{u_0(m_0), u_1(m_0, m_1, k_1)}^{Y_1} \otimes
\ketbra{0}^{\C^2} 
)
]
] \\
& \leq &
2 \epsilon + {} \\
&      &
2^{-r_1 - r_2} \sum_{k_1, k_2}
\E_{(\cC')^{m_0, m_1, m_2, k_1, k_2}}[ \\
&      &
~~~~~~~~~~~~~~~~~~~~~~~~~~~~
\Tr [
(\one^{Y'_1} - 
 \Lambda^{Y'_1}_{(u_0, l_0)(m_0), (u_1, l_1)(m_0, m_1, k_1)}) \\
&      &
~~~~~~~~~~~~~~~~~~~~~~~~~~~~~~~~~~~~~~~~~
(\rho')_{(u_0, l_0)(m_0), (u_1, l_1)(m_0, m_1, k_1)}^{Y'_1} 
] \\
&    &
{} + 
2^{-r_1 - r_2} \sum_{k_1, k_2}
\E_{(\cC')^{m_0, m_1, m_2, k_1, k_2}}[ \\
&      &
~~~~~~~~~~~~~~~~~~~~~~~~~~~~~~~~
\frac{1}{2}
\left\|
(\rho')_{(u_0, l_0)(m_0), (u_1, l_1)(m_0, m_1, k_1)}^{Y'_1}
\right. \\
&      &
~~~~~~~~~~~~~~~~~~~~~~~~~~~~~~~~~~~~~~~~~
\left.
{} -
\rho_{u_0(m_0), u_1(m_0, m_1, k_1)}^{Y_1} \otimes \ketbra{0}^{\C^2} 
\right\|_1 \\
&   =  &
2 \epsilon + {} \\
&      &
2^{-r_1 - r_2} \sum_{k_1, k_2}
\E_{(\cC')^{m_0, m_1, m_2, k_1, k_2}}[ \\
&      &
~~~~~~~~~~~~~~~~~~~~~~~~~~~~
\Tr [
(\one^{Y'_1} - 
 \Lambda^{Y'_1}_{(u_0, l_0)(m_0), (u_1, l_1)(m_0, m_1, k_1)}) \\
&      &
~~~~~~~~~~~~~~~~~~~~~~~~~~~~~~~~~~~~~~~~~
(\rho')_{(u_0, l_0)(m_0), (u_1, l_1)(m_0, m_1, k_1)}^{Y'_1} 
] \\
&   &
{} + \frac{1}{2} \ellone{
(\rho')^{U'_0 U'_1 Y'_1} -
\rho^{U_0 U_1 Y_1} \otimes \ketbra{0}^{\C^2} \otimes 
\frac{\one^{\cL^{\otimes 2}}}{|\cL|^2}
} \\
& \stackrel{b}{\leq} &
2 \epsilon + 4 \delta \\
&  &
{} + 2^{-r_1 - r_2} \cdot 2 \sum_{k_1, k_2}
\E_{(\cC')^{m_0, m_1, m_2, k_1, k_2}}[ \\
&      &
~~~~~~~~~~~~~~~~~~~~~~~~~~~~
\Tr [
(\one^{Y'_1} - 
 (\Pi')^{Y'_1}_{(u_0, l_0)(m_0), (u_1, l_1)(m_0, m_1, k_1)}) \\
&      &
~~~~~~~~~~~~~~~~~~~~~~~~~~~~~~~~~~~~~~~~~
(\rho')_{(u_0, l_0)(m_0), (u_1, l_1)(m_0, m_1, k_1)}^{Y'_1} 
] \\
&  &
{} +
2^{-r_1 - r_2} \cdot 4 \sum_{k_1, k_2}
\sum_{(m'_1, k'_1): (m'_1, k'_1) \neq (m_1, k_1)} \\
&   &
~~~~
\E_{(\cC')^{m_0, m_1, m_2, k_1, k_2}}[
\Tr [
(\Pi')^{Y'_1}_{(u_0, l_0)(m_0), (u'_1, l'_1)(m_0, m'_1, k'_1)} \\
&      &
~~~~~~~~~~~~~~~~~~~~~~~~~~~~~~~~~~~~~
(\rho')_{(u_0, l_0)(m_0), (u_1, l_1)(m_0, m_1, k_1)}^{Y'_1} 
]
] \\
&  &
{} +
2^{-r_1 - r_2} \cdot 4 \sum_{k_1, k_2}
\sum_{m'_0, m'_1, k'_1: m'_0 \neq m_0}  \\
&  &
~~~~
\E_{(\cC')^{m_0, m_1, m_2, k_1, k_2}}[
\Tr [
(\Pi')^{Y'_1}_{(u'_0, l'_0)(m'_0), (u'_1, l'_1)(m'_0, m'_1, k'_1)} \\
&      &
~~~~~~~~~~~~~~~~~~~~~~~~~~~~~~~~~~~~~
(\rho')_{(u_0, l_0)(m_0), (u_1, l_1)(m_0, m_1, k_1)}^{Y'_1} 
]
] \\
&   =  &
2 \epsilon + 4 \delta \\
&  &
{} + 
2 |\cL|^{-2} \sum_{u_0, u_1, l_0, l_1}
p(u_0, u_1)
\Tr [
(\one^{Y'_1} - (\Pi')^{Y'_1}_{u_0, u_1, l_0, l_1}) \\
&      &
~~~~~~~~~~~~~~~~~~~~~~~~~~~~~~~~~~~~~~~~~~~~~~~~~~~~~
(\rho')_{u_0, u_1, l_0, l_1}^{Y'_1} 
] \\
&  &
{} +
4 \cdot (2^{R_1 + r_1} - 1) 
|\cL|^{-3} \\
&  &
~~~~~~~~
\sum_{u_0, l_0, u_1, l_1, u'_1, l'_1} 
p(u_0) p(u_1 | u_0) p(u'_1 | u_0) \\
&  &
~~~~~~~~~~~~~~~~~~~~~~~~~~~~~~~
\Tr [
(\Pi')^{Y'_1}_{u_0, u'_1, l_0, l'_1}
(\rho')_{u_0, u_1, l_0, l_1}^{Y'_1} 
] \\
&  &
{} +
4 \cdot (2^{R_0} - 1) 2^{R_1 + r_1} 
|\cL|^{-4} \\
&  &
~~~~~~~~
\sum_{u_0, l_0, u'_0, l'_0, u_1, l_1, u'_1, l'_1}
p(u_0, u_1) p(u'_0, u'_1) \\
&  &
~~~~~~~~~~~~~~~~~~~~~~~~~~~~~~~~~~~~
\Tr [
(\Pi')^{Y'_1}_{u'_0, u'_1, l'_0, l'_1}
(\rho')_{u_0, u_1, l_0, l_1}^{Y'_1} 
] \\
&   =  &
2 \epsilon + 4 \delta + 
2 \Tr [
(\one^{U'_0 U'_1 Y'_1} - (\Pi')^{U'_0 U'_1 Y'_1})
(\rho')^{U'_0 U'_1 Y'_1} 
] \\
&  &
{} +
4 \cdot (2^{R_1 + r_1} - 1) 
\Tr [
(\Pi')^{U'_0 U'_1 Y'_1}
(\rho')^{U'_0 U'_1 Y'_1}_{(\{U'_0\}, \{U'_1\}, \{\})}
] \\
&  &
{} +
4 \cdot (2^{R_0} - 1) 2^{R_1 + r_1} 
\Tr [
(\Pi')^{U'_0 U'_1 Y'_1}
(\rho')^{U'_0 U'_1 Y'_1}_{(\{\}, \{U'_0, U'_1\}, \{\})}
] \\
& \stackrel{c}{\leq} &
2 \epsilon + 4 \delta + 
2^9 \cdot 3 \cdot \delta^{-2} \epsilon + 16 \delta \\
&  &
{} +
2^{R_1 + r_1 + 2 - I^\epsilon_H(U_1 : Y_1 | U_0)} +
2^{R_0 + R_1 + r_1 + 2 - I^\epsilon_H(U_0 U_1 : Y_1)}.
\end{eqnarray*}
Setting $\delta := \epsilon^{1/3}$, we get that 
$
\E_{\cC'}[\Pr[\mbox{Bob's error}]] \leq
2^{11} \epsilon^{1/3}.
$

Similarly, 
$
\E_{\cC'}[\Pr[\mbox{Charlie's error}]] \leq
2^{11} \epsilon^{1/3}.
$
Thus, there is an augmented codebook $\cC'$ such that sum of Bob's and
Charlie's average decoding errors is at most $2^{12} \epsilon^{1/3}$.
The average probability that at least one of Bob or Charlie err for
$\cC'$ is thus seen 
to be at most $2^7 \epsilon^{1/6}$ using Fact~\ref{fact:gentle}.
This finishes the proof of one-shot Marton's inner bound with 
common message.
\end{proof}

A similar proof as above combined with {\em position based coding}
technique of Anshu, Jain and Warsi~\cite{anshu:broadcast} can
be used to obtain a one-shot Marton's inner bound with common
message for sending classical information through an entanglement
assisted broadcast channel (see Figure~\ref{fig:broadcastassisted}). 
Earlier, Anshu, Jain and Warsi~\cite{anshu:broadcast} 
had shown the
achievability of a one-shot Marton's bound without common message.
\begin{theorem}[Ent. assist. one-shot Marton, com. msg.]
\label{thm:martonassisted}
Let $\chan: X \rightarrow Y_1 Y_2$ be a quantum broadcast channel. 
Let $\cU_0, \cU_1$, $\cU_2$ be
three new Hilbert spaces and $\psi^{U_0 U_1 U_2 X}$ be a quantum state
which is classical on $U_0$.
Consider the classical quantum state
\begin{eqnarray*}
\lefteqn{\rho^{U_0 U_1 U_2 Y_1 Y_2}} \\ 
& := &
\sum_{u_0} p(u_0)
\ketbra{u_0}^{U_0} \otimes {} \\
&   &
~~~~~~
((\chan^{X \rightarrow Y_1 Y_2} \otimes \I^{U_1 U_2})(
\psi^{U_1 U_2 X}_{u_0}))^{U_1 U_2 Y_1 Y_2}.
\end{eqnarray*}
Let $R_0$, $R_1$, $R_2$,  $\epsilon$ be such that
\begin{eqnarray*}
R_0 + R_1 
& \leq &
I_H^{\epsilon}(U_0 U_1 : Y_1) - 2 -  \log \frac{1}{\epsilon} \\
R_0 + R_2 
& \leq &
I_H^{\epsilon}(U_0 U_2 : Y_2) - 2 - \log \frac{1}{\epsilon} \\
R_0 + R_1 + R_2 
& \leq &
I_H^{\epsilon}(U_0 U_2 : Y_2) +
I_H^{\epsilon}(U_1 : Y_1 | U_0) \\
&      &
{} - 
I_\infty^{\epsilon, \epsilon^2}(U_1 : U_2 | U_0) 
- 4 - 4 \log \frac{1}{\epsilon} \\
R_0 + R_1 + R_2 
& \leq &
I_H^{\epsilon}(U_0 U_1 : Y_1) +
I_H^{\epsilon}(U_2 : Y_2 | U_0) \\
&      &
{} - 
I_\infty^{\epsilon, \epsilon^2}(U_1 : U_2 | U_0) 
- 4 - 4 \log \frac{1}{\epsilon} \\
2 R_0 + R_1 + R_2 
& \leq &
I_H^{\epsilon}(U_0 U_1 : Y_1) +
I_H^{\epsilon}(U_0 U_2 : Y_2) \\
&      &
{} - 
I_\infty^{\epsilon, \epsilon^2}(U_1 : U_2 | U_0) 
- 4 - 4 \log \frac{1}{\epsilon},
\end{eqnarray*}
where the mutual
information quantitites above are computed with respect to the
cq-state $\rho^{U_0 U_1 U_2 Y_1 Y_2}$. 
Then there exists an 
$(R_0, R_1, R_2, 2^7 \epsilon^{1/10})$-quantum broadcast 
channel code for sending
classical information through $\chan$ with entanglement assistance. 
\end{theorem}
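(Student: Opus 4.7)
The plan is to adapt the proof of the unassisted Marton theorem above by replacing the classical random-coding construction of the auxiliary samples $u_1(m_0, m_1, k_1)$, $u_2(m_0, m_2, k_2)$ with the \emph{position based coding} scheme of Anshu, Jain and Warsi~\cite{anshu:broadcast}. Since $\psi^{U_0 U_1 U_2 X}$ is classical on $U_0$, the common message $m_0$ is still handled by plain classical shared randomness, so the codebook retains its page structure with one page per value of $m_0$. Within each page, however, the classical array of $(u_1, u_2)$ samples is replaced by an array of shared entanglement: Alice and Bob share $2^{R_0 + R_1 + r_1}$ copies of the purification of $\psi^{U_1 X}_{u_0}$ (indexed by rows), and Alice and Charlie share $2^{R_0 + R_2 + r_2}$ copies of the purification of $\psi^{U_2 X}_{u_0}$ (indexed by columns), separately for each classical page $u_0 = u_0(m_0)$.

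I would first introduce auxiliary nonnegative rates $r_1, r_2$ satisfying the same five single-user constraints as in the unassisted proof, with the covering constraint $r_1 + r_2 = I_\infty^{\epsilon,\delta}(U_1 : U_2 | U_0) + 2 \log \epsilon^{-1}$, now using the smooth max conditional mutual information of~\cite{anshu:broadcast}. Fourier--Motzkin elimination then recovers the rate region in the statement. To encode $(m_0, m_1, m_2)$, Alice picks up page $m_0$, selects an indicator pair $(k_1, k_2)$ conditioned on the contents of rectangle $(m_0, m_1, m_2)$ exactly as in distribution $P_2$ of Fact~\ref{fact:mutualcovering}, and applies the appropriate local isometry to her shares of row $(m_1, k_1)$ and column $(m_2, k_2)$ to produce the channel input $X$. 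The quantum bipartite convex split lemma of~\cite{anshu:broadcast} guarantees that the joint state across Alice's channel input and the receivers' shares of entanglement is within trace distance $O(\delta^{1/4})$ of the `honest' state where a fresh sample from $\psi^{U_1 U_2 X}_{u_0}$ appears at a uniformly random position $(k_1, k_2)$ and fresh marginals $\psi^{U_1}_{u_0}$, $\psi^{U_2}_{u_0}$ appear at the other positions.

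Decoding is identical in shape to the unassisted proof. Bob's POVM is built from Fact~\ref{fact:cqtypical} applied to the cq-state $\rho^{U_0 U_1 Y_1}$, augmented with an $\cL$-register on each row and page, followed by the pretty good measurement over all triples $(m'_0, m'_1, k'_1)$; Charlie's POVM is the symmetric object built from $\rho^{U_0 U_2 Y_2}$. Applying Hayashi--Nagaoka (Fact~\ref{fact:HN}) splits Bob's expected decoding error over the random codebook into a completeness term controlled by the completeness claim of Fact~\ref{fact:cqtypical}, a sum of $2^{R_1 + r_1} - 1$ same-page wrong-row errors (each at most $2^{-I_H^\epsilon(U_1 : Y_1 | U_0)}$ by the soundness claim with $(S_1, S_2, S_3) = (\{U_0\}, \{U_1\}, \{\})$), and a sum of $(2^{R_0} - 1) 2^{R_1 + r_1}$ wrong-page errors (each at most $2^{-I_H^\epsilon(U_0 U_1 : Y_1)}$ with $(S_1, S_2, S_3) = (\{\}, \{U_0, U_1\}, \{\})$). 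Setting $\delta := \epsilon^{1/3}$, combining with the symmetric Charlie bound and invoking Fact~\ref{fact:gentle} to derandomise yields the claimed error of $2^7 \epsilon^{1/3} + \delta^{1/4}$.

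The main technical obstacle is the clean quantum implementation of the mutual covering step inside rectangle $(m_0, m_1, m_2)$. In the unassisted proof, Fact~\ref{fact:mutualcovering} reduces everything to the distribution $P_1$ at additive cost $3\epsilon$; here the analogous step must run the bipartite convex split lemma on the receivers' halves of the shared entanglement, and it is the source of the extra $\delta^{1/4}$ slack in the final error bound as well as the reason the covering rate is the quantum $I_\infty^{\epsilon,\delta}$ rather than the classical $I_\infty^\epsilon$. Once that replacement has been carried out, every remaining line of the unassisted error calculation transcribes verbatim, because the joint typicality POVM from Fact~\ref{fact:cqtypical} already works against arbitrary cq-states and in particular against the position-based construction.
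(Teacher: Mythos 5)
Your proposal follows the same route the paper indicates: the paper gives only a one-line sketch ("a similar proof as above combined with position based coding technique of Anshu, Jain and Warsi"), and your plan — keep the unassisted codebook's page/row/column architecture, drive the common message $m_0$ through a classical page index, replace the classical mutual covering argument inside each rectangle by the bipartite convex split lemma on shared entanglement, and reuse Fact~\ref{fact:cqtypical}, Fact~\ref{fact:HN} and Fact~\ref{fact:gentle} for the decoding analysis — is exactly that. The restriction to $U_0$ classical, which the paper emphasizes and leaves the fully quantum case open, is correctly exploited in your construction.

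Two loose ends worth flagging. First, you are overloading the symbol $\delta$: the theorem's $\delta$ is the free smoothing parameter inside $I_\infty^{\epsilon,\delta}(U_1:U_2|U_0)$ and the $\delta^{1/4}$ term of the advertised error, while the $\delta$ you set to $\epsilon^{1/3}$ is the internal parameter of Fact~\ref{fact:cqtypical} (which the unassisted proof also calls $\delta$). These are independent parameters and should carry distinct names; as written, "setting $\delta:=\epsilon^{1/3}$" appears to fix the theorem's free parameter, which would change the statement. Second, your own error bookkeeping is inconsistent with the quoted final bound. You say every remaining line of the unassisted calculation "transcribes verbatim", but the unassisted proof ends, after the gentle measurement step (which is used to control the sequential disturbance of Bob's and Charlie's measurements on the same output state, not to derandomise), with $2^7\epsilon^{1/6}$, not $2^7\epsilon^{1/3}$. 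Transcribing verbatim and adding the extra convex split slack would therefore give something like $2^7\epsilon^{1/6} + \delta^{1/4}$; you should either explain what improves the exponent on $\epsilon$ to $1/3$ in the assisted case, or flag the discrepancy with the theorem as stated. Finally, the sentence "applies the appropriate local isometry to her shares of row $(m_1,k_1)$ and column $(m_2,k_2)$ to produce the channel input $X$" is doing a lot of work that should be unpacked: since $\psi^{U_1 U_2 X}_{u_0}$ has $X$ genuinely correlated with both $U_1$ and $U_2$, Alice's local CPTP map must act jointly on her halves of the Alice--Bob and Alice--Charlie shared states and be shown (via the convex split lemma) to leave the receivers' joint state close to the `honest' one; this is precisely the step that \cite{anshu:broadcast} develops for the no-common-message case, and it is where your $\delta^{1/4}$ slack originates, so it deserves a more careful statement.
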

\begin{proof}
Let $R_0$, $R_1$, $R_2$, $r_1$, $r_2$, $\epsilon$ be such that
\begin{eqnarray*}
R_0 + R_1 + r_1
& \leq &
I_H^{\epsilon}(U_0 U_1 : Y_1) - 2 - \log \frac{1}{\epsilon} \\
R_0 + R_2 + r_2
& \leq &
I_H^{\epsilon}(U_0 U_2 : Y_2) - 2 - \log \frac{1}{\epsilon} \\
R_1 + r_1
& \leq &
I_H^{\epsilon}(U_1 : Y_1 | U_0) - 2 - \log \frac{1}{\epsilon} \\
R_2 + r_2
& \leq &
I_H^{\epsilon}(U_2 : Y_2 | U_0) - 2 - \log \frac{1}{\epsilon} \\
r_1 + r_2
& = &
I_\infty^{\epsilon,\epsilon^2}(U_1 : U_2 | U_0) + 
2 \log \frac{1}{\epsilon},
\end{eqnarray*}
where the mutual
information quantitites above are computed with respect to the
cq-state $\rho^{U_0 U_1 U_2 Y_1 Y_2}$. 
Suppose we show that there is a
$(R_0, R_1, R_2, 2^7 \epsilon^{1/10})$-quantum broadcast channel code 
for sending classical information through $\chan$. 
Standard Fourier-Motzkin elimination
can be used to get rid of $r_1$ and $r_2$ and obtain the inner bound
in the statement of the theorem.

\medskip

\noindent
{\bf Codebook:}

\noindent
The codebook $\cC$ is now classical-quantum. It has $2^{R_0}$ {\em pages} 
and is generated randomly
as follows. First sample 
\[
u_0(1), \ldots, u_0(2^{R_0})
\]
independently according
to $\psi^{U_0}$. We will associate $u_0(m_0)$ with the $m_0$th page of 
$\cC$. 
Now, to generate the contents of the $m_0$th page, take
independent copies 
\[
\psi^{U_1(m_0, 1) U'_1(m_0, 1)}|m_0, \ldots, 
\psi^{U_1(m_0, 2^{R_1+r_1}) U'_1(m_0, 2^{R_1+r_1})}|m_0, 
\]
\[
\psi^{U_2(m_0, 1) U'_2(m_0, 1)}|m_0, \ldots, 
\psi^{U_2(m_0, 2^{R_2+r_2}) U'_2(m_0, 2^{R_2+r_2})}|m_0, 
\]
of the states $\psi^{U_1 U'_1}|m_0$, $\psi^{U_2 U'_2}|m_0$,
where $\psi^{U_1 U'_1}|m_0$ is a purification
of $\psi^{U_1}|m_0$, the marginal state on $U_1$ obtained by conditioning
the register $U_0$ in $\psi^{U_0 U_1}$ to take the value $m_0$,
and $\psi^{U_2 U'_2}|m_0$ is defined similarly. The registers 
\[
U'_1(m_0, 1), \ldots, U'_1(m_0, 2^{R_1+r_1})
\]
are with Bob,
\[
U'_2(m_0, 1), \ldots, U'_2(m_0, 2^{R_2+r_2})
\]
with Charlie, and
\[
U_1(m_0, 1), \ldots, U_1(m_0, 2^{R_1+r_1}),
U_2(m_0, 1), \ldots, U_2(m_0, 2^{R_2+r_2})
\]
with Alice. The states
\[
\psi^{U_1(m_0, 1) U'_1(m_0, 1)}|m_0, \ldots, 
\psi^{U_1(m_0, 2^{R_1+r_1}) U'_1(m_0, 2^{R_1+r_1})}|m_0 
\]
form the prior entanglement between Alice and Bob, and the states
\[
\psi^{U_2(m_0, 1) U'_2(m_0, 1)}|m_0, \ldots, 
\psi^{U_2(m_0, 2^{R_2+r_2}) U'_2(m_0, 2^{R_2+r_2})}|m_0
\]
form the prior entanglement between Alice and Charlie.
For $(m_0, m_1) \in [2^{R_0}] \times [2^{R_1}]$, define the 
`row band' $\cC(m_0, m_1)$ to be the states
\[
\begin{array}{l}
\psi^{U_1(m_0, (m_1 - 1) 2^{r_1} + 1) 
      U'_1(m_0, (m_1 - 1) 2^{r_1} + 1)}|m_0, \\
~~~~~~~
{} \ldots, 
\psi^{U_1(m_0, m_1 2^{r_1}) U'_1(m_0, m_1 2^{r_1})}|m_0
\end{array}
\]
page $m_0$.
Similarly, one can define the `column band' $\cC(m_0, m_2)$ for each
$(m_0, m_2) \in [2^{R_0}] \times [2^{R_2}]$. 
For a triple $(m_0, m_1, m_2)$, we call the  
corresponding page, row and column bands together as the `rectangle'.

For each rectangle, we can now sample the `indicator pair'
$(k_1, k_2)(m_0, m_1, m_2) \in [2^{r_1}] \times [2^{r_2}]$ according to 
the random variable $(K_1, K_2)$ arising from an application of
the bipartite convex
split lemma of \cite{anshu:broadcast} used with underlying state
$\psi^{U_1 U_2 X}|m_0$.
The sampling process creates as side effect a quantum register that 
we call a
`candidate channel input register'
$X(m_0, m_1, k_1, m_2, k_2)$. The resulting state on the
registers 
\[
\begin{array}{l}
X(m_0, m_1, k_1, m_2, k_2) \\
~~~~
U_1(m_0, (m_1 - 1) 2^{r_1} + 1) 
U'_1(m_0, (m_1 - 1) 2^{r_1} + 1)  \\
~~~~~~~
U_2(m_0, (m_2 - 1) 2^{r_2} + 1) 
U'_2(m_0, (m_2 - 1) 2^{r_2} + 1) 
\end{array}
\]
is $(8\epsilon)$-close to
a purification $\psi^{U_1 U'_1 U_2 U'_2 X}|m_0$ of 
$\psi^{U_1 U_2 X}|m_0$.
For more details, see \cite{anshu:broadcast}.
The full description of the random
codebook $\cC$ consists of the pages, prior entanglement, indicator 
pairs and candidate channel input registers.
Given the codebook $\cC$, consider its {\em augmentation} $\cC'$
obtained by additionally choosing independent and uniform samples 
$l_0$, $l_1$, $l_2$  of computational basis vectors of $\cL$ to 
populate all the pages and
the rows and columns of $\cC$. We shall henceforth
work with the augmented codebook $\cC'$, which 
is revealed to Alice, Bob and Charlie.

\medskip

\noindent
{\bf Encoding:}

\noindent
To send message triple $(m_0, m_1, m_2)$, Alice picks up 
the indicator pair $(k_1, k_2)$ for the rectangle $(m_0, m_1, m_2)$. 
She then inputs the register $X(m_0, m_1, k_1, m_2, k_2)$
into the channel $\chan$.

\medskip

\noindent
{\bf Decoding:}

\noindent
Consider the marginal cq-state $\rho^{U_0 U_1 Y_1}$. 
Express it as 
\[
\rho^{U_0 U_1 Y_1} =
\sum_{u_0} p(u_0) \ketbra{u_0}^{U_0}
\otimes \rho_{u_0}^{U_1 Y_1}.
\]
Define the cq-states
\begin{eqnarray*}
\rho^{U_0 U_1 Y_1}_{[\{U_0\}, (\{U_1\}, \{Y_1\})]}
& := &
\sum_{u_0} p(u_0) \ketbra{u_0}^{U_0}
\otimes \rho_{u_0}^{U_1} \otimes \rho_{u_0}^{Y_1}, \\
\rho^{U_0 U_1 Y_1}_{[\{\}, (\{U_0, U_1\}, \{Y_1\})]}
& := &
\left(\sum_{u_0} p(u_0) \ketbra{u_0}^{U_0} \otimes \rho_{u_0}^{U_1}\right)
\otimes \rho^{Y_1}.
\end{eqnarray*}
Fix $0 < \delta < 1$. 
The full version of the intersection case of the
classical quantum joint typicality lemma, 
viz. Lemma~1 from \cite{sen:oneshot}, 
tells us that there is an augmentation of
the classical system
$U_0$ to $U'_0 := U_0 \otimes \cL$, 
augmentations $U'_1 := U_1 \otimes \C^2 \otimes \cL$, 
$Y'_1 := Y_1 \otimes \C^2 \otimes \cL$ 
of the quantum systems $U_1$, $Y_1$, a cq-state
$(\rho')^{U'_0 U'_1 Y'_1}$ and a cq-POVM element 
$(\Pi')^{U'_0 U'_1 Y'_1}$
such that
\begin{enumerate}

\item
\begin{eqnarray*}
\lefteqn{(\rho')^{U'_0 U'_1 Y'_1}} \\  
& =  &
|\cL|^{-3}
\sum_{u_0, l_0, l_1, \hl_1} p(u_0) 
\ketbra{u_0}^{U_0} \otimes
\ketbra{l_0, l_1, \hl_1}^{\cL^{\otimes 3}} \\ 
&    &
~~~~~~~~~~~~~~~~~~~~~~~~~~~~~~~~~~~~~
{} \otimes
(\rho')_{u_0, l_0, l_1, \hl_1}^{U'_1 Y'_1},
\end{eqnarray*}
for some quantum states $(\rho')_{u_0, l_0, l_1, \hl_1}^{U'_1 Y'_1}$;

\item
For some POVM elements $(\Pi')_{u_0, l_0, l_1, \hl_1, \delta}^{Y'_1}$,
\begin{eqnarray*}
\lefteqn{(\Pi')^{U'_0 U'_1 Y'_1}} \\
& = &
\sum_{u_0, l_0, l_1, \hl_1} 
\ketbra{u_0}^{U_0 U_1} \otimes
\ketbra{l_0, l_1, \hl_1}^{\cL^{\otimes 3}} \\
&   &
~~~~~~~~~~~~~~~~
{} \otimes
(\Pi')_{u_0, l_0, l_1, \hl_1, \delta}^{Y'_1};
\end{eqnarray*}

\item
$
\ellone{
(\rho')^{U'_0 U'_1 Y'_1} -
\rho^{U_0 U_1 Y_1} \otimes 
\ketbra{0}^{\C^2} \otimes
\frac{\one^{\cL^{\otimes 3}}}{|\cL|^3} 
} \leq
8 \delta;
$

\item
$
\Tr [(\Pi')^{U'_0 U'_1 Y'_1} (\rho')^{U'_0 U'_1 Y'_1}] \geq
1 -
2^8 \cdot 3 \cdot \delta^{-4} \epsilon - 8 \delta;
$

\item
Define 
$(\rho')^{U'_0 U'_1 Y'_1}_{[\{U'_0\}, (\{U'_1\}, \{Y'_1\})]}$,
$(\rho')^{U'_0 U'_1 Y'_1}_{[\{\}, (\{U'_0, U'_1\}, \{Y'_1\})]}$ 
analogous to the states
$\rho^{U_0 U_1 Y_1}_{[\{U_0\}, (\{U_1\}, \{Y_1\})]}$,
$\rho^{U_0 U_1 Y_1}_{[\{\}, (\{U_0, U_1\}, \{Y_1\})]}$.
Then,
\[
\Tr [(\Pi')^{U'_0 U'_1 Y'_1} 
(\rho')^{U'_0 U'_1 Y'_1}_{[\{U'_0\}, (\{U'_1\}, \{Y'_1\})]}
] \leq
2^{-I^{\epsilon}_H(U_1 : Y_1 | U_0)_\rho},
\]
\[
\Tr [(\Pi')^{U'_0 U'_1 Y'_1} 
(\rho')^{U'_0 U'_1 Y'_1}_{[\{\}, (\{U'_0, U'_1\}, \{Y'_1\})]}
] \leq
2^{-I^{\epsilon}_H(U_0 U_1 : Y_1)_\rho}.
\]
\end{enumerate}

For $(m_0, m_1, k_1) \in [2^{R_0}] \times [2^{R_1 + r_1}]$, define 
the POVM element $\Lambda_{m_0, m_1, k_1}^{Y_1}$ according to
the position based decoding strategy of \cite{anshu:broadcast}
using the positive operators
\[
\{
(\Pi')_{(u_0, l_0)(m'_0), (l_1, \hl_1)(m'_0, m'_1, k'_1), 
        \delta}^{Y'_1}:
m'_0 \in 2^{R_0}, (m'_1, k'_1) \in [2^{R_1 + r_1}]
\},
\]
which in turn is provided by Claim~2 above.
Observe that $\Lambda_{m_0, m_1, k_1}^{Y_1}$ depends only on
$(u_0, l_0)(m'_0)$, $(l_1, \hl_1)(m'_0, m'_1, k'_1)$,
$m'_0 \in 2^{R_0}$, $(m'_1, k'_1) \in [2^{R_1 + r_1}]$ of $\cC'$.
Similarly for $(m_0, m_2, k_2) \in [2^{R_0}] \times [2^{R_2 + r_2}]$,
we can define the POVM element 
$\Lambda_{m_0, m_2, k_2}^{Y_2}$.
Bob applies his POVM to the contents of $Y_1$ 
and outputs the result $(\hat{m}_0, \hat{m}_1, \hat{k}_1)$ as his guess 
for $(m_0, m_1, k_1)$.
Similarly, Charlie outputs 
$(\hat{\hat{m}}_0, \hat{\hat{m}}_2, \hat{\hat{k}}_2)$ as 
his guess for $(m_0, m_2, k_2)$. Bob and Charlie thus attempt to do 
the tougher job of decoding their
respective actual symbols inputted into the channel instead of 
just `decoding up to the band'.
\begin{figure*}
\begin{center}
\includegraphics[width=\textwidth]{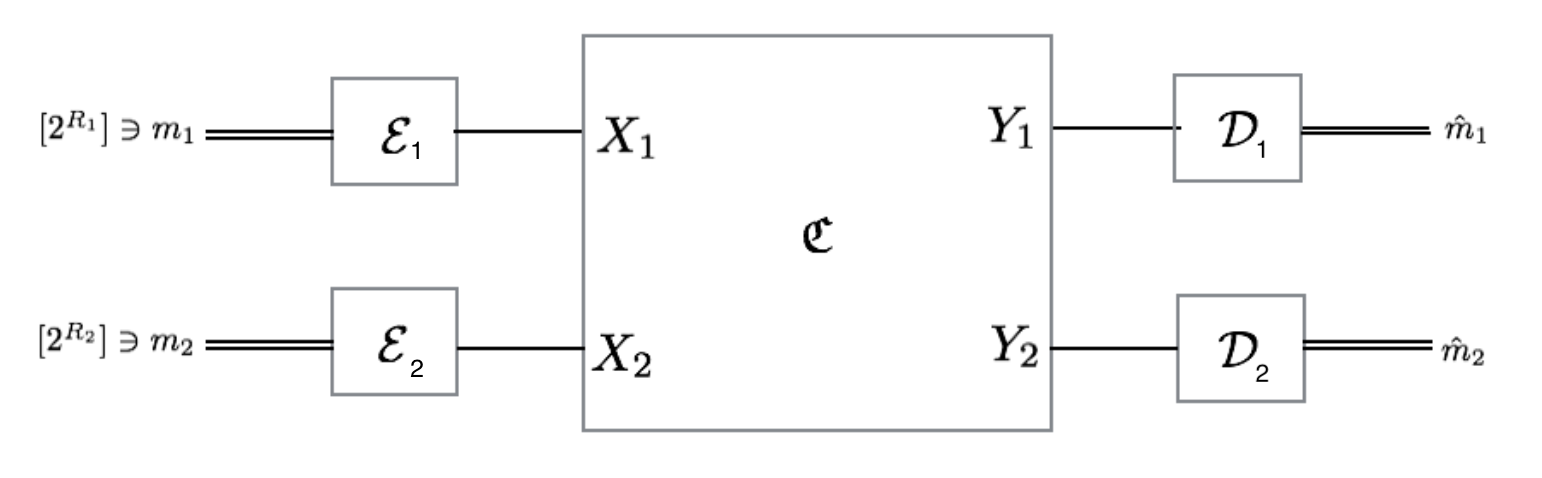}
\end{center}
\caption{Quantum interference channel without entanglement assistance.}
\label{fig:interference}
\end{figure*}

\medskip

\noindent
{\bf Error probablity:}

\noindent
The error probability calculation is very similar to that in the proof
of Theorem~\ref{thm:marton} above. This is because the error analysis
in the bipartite convex split lemma and position based decoding of
\cite{anshu:broadcast} is very similar to the error analysis
in our mutual covering lemma (Fact~\ref{fact:mutualcovering}) and
in pretty good measurement based decoding.

Setting $\delta := \epsilon^{1/5}$, we get that 
$
\E_{\cC'}[\Pr[\mbox{Bob's error}]] \leq
2^{11} \epsilon^{1/5}.
$

Similarly, 
$
\E_{\cC'}[\Pr[\mbox{Charlie's error}]] \leq
2^{11} \epsilon^{1/5}.
$
Thus, there is an augmented codebook $\cC'$ such that sum of Bob's and
Charlie's average decoding errors is at most $2^{12} \epsilon^{1/5}$.
The average probability that at least one of Bob or Charlie err for
$\cC'$ is thus seen 
to be at most $2^7 \epsilon^{1/10}$ using Fact~\ref{fact:gentle}.
This finishes the proof of the entanglement assisted one-shot 
Marton's inner bound with common message.
\end{proof}

\medskip

\noindent
{\bf Remark:}

\noindent
The above theorem is unsatisfactory as the state 
$\psi^{U_0 U_1 U_2 X}$ used therein is classical on $U_0$. 
This is because the inner bound expression in the theorem
contains one-shot mutual information terms that condition on $U_0$. 
No proper definition of these terms is known when $U_0$ is quantum.
This deficiency is further reflected in the statements of the 
classical-quantum joint typicality lemmas in
Facts~\ref{fact:cqtypical} and \ref{fact:gencqtypical} as well as in
their full versions in \cite{sen:oneshot}, all of which 
can only condition on classical registers.
On a different vein, observe that the register $U_0$ captures
the common message in the protocol. It is
unclear how to define a common message for a broadcast channel in 
the case of transmission
of quantum information, whereas the personal messages have straightforward
quantum analogues. This may also be another reason why we are unable
to make $U_0$ quantum in the statement of the theorem.
Making $\psi^{U_0 U_1 U_2 X}$ fully quantum thus remains an open problem. 

\section{Interference channel}
\label{sec:interference}
We now prove one-shot inner bounds for
sending classical information through a quantum interference channel
(q-IC). 
In this problem, there are two senders $A_1$, $A_2$ and their 
corresponding receivers $B_1$, $B_2$. Sender $A_1$ would like to
send a classical message $m_1 \in [2^{R_1}]$ to $B_1$. Similarly,
$A_2$ would like to send $m_2 \in [2^{R_2}]$ to $B_2$.
The parties have at their disposal
a quantum channel $\chan: X_1 X_2 \rightarrow Y_1 Y_2$ with input 
Hilbert spaces $\cX_1$, $\cX_2$ and 
output Hilbert spaces $\cY_1$, $\cY_2$. 
Sender $A_1$ encodes 
$m_1$ into a quantum state $\sigma_{m_0}^{X_1} \in \cX_1$ and inputs it 
to $\chan$. 
Similarly, $A_2$ encodes 
$m_2$ into a quantum state $\sigma_{m_2}^{X_2} \in \cX_2$ and inputs it 
to $\chan$. 
The channel outputs a quantum  state 
$
\rho_{m_1, m_2}^{Y_1 Y_2} := 
(\chan^{X_1 X_2 \rightarrow Y_1 Y_2}(
\sigma_{m_1}^{X_1} \otimes \sigma_{m_2}^{X_2}
)
)^{Y_1 Y_2}$ 
jointly
supported in the Hilbert space $\cY_1 \otimes \cY_2$. Receivers
$B_1$, $B_2$
apply their respective decoding superoperators independently on
$\rho_{m_1, m_2}^{Y_1 Y_2}$ in order to produce their respective guesses 
$\hat{m}_1$, $\hat{m}_2$ of the messages $m_1$, $m_2$. 
See Figure~\ref{fig:interference}.
Let $0 \leq \epsilon \leq 1$. Consider the uniform probability distribution
over the message sets.
We want that 
$
\Pr[(\hat{m}_1, \hat{m}_2) \neq 
    (m_1, m_2)] \leq \epsilon,
$
where the probability is over the choice of the messages and actions of the
encoder, channel and decoders. If there exist such encoding and
decoding schemes for a particular channel $\chan$, we say that there exists
an $(R_1, R_2, \epsilon)$-quantum interference channel code for 
sending classical information through $\chan$.

We now state and prove our one-shot Chong-Motani-Garg-El Gamal style inner 
bound for
sending classical information through an unassisted quantum interference
channel. Our inner bound reduces to the standard Chong-Motani-Garg-El Gamal
inner bound for the asymtotic iid setting, which is also known
to be equivalent to the famous Han-Kobayashi inner bound
\cite{CMGElGamal}.
However, in the one-shot setting it is unclear if the two inner bounds are
the same.
\begin{theorem}[One-shot Chong-Motani-Garg-El Gamal]
Let $\chan: X'_1 X'_2 \rightarrow Y_1 Y_2$ be a quantum interference 
channel. Let $\cQ$, $\cU_1$, $\cX_1$, $\cU_2$, $\cX_2$ be
four new sample spaces. Let the $4$-tuple $(Q, U_1, X_1, U_2, X_2)$ be a 
jointly distributed random variable with probability mass function
$p(q) p(u_1, x_1 | q) p(u_2, x_2 | q)$.
For every element $x_1 \in \cX_1$, $x_2 \in \cX_2$, let 
$\sigma_{x_1}^{X'_1}$, $\sigma_{x_2}^{X'_2}$
be quantum states in the input Hilbert spaces $\cX'_1$, $\cX'_2$ of 
$\chan$.
Consider the classical quantum state
\begin{eqnarray*}
\lefteqn{\rho^{Q U_1 X_1 U_2 X_2 Y_1 Y_2}} \\
& := &
\sum_{q, u_1, x_1, u_2, x_2}
p(q) p(u_1, x_1 | q) p(u_2, x_2 | q) \\
&    &
~~~~~~~~~~~~~~~~~~~~
\ketbra{q, u_1, x_1, u_2, x_2}^{Q U_1 X_1 U_2 X_2} \\
&   &
~~~~~~~~~~~~~~~~~~~~~~~~~~
{} \otimes
(\chan(\sigma_{x_1}^{X'_1} \otimes \sigma_{x_2}^{X'_2}))^{Y_1 Y_2}.
\end{eqnarray*}
Let $R_1$, $R_2$, $\epsilon$, 
be such that
\begin{eqnarray*}
R_1 
& \leq &
I_H^{\epsilon}(X_1 : Y_1 | U_2 Q) - 2 -  \log \frac{1}{\epsilon} \\
R_1 
& \leq &
I_H^{\epsilon}(X_1 : Y_1 | U_1 U_2 Q) +
I_H^{\epsilon}(X_2 U_1 : Y_2 | U_2 Q) \\
&      &
{} - 2 -  \log \frac{1}{\epsilon} \\
R_2 
& \leq &
I_H^{\epsilon}(X_2 : Y_2 | U_1 Q) - 2 -  \log \frac{1}{\epsilon} \\
R_2 
& \leq &
I_H^{\epsilon}(X_2 : Y_2 | U_1 U_2 Q) +
I_H^{\epsilon}(X_1 U_2 : Y_1 | U_1 Q) \\
&      &
{} - 2 -  \log \frac{1}{\epsilon} \\
R_1 + R_2 
& \leq &
I_H^{\epsilon}(X_1 U_2 : Y_1 | Q) +
I_H^{\epsilon}(X_2 : Y_2 | U_1 U_2 Q) \\
&      &
{} - 2 -  \log \frac{1}{\epsilon} \\
R_1 + R_2 
& \leq &
I_H^{\epsilon}(X_2 U_1 : Y_2 | Q) +
I_H^{\epsilon}(X_1 : Y_1 | U_1 U_2 Q) \\
&      &
{} - 2 -  \log \frac{1}{\epsilon} \\
R_1 + R_2 
& \leq &
I_H^{\epsilon}(X_1 U_2 : Y_1 | U_1 Q) +
I_H^{\epsilon}(X_2 U_1 : Y_2 | U_2 Q) \\
&      &
{} - 2 -  \log \frac{1}{\epsilon} \\
2 R_1 + R_2 
& \leq &
I_H^{\epsilon}(X_1 U_2 : Y_1 | Q) +
I_H^{\epsilon}(X_1 : Y_1 | U_1 U_2 Q) \\
&      &
{} +
I_H^{\epsilon}(X_2 U_1 : Y_2 | U_2 Q) - 2 -  \log \frac{1}{\epsilon} \\
R_1 + 2 R_2 
& \leq &
I_H^{\epsilon}(X_2 U_1 : Y_2 | Q) +
I_H^{\epsilon}(X_2 : Y_2 | U_1 U_2 Q) \\
&      &
{} +
I_H^{\epsilon}(X_1 U_2 : Y_1 | U_1 Q) - 2 -  \log \frac{1}{\epsilon}, 
\end{eqnarray*}
where the mutual
information quantitites above are computed with respect to the
cq-state $\rho^{Q U_1 X_1 U_2 X_2 Y_1 Y_2}$. 
Then there exists an 
$(R_1, R_2, 2^{2^{14}} \epsilon^{1/6})$-quantum interference channel code 
for sending
classical information through $\chan$. 
\end{theorem}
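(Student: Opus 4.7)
The plan is to mirror the classical Chong--Motani--Garg--El Gamal (CMG) argument with rate splitting and superposition coding, and to lift the classical typicality-based union-of-intersections error analysis to the quantum setting using Fact~\ref{fact:gencqtypical}, the general version of Sen's quantum joint typicality lemma. The Marton proof just given already illustrates in miniature how a single augmented POVM furnished by the typicality lemma simultaneously handles several ``wrong message'' events; the main task here is to replicate that strategy with more codeword coordinates and a richer collection of error patterns.

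\textbf{Rate splitting and random codebook.} Introduce non-negative auxiliary rates and write $R_i = S_i + T_i$ for $i \in \{1,2\}$, where the sub-message of rate $S_i$ is carried by $U_i$ (decoded by both receivers) while the sub-message of rate $T_i$ is carried by $X_i$ superposed on $U_i$ (decoded only by $B_i$). Sample $q$ once from $p(q)$; given $q$, sample $u_i(m_{i,0})$, $m_{i,0} \in [2^{S_i}]$, i.i.d.\ from $p(u_i \mid q)$, and conditionally on each $u_i(m_{i,0})$ sample $x_i(m_{i,0}, m_{i,1})$, $m_{i,1} \in [2^{T_i}]$, i.i.d.\ from $p(x_i \mid u_i(m_{i,0}), q)$. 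Augment every classical coordinate with an independent uniform sample from the garbage Hilbert space $\cL$ dictated by Fact~\ref{fact:gencqtypical}, yielding an augmented codebook $\cC'$ known to both receivers.

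\textbf{Simultaneous decoders.} Invoke Fact~\ref{fact:gencqtypical} at receiver $B_1$ with $c = 4$ classical coordinates $(Q, U_1, X_1, U_2)$ and quantum register $Y_1$, and with $t$ large enough to enumerate all distinct error types. This yields a single augmented POVM element $\hPi^{(1)}$ that is simultaneously complete against the true augmented cq-state and sound against every ``fake'' cq-state in which a subset of the classical coordinates is resampled independently. From $\hPi^{(1)}$ build the PGM decoder $\Lambda^{(1)}_{m_{1,0}, m_{1,1}, m_{2,0}}$ over all candidate triples. The union structure provided by the index $i \in [t]$ of Fact~\ref{fact:gencqtypical} is exactly what lets the same POVM simultaneously control the several kinds of wrong-message events with their distinct conditional-mutual-information exponents. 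Construct $\hPi^{(2)}$ and $\Lambda^{(2)}_{m_{2,0}, m_{2,1}, m_{1,0}}$ at $B_2$ in the symmetric way.

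\textbf{Error analysis and Fourier--Motzkin elimination.} Fix a transmitted message pair $(m_1, m_2)$. As in the Marton proof, first substitute the channel output state by the augmented state $(\rho')$ at the $\ell_1$-cost dictated by Claim~2 of Fact~\ref{fact:gencqtypical}. Then apply the Hayashi--Nagaoka inequality (Fact~\ref{fact:HN}): the completeness term $\Tr[(\one - \hPi^{(1)})\rho']$ is controlled by Claim~3 of Fact~\ref{fact:gencqtypical}, while the sum over incorrect candidates decomposes into error groups indexed by which subset of $(U_1, X_1, U_2)$ disagrees with the truth, each group controlled by Claim~4 with the appropriate choice of $(i, S)$. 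Multiplying each soundness exponent by the cardinality of its group yields bounds of the shape $2^{\mathrm{rate}} \cdot 2^{-I_H^\epsilon(\,\cdot\, \mid\, \cdot\,)}$, which are $O(\epsilon)$ precisely when the split-rate analogues of the nine theorem inequalities hold at both receivers. Averaging over $\cC'$, invoking gentle measurement (Fact~\ref{fact:gentle}) to combine $B_1$'s and $B_2$'s errors, derandomising, and finally performing Fourier--Motzkin elimination of $(S_1, T_1, S_2, T_2)$ exactly as in the classical CMG analysis, produces the rate region stated in the theorem.

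\textbf{Main obstacle.} The main technical care lies in verifying the dictionary between (i)~the combinatorial error events at each receiver --- taking into account that $X_i$ is superposed on $U_i$, so a ``wrong $X_i$'' event splits into two sub-events according to whether $U_i$ is also wrong --- and (ii)~the subsets $S$ and indices $i$ of Fact~\ref{fact:gencqtypical}, so that the soundness exponents produced are exactly the $I_H^\epsilon$ quantities listed in the theorem (e.g.\ $I_H^\epsilon(X_1 : Y_1 \mid U_1 U_2 Q)$ for the ``only private part of sender $1$ is wrong'' event, and $I_H^\epsilon(X_1 U_2 : Y_1 \mid Q)$ for the ``sender $1$ fully wrong and $U_2$ wrong'' event). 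Once this matching is set up cleanly, the remaining work is routine and parallels the classical CMG calculation.
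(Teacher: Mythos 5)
Your proposal gets the broad architecture right (rate splitting, superposition coding, random codebook with $\cL$-augmentation, PGM decoder, Hayashi--Nagaoka, Fourier--Motzkin), but it misunderstands the role of the ``union'' parameter $t$ in Fact~\ref{fact:gencqtypical}, and this misunderstanding is fatal to the argument. You write that $t$ should be ``large enough to enumerate all distinct error types'' and that ``the union structure provided by the index $i\in[t]$ is exactly what lets the same POVM simultaneously control the several kinds of wrong-message events.'' That is not what $t$ does. The different error types (which subset of $(U_1, X_1)$ is wrong) are already handled by the \emph{intersection} structure of the lemma, i.e.\ by varying the partition $(S_1,S_2,S_3)\dashv[c]$ in Claim~4. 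The index $i\in[t]$ has a different purpose: it implements \emph{non-unique decoding} of the other sender's public index $m'_2$.

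Concretely, you propose a PGM $\Lambda^{(1)}_{m_{1,0},m_{1,1},m_{2,0}}$ indexed by triples, which means $B_1$ decodes $m_{2,0}$ uniquely. This forces an extra constraint $S_2 \leq I_H^\epsilon(U_2 : Y_1\,|\,\cdots) - \cdots$ into the split-rate system. In the asymptotic iid setting that extra constraint can be eliminated using chain rules for Shannon mutual information, but --- as the paper emphasises --- chain rules fail for $I_H^\epsilon$, so the extra constraint does not go away and Fourier--Motzkin produces a potentially smaller region than the one claimed in the theorem. The paper avoids this by taking $t = 2^{R'_2}$ and letting the union over $i\in[t]$ range over the candidate $u_2$ codewords; $B_1$'s PGM is then indexed only by pairs $(\tilde m'_1,\tilde m''_1)$ and never commits to a value of $m'_2$, so no constraint on $S_2$ appears at $B_1$. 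To repair your argument you need to (a)~set $t=2^{R'_2}$ with the states $\rho'(1),\dots,\rho'(t)$ of Fact~\ref{fact:gencqtypical} corresponding to the $t$ possible positions of the true $u_2$ codeword, (b)~build the PGM over pairs $(\tilde m'_1,\tilde m''_1)$ only, and (c)~when applying Claim~4, match the three error groups (only $m''_1$ wrong; $m'_1$ and hence $m''_1$ wrong; all wrong at a displaced $u_2$ position) to the appropriate $(i,S)$ choices, which is what produces the four $I_H^\epsilon$ exponents $I_H^\epsilon(X_1 : Y_1\,|\,Q U_1 U_2)$, $I_H^\epsilon(X_1 U_2 : Y_1\,|\,Q U_1)$, $I_H^\epsilon(X_1 : Y_1\,|\,Q U_2)$, $I_H^\epsilon(X_1 U_2 : Y_1\,|\,Q)$ appearing in $B_1$'s split-rate constraints.
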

\begin{proof}
We follow the proof outline as given in El Gamal-Kim's 
book~\cite{book:elgamalkim}. 
We use `rate splitting' to divide $A_1$'s
message $m_1 \in [2^{R_1}]$ into a `public part' 
$m'_1 \in [2^{R'_1}]$ and a `personal part' $m''_1 \in [2^{R_1 - R'_1}]$.
Similarly, we divide $A_2$'s
message $m_2 \in [2^{R_2}]$ into a `public part' 
$m'_2 \in [2^{R'_2}]$ and a `personal part' $m''_2 \in [2^{R_2 - R'_2}]$.
The public messages must be recovered by both receivers whereas the
personal messages need only to be recovered by the intended receiver.
The messages are sent by a one-shot version of superposition coding 
whereby the `cloud centres' $u_1$, $u_2$ carry the public messages
$m'_1$, $m'_2$ and the `satellite symbols' $x_1$, $x_2$, which will 
be decoded after first recovering $u_1$, $u_2$, carry the personal
messages $m''_1$, $m''_2$.

We now show that a rate quadruple 
$(R'_1, R_1 - R'_1, R'_2, R_2 - R'_2)$ is achievable if it satisfies
the following inequalities.
\begin{eqnarray*}
R_1 - R'_1 
& \leq &
I_H^{\epsilon}(X_1 : Y_1 | U_1 U_2 Q) - 2 -  \log \frac{1}{\epsilon} \\
R_1 
& \leq &
I_H^{\epsilon}(X_1 : Y_1 | U_2 Q) - 2 -  \log \frac{1}{\epsilon} \\
R_1 - R'_1 + R'_2 
& \leq &
I_H^{\epsilon}(X_1 U_2 : Y_1 | U_1 Q) - 2 -  \log \frac{1}{\epsilon} \\
R_1 + R'_2 
& \leq &
I_H^{\epsilon}(X_1 U_2 : Y_1 | Q) - 2 -  \log \frac{1}{\epsilon} \\
R_2 - R'_2 
& \leq &
I_H^{\epsilon}(X_2 : Y_2 | U_1 U_2 Q) - 2 -  \log \frac{1}{\epsilon} \\
R_2 
& \leq &
I_H^{\epsilon}(X_2 : Y_2 | U_1 Q) - 2 -  \log \frac{1}{\epsilon} \\
R_2 - R'_2 + R'_1 
& \leq &
I_H^{\epsilon}(X_2 U_1 : Y_2 | U_2 Q) - 2 -  \log \frac{1}{\epsilon} \\
R_2 + R'_1 
& \leq &
I_H^{\epsilon}(X_2 U_1 : Y_2 | Q) - 2 -  \log \frac{1}{\epsilon} \\
\end{eqnarray*}
where the mutual
information quantitites above are computed with respect to the
cq-state $\rho^{Q U_1 X_1 U_2 X_2 Y_1 Y_2}$. Standard Fourier-Motzkin
elimination now gives us the rate region in the statement of the theorem.
Note that in the one-shot case it is not clear if the second upper
bounds on $R_1$ and $R_2$, in the rate region described in the 
theorem statement, can be eliminated, unlike the asymptotic iid case.
This is because their elimination in the asymptotic iid case relies 
on the chain rule for Shannon
mutual information, which is not known to hold  
for the hypothesis testing mutual information used in the one-shot
setting.

\medskip

\noindent
{\bf Codebook:}

\noindent
First generate a sample $q$ from the distribution $p(q)$. 
For each public message
$m'_1 \in [2^{R'_1}]$ independently generate a sample
$u_1(m'_1)$ from the distribution $p(u_1 | q)$.
Similarly, for each public message
$m'_2 \in [2^{R'_2}]$ independently generate a sample
$u_2(m'_2)$ from the distribution $p(u_2 | q)$.
Now for each public message $m'_1$, independently generate samples
$x_1(m'_1, m''_1)$ from the distribution $p(x_1 | u_1 q)$ for all
personal messages $m''_1 \in [2^{R_1 - R'_1}]$.
Similarly for each public message $m'_2$, independently generate samples
$x_2(m'_2, m''_2)$ from the probability distribution 
$p(x_2 | u_2 q)$ for all
personal messages $m''_2 \in [2^{R_2 - R'_2}]$. These samples together
consititute the random codebook $\cC$.
Given the codebook $\cC$, consider its {\em augmentation} $\cC'$
obtained by additionally choosing independent and uniform samples
$l_0$, $l'_1$, $l''_1$,  $l'_2$, $l''_2$  of computational basis 
vectors of $\cL$ to populate all the entries of $\cC'$. We shall henceforth
work with the augmented codebook $\cC'$, which 
is revealed to $A_1$, $A_2$, $B_1$, $B_2$.

\medskip

\noindent
{\bf Encoding:}

\noindent
To send message $m_1 = (m'_1, m''_1)$, $A_1$ picks up the symbol
$x_1(m'_1, m''_1)$  from the codebook $\cC$ and inputs the 
state $\sigma_{x_1(m'_1, m''_1)}^{X'_1}$ into the channel $\chan$.
Similarly, to send message $m_2 = (m'_2, m''_2)$, $A_2$ picks up the symbol
$x_2(m'_2, m''_2)$  from the codebook $\cC$ and inputs the 
state $\sigma_{x_2(m'_2, m''_2)}^{X'_2}$ into the channel $\chan$.

\medskip

\noindent
{\bf Decoding:}

\noindent
The receiver $B_1$ decodes the tuple $(m''_1, m'_1)$ using
simultaneous {\em non-unique} decoding. To do this, he has to apply
a `union of intersection' of POVM elements which in turn is provided
by Fact~\ref{fact:gencqtypical}. The `union' is over all choices of
$\hat{m}'_2 \in [2^{R'_2}]$. In the asymptotic iid setting it turns
out that non-unique decoding is not required in order to get the
Chong-Motani-Garg-El Gamal rate region. Sen~\cite{sen:interference} 
showed that we can further
require $B_1$ to recover $m'_2$ and still obtain the same rate 
region. However the argument in \cite{sen:interference} fails in 
the one-shot setting since it
relies on chain rule of Shannon mutual information which is not known to
hold for the hypothesis testing mutual information. Chain rules for
smooth one-shot mutual information quantities are typically inequalities
and frequently involve two or more types of quantities in 
the same expression. 
Hence using them often leads to unsatisfactory bounds for channel
coding problems.
Therefore
we use non-unique decoding in the one-shot setting as it possibly leads
to a larger inner bound.

Consider the marginal cq-state $\rho^{Q U_1 X_1 U_2 Y_1}$. 
Express it as 
\begin{eqnarray*}
\lefteqn{\rho^{Q U_1 X_1 U_2 Y_1}} \\ 
& = &
\sum_{q, u_1, x_1, u_2} p(q) p(u_1, x_1 | q) p(u_2 | q) \\
&   &
~~~~~~~~~~~~~
\ketbra{q, u_1, x_1, u_2}^{Q U_1 X_1 U_2}
\otimes \rho_{q, u_1, x_1, u_2}^{Y_1},
\end{eqnarray*}
where in fact $\rho_{q, u_1, x_1, u_2}^{Y_1} = \rho_{x_1, u_2}^{Y_1}$ i.e.
$\rho_{q, u_1, x_1, u_2}^{Y_1}$ is independent of $q$ and $u_1$.

Let $t := 2^{R'_2}$.
For $\tilde{m}_2 \in [t]$, define the cq-state
\begin{eqnarray*}
\lefteqn{\rho^{Q U_1 X_1 (U_2)^{t} Y_1}(\tilde{m}_2) } \\
& := &
\sum_{q, u_1, x_1, u_2^{t}} 
p(q) p(u_1, x_1 | q) p(u_2^{t} | q) \\
&  &
~~~~~~~~~~
\ketbra{q, u_1, x_1, u_2^{t}}^{Q U_1 X_1 (U_2)^{t}}
\otimes \rho_{q, u_1, x_1, u_2^t(\tilde{m}_2)}^{Y_1}.
\end{eqnarray*}
These states will play the role of 
$\rho'(1), \ldots, \rho'(t)$ in
Fact~\ref{fact:gencqtypical}.

Define the cq-states
\[
\rho^{Q U_1 X_1 U_2 Y_1}_{
(\{Q U_1 U_2\}, \{X_1\}, \{\})
},
\rho^{U_0 U_1 X_1 U_2 Y_1}_{
(\{Q U_2\}, \{U_1 X_1\}, \{\})
},
\rho^{U_0 U_1 X_1 U_2 Y_1}_{
(\{Q U_1\}, \{U_2 X_1\}, \{\})
},
\rho^{U_0 U_1 X_1 U_2 Y_1}_{
(\{Q\}, \{U_2 U_1 X_1\}, \{\})
},
\]
as in Claim~4 of Fact~\ref{fact:cqtypical}.
We can now define the cq-state
\begin{eqnarray*}
\lefteqn{
\rho^{Q U_1 X_1 (U_2)^{t} Y_1}_{
(\{Q U_1 U_2\}, \{X_1\}, \{\})
}(\tilde{m}_2)
} \\
& := &
\sum_{q, u_1, u_2^{t}} 
p(q) p(u_1 | q) p(u_2^{t} | q) 
\ketbra{q, u_1, u_2^{t}}^{Q U_1 (U_2)^{t}} \\
&   &
~~~~~~~~~~~~
{}  \otimes 
(\sum_{x_1} p(x_1 | q u_1) \ketbra{x_1}^{X_1}) \otimes
\rho_{q, u_1, u_2^t(\tilde{m}_2)}^{Y_1}.
\end{eqnarray*}
In other words, 
$
\rho^{Q U_1 X_1 (U_2)^{t} Y_1}_{
(\{Q U_1 U_2\}, \{X_1\}, \{\})
}(\tilde{m}_2)
$
is the cq-state obtained 
by `naturally extending' $U_2$ to $(U_2)^{t}$ and `embedding'
$
\rho^{Q U_1 X_1 U_2 Y_1}_{
(\{Q U_1 U_2\}, \{X_1\}, \{\})
}
$
at `$\tilde{m}_2$th position'. Similarly, we can define the quantum
state
$
\rho^{Q U_1 X_1 (U_2)^t Y_1}_{
(\{Q U_2\}, \{U_1 X_1\}, \{\})
}(\tilde{m}_2).
$

Fix $0 < \alpha, \delta < 1$. 
Fact~\ref{fact:gencqtypical} tells us that there is an augmentation of
the classical systems
$Q$, $U_1$, $X_1$, $U_2$ to 
$\hQ := Q \otimes \cL$, 
$\hU_1 := U_1 \otimes \cL$,
$\hX_1 := X_1 \otimes \cL$,
$\hU_2 := U_2 \otimes \cL$,
an extension $\hY_1$ of the quantum system $Y_1$ i.e.
$Y_1 \otimes \C^2 \otimes \C^2 \otimes \C^{t+1} \leq \hY_1$,  cq-states
$(\rho')^{\hQ \hU_1 \hX_1 (\hU_2)^{t} \hY_1}(\tilde{m}_2)$ and 
a cq-POVM element 
$(\hPi)^{\hQ \hU_1 \hX_1 (\hU_2)^{t} \hY_1}$
such that
\begin{enumerate}

\item
\begin{eqnarray*}
\lefteqn{(\rho')^{\hQ \hU_1 \hX_1 (\hU_2)^{t} \hY_1}(\tilde{m}_2)} \\
& =  &
|\cL|^{-(t+3)} \cdot {} \\
&    &
~~~
\sum_{q, u_1, x_1, u_2^{t}, \vecl} 
p(q) p(u_1, x_1 | q) p(u_2^{t} | q) \\
&   &
~~~~~~~~~~~~~~~~~~~~
\ketbra{q, u_1, x_1, u_2^{t}}^{Q U_1 X_1 (U_2)^{t}} \\
&   &
~~~~~~~~~~~~~~~~~~~~~~~~~
{} \otimes
\ketbra{\vecl}^{\cL^{\otimes (t+3)}} \otimes
(\rho')_{q, u_1, x_1, u_2^t(\tilde{m}_2),\vecl^{\cL^{\otimes 4}}}^{Y'_1} \\
&   &
~~~~~~~~~~~~~~~~~~~~~~~~~
{} \otimes
\ketbra{0}^{\C^2} \otimes \ketbra{0}^{\C^{t+1}},
\end{eqnarray*}
for some quantum states 
$(\rho')_{q, u_1, x_1, u_2, \vecl^{\cL^{\otimes 4}}}^{Y'_1}$;

\item
\begin{eqnarray*}
\lefteqn{(\hPi)^{\hQ \hU_1 \hX_1 (\hU_2)^{t} \hY_1}} \\
& =  &
\sum_{q, u_1, x_1, u_2^{t}, \vecl} 
\ketbra{q, u_1, x_1, u_2^{t}}^{Q U_1 X_1 (U_2)^{t}} \\
&    &
~~~~~~~~~~~~~~~~
{} \otimes
\ketbra{\vecl}^{\cL^{\otimes (t+3)}} \otimes
(\hPi)_{q, u_1, x_1, u_2^t, \vecl}^{\hY_1},
\end{eqnarray*}
for some POVM elements
$(\hPi)_{q, u_1, x_1, (u_2)^t, \vecl}^{\hY_1}$;

\item
\[
\begin{array}{l}
\left\|
(\rho')^{\hQ \hU_1 \hX_1 (\hU_2)^t \hY_1}(\tilde{m}_2) 
\right. \\
~~~~~~
{} -
\rho^{Q U_1 X_1 (U_2)^t Y_1}(\tilde{m}_2) \otimes 
\ketbra{0}^{\C^2} \otimes
\frac{\one^{\cL^{\otimes (t+3)}}}{|\cL|^{t+3}} \\
~~~~~~~~~~~~~~
\left.
{} \otimes
\ketbra{0}^{\C^2} \otimes
\ketbra{0}^{\C^{t+1}}
\right\|_1 \\
\; \leq \;
2^4 \delta;
\end{array}
\]

\item
\[
\begin{array}{l}
\Tr [(\hPi)^{\hQ \hU_1 \hX_1 (\hU_2)^t \hY_1} 
(\rho')^{\hQ \hU_1 \hX_1 (\hU_2)^t \hY_1}(\tilde{m}_2) 
] \\
\; \geq \;
1 -
2^{2^9} \cdot 3^4 \cdot \delta^{-2} \epsilon - 
2^4 \delta - \alpha;
\end{array}
\]

\item
Define 
\[
(\rho')^{\hQ \hU_1 \hX_1 (\hU_2)^{t} \hY_1}_{
(\{\hQ \hU_1 \hU_2\}, \{\hX_1\}, \{\})
}(\tilde{m}_2),
(\rho')^{\hQ \hU_1 \hX_1 (\hU_2)^t \hY_1}_{
(\{\hQ \hU_2\}, \{\hU_1 \hX_1\}, \{\})
}(\tilde{m}_2)
\]
analogously as the corresponding quantities 
\[
\rho^{Q U_1 X_1 (U_2)^{t} Y_1}_{
(\{Q U_1 U_2\}, \{X_1\}, \{\})
}(\tilde{m}_2),
\rho^{Q U_1 X_1 (U_2)^t Y_1}_{
(\{Q U_2\}, \{U_1 X_1\}, \{\})
}(\tilde{m}_2)
\]
defined above.
Then,
\begin{eqnarray*}
\lefteqn{
\Tr [(\hPi)^{\hQ \hU_1 \hX_1 (\hU_2)^t \hY_1} 
(\rho')^{\hQ \hU_1 \hX_1 (\hU_2)^t \hY_1}_{
(\{\hQ \hU_1 \hU_2\}, \{\hX_1\}, \{\})
}(\tilde{m}_2)
] 
} \\
& \leq &
\frac{1-\alpha}{\alpha}
\left(
\sum_{j \neq \tilde{m}_2}
2^{-D^\epsilon_H(
\rho^{Q U_1 X_1 (U_2)^t Y_1}(j) \|
\rho^{Q U_1 X_1 (U_2)^{t} Y_1}_{
(\{Q U_1 U_2\}, \{X_1\}, \{\})
}(\tilde{m}_2)
)
}
\right. \\
&     &
~~~~~~~~~~~~
\left.
{} +
2^{-D^\epsilon_H(
\rho^{Q U_1 X_1 (U_2)^t Y_1}(\tilde{m}_2) \|
\rho^{Q U_1 X_1 (U_2)^{t} Y_1}_{
(\{Q U_1 U_2\}, \{X_1\}, \{\})
}(\tilde{m}_2)
)
}
\right) \\
& \leq &
\frac{1-\alpha}{\alpha}
(
2^{R'_2} 
2^{-I^\epsilon_H(X_1 U_2 : Y_1 | Q U_1)}
+ 2^{-I^\epsilon_H(X_1 : Y_1 | Q U_1 U_2)}
),
\end{eqnarray*}
where the hypothesis mutual information quantities are computed with
respect to the cq-state $\rho^{Q U_1 X_1 U_2 X_2 Y_1}$.
The last inequality follows because the POVM element optimising
the hypothesis testing mutual information quantitiy
$I^\epsilon_H(X_1 U_2 : Y_1 | Q U_1)$, when applied at the `$j$th 
position', $j \neq m_2$, accepts $\rho^{Q U_1 X_1 (U_2)^t Y_1}(j)$ 
with probability at least $1 - \epsilon$ and accepts 
$
\rho^{Q U_1 X_1 (U_2)^{t} Y_1}_{
(\{Q U_1 U_2\}, \{X_1\}, \{\})
}(\tilde{m}_2)
$
with probability at most 
$2^{-I^\epsilon_H(X_1 U_2 : Y_1 | Q U_1)}$. 
Similarly, the POVM element optimising
the hypothesis testing mutual information quantitiy
$I^\epsilon_H(X_1 : Y_1 | Q U_1 U_2)$, when applied at the 
`$\tilde{m}_2$th position', accepts 
$\rho^{Q U_1 X_1 (U_2)^t Y_1}(\tilde{m}_2)$ 
with probability at least $1 - \epsilon$ and accepts 
$
\rho^{Q U_1 X_1 (U_2)^{t} Y_1}_{
(\{Q U_1 U_2\}, \{X_1\}, \{\})
}(\tilde{m}_2)
$
with probability at most 
$2^{-I^\epsilon_H(X_1 : Y_1 | Q U_1 U_2)}$. 
Arguing along the same lines, we get
\begin{eqnarray*}
\lefteqn{
\Tr [(\hPi)^{\hQ \hU_1 \hX_1 (\hU_2)^t \hY_1} 
(\rho')^{\hQ \hU_1 \hX_1 (\hU_2)^t \hY_1}_{
(\{\hQ \hU_2\}, \{\hU_1 \hX_1\}, \{\})
}(\tilde{m}_2)
] 
} \\
& \leq &
\frac{1-\alpha}{\alpha}
(
2^{R'_2} 
2^{-I^\epsilon_H(X_1 U_1 U_2 : Y_1 | Q)}
+ 2^{-I^\epsilon_H(U_1 X_1 : Y_1 | Q U_2)}
). 
\end{eqnarray*}
\end{enumerate}

For $(m''_1, m'_1) \in [2^{R_1 - R'_1}] \times [2^{R'_1}]$, define 
the POVM element $\Lambda_{m''_1, m'_1}^{Y_1}$ 
as follows: Attach an ancilla of 
$\ketbra{0}^{\C^2} \otimes \ketbra{0}^{\C^2} \otimes \ketbra{0}^{\C^{t+1}}$
to register $Y_1$ and then apply the POVM element 
$
\Lambda_{
(u_1, l'_1)(m'_1), (x_1, l''_1)(m'_1, m''_1)
}^{\hY_1}.
$
Here 
$
\Lambda_{
(u_1, l'_1)(m'_1), (x_1, l''_1)(m'_1, m''_1)
}^{\hY_1}.
$
is a POVM element from the PGM
constructed, for the augmented codebook $\cC'$, from the set of 
positive operators
\[
\begin{array}{l}
\left\{
(\hPi)_{
(q, l_0), (u_1, l'_1)(\tilde{m}'_1), 
(x_1, l''_1)(\tilde{m}'_1, \tilde{m}''_1), 
\{(u_2, l'_2)(\tilde{m}'_2): \tilde{m}'_2 \in [2^{R'_2}]\},
\delta
}^{\hY_1}: 
\right. \\
~~~~~~
\left.
(\tilde{m}'_1, \tilde{m}''_1) \in [2^{R_1 - R'_1}] \times [2^{R'_1}]
\right\},
\end{array}
\]
which in turn is provided by Fact~\ref{fact:gencqtypical}.
Observe that $\Lambda_{m'_1, m''_1}^{Y_1}$ only depends on the entries
$q$, $(u_1, l'_1)(\tilde{m}'_1)$, 
$(x_1, l''_1)(\tilde{m}'_1, \tilde{m}''_1)$,
$(u_2, l'_2)(\tilde{m}'_2)$, 
$\tilde{m}'_1 \in [2^{R'_1}]$,
$\tilde{m}''_1 \in [2^{R_1 - R'_1}]$,
$\tilde{m}'_2 \in [2^{R'_2}]$ of $\cC'$.
Similarly for $(m''_2, m'_2) \in [2^{R_2 - R'_2}] \times [2^{R'_2}]$,
we can define the POVM element 
$\Lambda_{m'_2, m''_2}^{Y_2}$.
Receiver $B_1$ applies his POVM to the contents of $Y_1$ 
and outputs the result 
$\hat{m}_1 := (\hat{m}'_1, \hat{m}''_1)$ as his guess 
for $m_1 = (m'_1, m''_1)$.
Similarly, receiver $B_2$ applies his POVM to the contents of $Y_2$ 
and outputs the result 
$\hat{m}_2 := (\hat{m}'_2, \hat{m}''_2)$ as his guess 
for $m_2 = (m'_2, m''_2)$.

\medskip

\noindent
{\bf Error probability:}

\noindent
Suppose the senders $A_1$, $A_2$ transmit $(m_1, m_2)$. 
For a state 
$
\sigma^{X'_1}_{x_1(m'_1, m''_1)} \otimes  
\sigma^{X'_2}_{x_2(m'_2, m''_2)}
$
inputted into the channel $\chan$, denoted its output state at
$B_1$ by 
$
\rho^{Y_1}_{x_1(m'_1, m''_1), x_2(m'_2, m''_2)}.
$
We bound the
expected decoding error of $B_1$ over the choice of a random 
augmented codebook $\cC'$ as follows: 
\begin{eqnarray*}
\lefteqn{
\E_{\cC'}[\Pr[\mbox{$B_1$'s error}]]
} \\
&   =  &
\E_{\cC'}[
\Tr [
(\one^{Y_1} - \Lambda^{Y_1}_{m'_1, m''_1})
\rho_{x_1(m'_1, m''_1), x_2(m'_2, m''_2)}^{Y_1}
]
] \\
&   =  &
\E_{\cC'}[
\Tr [
(\one^{Y_1} - \Lambda^{Y_1}_{m'_1, m''_1})
\rho_{q, u_1(m'_1), x_1(m'_1, m''_1), u_2(m'_2), x_2(m'_2, m''_2)}^{Y_1}
]
] \\
&   =  &
\E_{\cC'}[
\Tr [
(\one^{Y_1} - \Lambda^{Y_1}_{m'_1, m''_1})
\rho_{q, u_1(m'_1), x_1(m'_1, m''_1), u_2(m'_2)}^{Y_1}
]
] \\
&   =  &
\E_{\cC'}[
\Tr [
(
\one^{\hY_1} - 
\Lambda^{\hY_1}_{(u_1, l'_1)(m'_1), (x_1, l''_1)(m'_1, m''_1)}
) \\
&      &
~~~~~~~~~~~~~~~
(
\rho_{q, u_1(m'_1), x_1(m'_1, m''_1), u_2(m'_2)}^{Y_1} \otimes
\ketbra{0}^{\C^2} \\
&      &
~~~~~~~~~~~~~~~~~~~~~~
{} \otimes \ketbra{0}^{\C^2} \otimes \ketbra{0}^{\C^{t+1}}
)
]
] \\
& \leq &
\E_{\cC'}[
\Tr [
(
\one^{\hY_1} - 
\Lambda^{\hY_1}_{(u_1, l'_1)(m'_1), (x_1, l''_1)(m'_1, m''_1)}
) \\
&      &
~~~~~~~~~~~~~~~~~
(
(\rho')_{(q,l_0), (u_1, l'_1)(m'_1), (x_1, l''_1)(m'_1, m''_1), 
         (u_2, l'_2)(m'_2)}^{Y'_1} \\
&      &
~~~~~~~~~~~~~~~~~~~~~~
{} \otimes
\ketbra{0}^{\C^2} \otimes \ketbra{0}^{\C^{t+1}}
)
]
] \\
&      &
{} +
\E_{\cC'}
\left[
\frac{1}{2}
\left\|
(\rho')_{(q,l_0), (u_1, l'_1)(m'_1), (x_1, l''_1)(m'_1, m''_1), 
         (u_2, l'_2)(m'_2)}^{Y'_1} 
\right.
\right. \\
&      &
~~~~~~~~~~~~~~~~
{} \otimes
\ketbra{0}^{\C^2} \otimes \ketbra{0}^{\C^{t+1}} \\
&     &
~~~~~~~~~~~~~~~~~~~~~~
{} -
\rho_{q, u_1(m'_1), x_1(m'_1, m''_1), u_2(m'_2)}^{Y_1} \otimes
\ketbra{0}^{\C^2} \\
&     &
~~~~~~~~~~~~~~~~~~~~~~~~~~~~~
\left.
\left.
{} \otimes \ketbra{0}^{\C^2} \otimes \ketbra{0}^{\C^{t+1}}
\right\|_1
\right] \\
& \stackrel{a}{\leq} &
2 \E_{\cC'}[
\Tr [
(
\one^{\hY_1} - {} \\
&      &
~~~~~~~~~~~~~~~~~
(\hPi)^{\hY_1}_{
(q, l_0), (u_1, l'_1)(m'_1), (x_1, l''_1)(m'_1, m''_1),
\{(u_2, l'_2)(\tilde{m}_2): \tilde{m}_2 \in [2^{R'_2}]\},
\delta
}
) \\
&   &
~~~~~~~~~~~~~~~~~~~
(
(\rho')_{(q,l_0), (u_1, l'_1)(m'_1), (x_1, l''_1)(m'_1, m''_1), 
         (u_2, l'_2)(m'_2)}^{Y'_1} \\
&   &
~~~~~~~~~~~~~~~~~~~~~~~~
{} \otimes
\ketbra{0}^{\C^2} \otimes \ketbra{0}^{\C^{t+1}}
)
]
] \\
&      &
{} +
4 \sum_{\tilde{m}''_1 \neq m''_1}
\E_{\cC'}[ \\
&      &
~~~~~~~~~~~~~~
\Tr [
(\hPi)^{\hY_1}_{
(q, l_0), (u_1, l'_1)(m'_1), (x_1, l''_1)(m'_1, \tilde{m}''_1),
\{(u_2, l'_2)(\tilde{m}_2): \tilde{m}_2 \in [2^{R'_2}]\},
\delta
} \\
&     &
~~~~~~~~~~~~~~~~~~~~~~~~~~~
(
(\rho')_{(q,l_0), (u_1, l'_1)(m'_1), (x_1, l''_1)(m'_1, m''_1), 
         (u_2, l'_2)(m'_2)}^{Y'_1} \\
&     &
~~~~~~~~~~~~~~~~~~~~~~~~~~~~~~~~~
{} \otimes
\ketbra{0}^{\C^2} \otimes \ketbra{0}^{\C^{t+1}}
)
]
] \\
&      &
{} +
4 \sum_{\tilde{m}'_1 \neq m'_1, \tilde{m}''_2}
\E_{\cC'}[
\Tr [
(\hPi)^{\hY_1}_{
(q, l_0), (u_1, l'_1)(\tilde{m}'_1), 
(x_1, l''_1)(\tilde{m}'_1, \tilde{m}''_1),
\{(u_2, l'_2)(\tilde{m}_2): \tilde{m}_2 \in [2^{R'_2}]\},
\delta
} \\
&     &
~~~~~~~~~~~~~~~~~~~~~~~~~~~~~~~~~~~~~
(
(\rho')_{(q,l_0), (u_1, l'_1)(m'_1), (x_1, l''_1)(m'_1, m''_1), 
         (u_2, l'_2)(m'_2)}^{Y'_1} \\
&     &
~~~~~~~~~~~~~~~~~~~~~~~~~~~~~~~~~~~~~~~~~
{} \otimes
\ketbra{0}^{\C^2} \otimes \ketbra{0}^{\C^{t+1}}
)
]
] \\
&      &
{} +
|\cL|^{-4} \cdot {} \\
&      &
~~~~~
\sum_{q,u_1,x_1,u_2,l_0,l'_1,l''_1,l'_2}
p(q) p(u_1,x_1 | q) p(u_2 | q) \\
&      &
~~~~~~~~~~~~~~~~~~
\frac{1}{2}
\left\|
(\rho')_{(q,l_0), (u_1, l'_1), (x_1, l''_1), 
         (u_2, l'_2)}^{Y'_1} \otimes
\ketbra{0}^{\C^2} \otimes \ketbra{0}^{\C^{t+1}}
\right. \\
&      &
~~~~~~~~~~~~~~~~~~~~~~~~~
\left.
{} -
\rho_{q, u_1, x_1, u_2}^{Y_1} \otimes
\ketbra{0}^{\C^2} \otimes \ketbra{0}^{\C^2} \otimes \ketbra{0}^{\C^{t+1}}
\right\|_1 \\
&   =  &
2 |\cL|^{-(t+3)} \cdot {} \\
&      &
~~~~
\sum_{q,u_1,x_1,u_2^t,l_0,l'_1,l''_1,(l'_2)^t}
p(q) p(u_1,x_1 | q) p(u_2^t | q) \\
&   &
~~~~~~~~~~~~~~~~~~~~~~~~
\Tr [
(
\one^{\hY_1} - 
(\hPi)^{\hY_1}_{
(q, l_0), (u_1, l'_1), (x_1, l''_1), (u_2^t, (l'_2)^t), \delta
}
) \\
&   &
~~~~~~~~~~~~~~~~~~~~~~~~~~~~~~~~~~
(
(\rho')_{(q,l_0), (u_1, l'_1), (x_1, l''_1),
         (u_2^t, (l'_2)^t)(m'_2)}^{Y'_1} \\
&   &
~~~~~~~~~~~~~~~~~~~~~~~~~~~~~~~~~~~~~~~
{} \otimes
\ketbra{0}^{\C^2} \otimes \ketbra{0}^{\C^{t+1}}
)
] \\
&      &
{} +
4 |\cL|^{-(t+4)} \cdot {} \\
&      &
~~~
\sum_{\tilde{m}''_1 \neq m''_1}
\sum_{q,u_1,x_1,\tilde{x}_1,u_2^t,l_0,l'_1,l''_1,\tilde{l}''_1,(l'_2)^t} \\
&   &
~~~~~~~~~
p(q) p(u_1 | q) p(x_1 | u_1 q) p(\tilde{x}_1 | u_1 q) p(u_2^t | q) \\
&   &
~~~~~~~~~~~~~
\Tr [
(\hPi)^{\hY_1}_{
(q, l_0), (u_1, l'_1), (\tilde{x}_1, \tilde{l}''_1), u_2^t, (l'_2)^t,
\delta
} \\
&     &
~~~~~~~~~~~~~~~~~~~~~~
(
(\rho')_{(q,l_0), (u_1, l'_1), (x_1, l''_1),
         (u_2^t, (l'_2)^t)(m'_2)}^{Y'_1} \\
&     &
~~~~~~~~~~~~~~~~~~~~~~~~~~
{} \otimes
\ketbra{0}^{\C^2} \otimes \ketbra{0}^{\C^{t+1}}
)
] \\
&      &
{} +
4 |\cL|^{-(t+5)} \cdot {} \\
&      &
~~~
\sum_{\tilde{m}'_1 \neq m'_1, \tilde{m}''_2}
\sum_{q,u_1,x_1,\tilde{u}_1,\tilde{x}_1, u_2^t,
      l_0,l'_1,l''_1,\tilde{l}'_1,\tilde{l}''_1,(l'_2)^t} \\
&   &
~~~~~~~~~~~~
p(q) p(u_1, x_1 | q) p(\tilde{u}_1, \tilde{x}_1 | q) p(u_2^t | q) \\
&   &
~~~~~~~~~~~~~~~~~
\Tr [
(\hPi)^{\hY_1}_{
(q, l_0), (\tilde{u}_1, \tilde{l}'_1), (\tilde{x}_1, \tilde{l}''_1),
u_2^t, (l'_2)^t, \delta
} \\
&     &
~~~~~~~~~~~~~~~~~~~~~~~~~~~
(
(\rho')_{(q,l_0), (u_1, l'_1), (x_1, l''_1),
         (u_2^t, (l'_2)^t)(m'_2)}^{Y'_1} \\
&     &
~~~~~~~~~~~~~~~~~~~~~~~~~~~~~~~~
{} \otimes
\ketbra{0}^{\C^2} \otimes \ketbra{0}^{\C^{t+1}}
)
] \\
&      &
{} +
\frac{1}{2}
\left\|
(\rho')^{\hQ \hU_1 \hX_1 (\hU_2)^t \hY_1}(m'_2) - {} 
\right. \\
&      &
~~~~~~~~~~~~~~~
\rho^{Q U_1 X_1 (U_2)^t Y_1}(m'_2) \otimes
\ketbra{0}^{\C^2} \otimes 
\frac{\one^{\cL^{\otimes (t+3)}}}{|\cL|^{t+3}} \\
&      &
~~~~~~~~~~~~~~~~~
\left.
{} \otimes
\ketbra{0}^{\C^2} \otimes \ketbra{0}^{\C^{t+1}}
\right\|_1 \\
&   =  &
2 \Tr [
(
\one^{\hQ \hU_1 \hX_1 (\hU_2)^t \hY_1} 
- (\hPi)^{\hQ \hU_1 \hX_1 (\hU_2)^t \hY_1}
)
(\rho')^{\hQ \hU_1 \hX_1 (\hU_2)^t \hY_1}(m'_2)
] \\
&  &
{} +
4 \cdot (2^{R_1 - R'_1} - 1) \cdot {} \\
&  &
~~~~~~~~~~~~~
\Tr [
(\hPi)^{\hQ \hU_1 \hX_1 (\hU_2)^t \hY_1}
(\rho')_{(\{\hQ \hU_1 \hU_2\}, \{\hX_1\}, \{\} )}^{
\hQ \hU_1 \hX_1 (\hU_2)^t \hY_1
}(m'_2)
] \\
&  &
{} +
4 \cdot (2^{R'_1} - 1) 2^{R_1 - R'_1} \cdot {} \\
&  &
~~~~~~~~~~~~~
\Tr [
(\hPi)^{\hQ \hU_1 \hX_1 (\hU_2)^t \hY_1}
(\rho')_{(\{\hQ \hU_2\}, \{\hU_1 \hX_1\}, \{\} )}^{
\hQ \hU_1 \hX_1 (\hU_2)^t \hY_1
}(m'_2)
] \\ 
&      &
{} +
\frac{1}{2}
\left\|
(\rho')^{\hQ \hU_1 \hX_1 (\hU_2)^t \hY_1}(m'_2) - {} 
\right. \\
&      &
~~~~~~~~~~~~~~
\rho^{Q U_1 X_1 (U_2)^t Y_1}(m'_2) \otimes
\ketbra{0}^{\C^2} \\
&      &
~~~~~~~~~~~~~~~~~~~
\left.
{} \otimes 
\frac{\one^{\cL^{\otimes (t+3)}}}{|\cL|^{t+3}} \otimes
\ketbra{0}^{\C^2} \otimes \ketbra{0}^{\C^{t+1}}
\right\|_1 \\
& \leq &
(2^{2^{12}} \delta^{-2} \epsilon + 2^5 \delta + 2 \alpha) \\
&    &
{} +
\frac{1-\alpha}{\alpha} 
2^{R_1 - R'_1 + 2} (
2^{R'_2 - I^\epsilon_H(X_1 U_2 : Y_1 | Q U_1)} + 
2^{- I^\epsilon_H(X_1 : Y_1 | Q U_1 U_2)} 
) \\ 
&    &
{} +
\frac{1-\alpha}{\alpha} 
2^{R_1 + 2} (
2^{R'_2 - I^\epsilon_H(X_1 U_2 : Y_1 | Q)} + 
2^{- I^\epsilon_H(X_1 : Y_1 | Q U_2)} 
) \\
& \leq &
(2^{2^{12}} \delta^{-2} \epsilon + 2^5 \delta + 2 \alpha) \\
&    &
{} +
\frac{1-\alpha}{\alpha} 
(
3^{R_1 - R'_1 + R'_2 + 2 - I^\epsilon_H(X_1 U_2 : Y_1 | Q U_1)} \\
&    &
~~~~~~~~~~~~~~~~~~
{} + 
2^{R_1 - R'_1 + 2 - I^\epsilon_H(X_1 : Y_1 | Q U_1 U_2)} 
) \\ 
&    &
{} +
\frac{1-\alpha}{\alpha} 
(
2^{R_1 + 2 + R'_2 - I^\epsilon_H(X_1 U_2 : Y_1 | Q)} + 
2^{R_1 + 2- I^\epsilon_H(X_1 : Y_1 | Q U_2)} 
),
\end{eqnarray*}
where Step~(a) follows from Fact~\ref{fact:HN}.
Setting $\delta := \epsilon^{1/3}$, $\alpha := \epsilon^{2/3}$ we get that 
$
\E_{\cC'}[\Pr[\mbox{$B_1$'s error}]] \leq
2^{2^{13}} \epsilon^{1/3}.
$

\begin{figure*}
\begin{center}
\includegraphics[width=\textwidth]{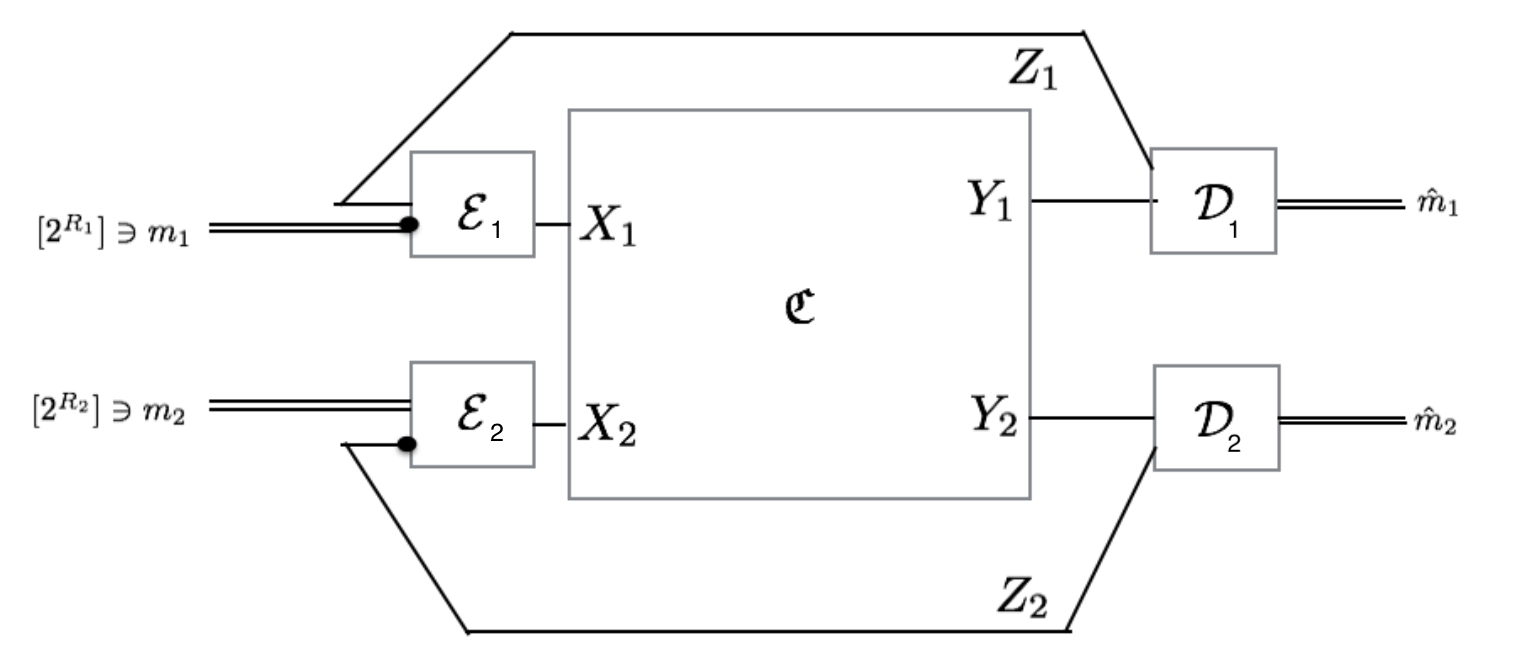}
\end{center}
\caption{Quantum interference channel with cis entanglement assistance.}
\label{fig:interferenceassisted}
\end{figure*}
Similarly, 
$
\E_{\cC'}[\Pr[\mbox{$B_2$'s error}]] \leq
2^{2^{13}} \epsilon^{1/3}.
$
Thus, there is an augmented codebook $\cC'$ such that sum of 
average decoding errors of $B_1$ and $B_2$ is at most 
$2^{2^{14}} \epsilon^{1/3}$.
The average probability that at least one of $B_1$ or $B_2$ err for
$\cC'$ is thus seen 
to be at most $2^{2^{14}} \epsilon^{1/6}$ using Fact~\ref{fact:gentle}.
This finishes the proof of one-shot Chong-Motani-Garg-El Gamal inner 
bound.
\end{proof}

It is possible to give a one-shot Han-Kobayashi style inner bound for
the interference channel also. To do so, we need to use the one-shot
simultaneous decoder for the three sender multiple access channel
constructed in \cite{sen:oneshot}. In contrast to the iid setting,
in the one-shot setting
it is not known if the Han-Kobayashi and Chong-Motani-Garg-El Gamal
rate regions are the same or not. This is because  we do not 
have good chain rules for the hypothesis testing mutual information. 

An advantage of the Han-Kobayashi inner bound technique is that it 
can be easily extended to give a non-trivial inner bound for the
interference channel wih entanglement assistance 
(see Figure~\ref{fig:interferenceassisted}). The 
Chong-Motani-Garg-El Gamal inner bound technique does not seem to be
suitable for this endeavour. We consider the case of an interference
channel with independent prior entanglement between $A_1$ and $B_1$,
and between $A_2$ and $B_2$, which seems
to be the most natural scenario. We shall call this the interference
channel with {\em cis entanglement}. For this channel, we can obtain
the following inner bound.
\begin{theorem}[One-shot Han-Kobayashi, ent. assist.]
Let $\chan: X_1 X_2 \rightarrow Y_1 Y_2$ be a quantum interference 
channel. We are allowed use of arbitrary amount of prior entanglment 
between $A_1$ and $B_1$, and $A_2$ and $B_2$. 
Let $\cQ$, $\cU_1$, $\cU_2$ be
three new sample spaces and $(Q, U_1, U_2)$ be a 
jointly distributed random variable with probability mass function
$p(q) p(u_1 | q) p(u_2 | q)$. 
Let $\cX''_1$, $\cZ_1$, $\cX''_2$, $\cZ_2$ be four new Hilbert spaces.
Let $\psi_1^{X'_1 Z_1} \otimes \psi_2^{X'_2 Z_2}$ be a tensor
product quantum state in $X'_1 Z_1 X'_2 Z_2$.
For every element $u_1 \in \cU_1$, let 
$\cE_{1, u_1}^{X'_1 \rightarrow X_1}$
be a fixed encoding superoperator; similarly for every
$u_2 \in \cU_2$.
Consider the classical quantum state
\begin{eqnarray*}
\lefteqn{\rho^{Q U_1 U_2 Y_1 Z_1 Y_2 Z_2} } \\
& := &
\sum_{q, u_1, u_2}
p(q) p(u_1 | q) p(u_2 | q)
\ketbra{q, u_1, u_2}^{Q U_1 U_2} \\
&   &
~~~~~~~~~~~~~
{} \otimes
((\chan^{X_1 X_2 \rightarrow Y_1 Y_2} \otimes \I^{Z_1 Z_2})( \\
&   &
~~~~~~~~~~~~~~~~~~~~~~~
((\cE_{1,u_1}^{X'_1 \rightarrow X_1} \otimes \I^{Z_1})(
\psi_1^{X'_1 Z_1}))^{X_1 Z_1} \otimes {} \\
&   &
~~~~~~~~~~~~~~~~~~~~~~~~~~~
((\cE_{2,u_2}^{X'_2 \rightarrow X_2} \otimes \I^{Z_2})(
\psi_1^{X'_2 Z_2}))^{X_2 Z_2}
)
)^{Y_1 Z_1 Y_2 Z_2}.
\end{eqnarray*}
Let $R'_1$, $R''_1$, $R'_2$, $R''_2$, $\epsilon$, 
be such that
\begin{eqnarray*}
R'_1 
& \leq &
I_H^{\epsilon}(U_1 : Y_1 U_2 Z_1 | Q) - 2 -  \log \frac{1}{\epsilon} \\
R''_1 
& \leq &
I_H^{\epsilon}(Z_1 : Y_1 U_1 U_2 | Q) - 2 -  \log \frac{1}{\epsilon} \\
R'_1 + R''_1 
& \leq &
I_H^{\epsilon}(U_1 Z_1  : Y_1 U_2 | Q) - 2 -  \log \frac{1}{\epsilon} \\
R'_1 + R'_2 
& \leq &
I_H^{\epsilon}(U_1 U_2 : Y_1 Z_1 | Q) - 2 -  \log \frac{1}{\epsilon} \\
R''_1 + R'_2 
& \leq &
I_H^{\epsilon}(Z_1 U_2 : Y_1 U_1 | Q) - 2 -  \log \frac{1}{\epsilon} \\
R'_1 + R''_1 + R'_2 
& \leq &
I_H^{\epsilon}(U_1 U_2 Z_1 : Y_1 | Q) - 2 -  \log \frac{1}{\epsilon} \\
& & \\
R'_2 
& \leq &
I_H^{\epsilon}(U_2 : Y_2 U_1 Z_2 | Q) - 2 -  \log \frac{1}{\epsilon} \\
R''_2 
& \leq &
I_H^{\epsilon}(Z_2 : Y_2 U_1 U_2 | Q) - 2 -  \log \frac{1}{\epsilon} \\
R'_2 + R''_2 
& \leq &
I_H^{\epsilon}(U_2 Z_2  : Y_2 U_1 | Q) - 2 -  \log \frac{1}{\epsilon} \\
R'_1 + R'_2 
& \leq &
I_H^{\epsilon}(U_1 U_2 : Y_2 Z_2 | Q) - 2 -  \log \frac{1}{\epsilon} \\
R''_2 + R'_1 
& \leq &
I_H^{\epsilon}(Z_2 U_1 : Y_2 U_2 | Q) - 2 -  \log \frac{1}{\epsilon} \\
R'_2 + R''_2 + R'_1 
& \leq &
I_H^{\epsilon}(U_1 U_2 Z_2 : Y_2 | Q) - 2 -  \log \frac{1}{\epsilon} \\
\end{eqnarray*}
where the mutual
information quantitites above are computed with respect to the
cq-state $\rho^{Q U_1 U_2 Y_1 Z_1 Y_2 Z_2}$. 
Define $R_1 := R'_1 + R''_1$, $R_2 := R'_2 + R''_2$.
Then there exists an 
$(R_1, R_2, 2^{2^{14}} \epsilon^{1/6})$-quantum interference channel code 
for sending
classical information through $\chan$ with cis entanglement assistance.
\end{theorem}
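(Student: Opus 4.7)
The approach adapts the unassisted Chong-Motani-Garg-El Gamal proof by substituting position-based coding for the transmission of the private (entanglement-assisted) parts of the messages. I first split each sender's message as $m_i = (m'_i, m''_i)$ with $R_i = R'_i + R''_i$. Public parts $m'_i$ are handled by classical codebooks exactly as before: sample $q \sim p(q)$, and $u_1(m'_1) \sim p(u_1|q)$, $u_2(m'_2) \sim p(u_2|q)$ for each $m'_1 \in [2^{R'_1}]$, $m'_2 \in [2^{R'_2}]$. For the private parts, $A_1$ and $B_1$ pre-share $N_1 := 2^{R''_1}$ independent copies of $\psi_1^{X''_1 Z_1}$ indexed by $m''_1 \in [N_1]$, and symmetrically $N_2 := 2^{R''_2}$ copies of $\psi_2^{X''_2 Z_2}$ are shared between $A_2$ and $B_2$. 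To transmit $((m'_1, m''_1), (m'_2, m''_2))$, $A_1$ prepares $\sigma_{u_1(m'_1)}^{X'_1}$ and feeds it into the channel together with the $X''_1$ half of the $m''_1$-th shared copy; $A_2$ acts symmetrically.

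After tracing out $B_2$'s registers, the state at $B_1$'s end for transmission $((m'_1, m''_1), (m'_2, m''_2))$ is $\rho_{u_1(m'_1), u_2(m'_2)}^{Y_1 Z_1(m''_1)} \otimes \bigotimes_{j \neq m''_1} \psi_1^{Z_1}(j)$; that is, only the $m''_1$-th copy of $Z_1$ is correlated with $Y_1$, the other $N_1 - 1$ copies being in the marginal state $\psi_1^{Z_1}$. Receiver $B_1$ must decode $(m'_1, m''_1)$ while treating $m'_2$ via non-unique decoding. I invoke Fact~\ref{fact:gencqtypical} on the marginal cq-state $\rho^{Q U_1 U_2 Y_1 Z_1}$ with classical coordinates $(Q, U_1, U_2)$ and $t := 2^{R'_2}$ copies of $U_2$ to accommodate the $\tilde m'_2$-guess. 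This yields, for each augmented codebook entry, a single robust POVM element $\hat\Pi^{\hat Y_1 Z_1(m''_1)}_{(u_1, \ell'_1)(m'_1), \{(u_2, \ell'_2)(\tilde m'_2)\}}$ that simultaneously serves as the intersection (over soundness claims for $U_1$, $Z_1$, $U_2$) and union (over $\tilde m'_2$). Forming the pretty-good measurement over the $2^{R_1} = 2^{R'_1 + R''_1}$ candidates $(m'_1, m''_1)$ produces the decoder $\Lambda_{m'_1, m''_1}^{Y_1 Z_1^{\otimes N_1}}$.

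The error analysis follows exactly the template of the unassisted CMGEG proof. Apply Fact~\ref{fact:HN} to bound the expected error by the $\hat\Pi$-completeness term plus sums over incorrect hypotheses. The incorrect hypotheses split into six categories by which of $m'_1$, $m''_1$ agree with the truth and whether $\tilde m'_2 = m'_2$, and each category is bounded using Claim~4 of Fact~\ref{fact:gencqtypical} with an appropriate choice of distributions $q_{i; S}(\cdot)$: the bound for category $(m'_1$ wrong, $m''_1$ any, $\tilde m'_2 = m'_2)$ involves $I^\epsilon_H(U_1 Z_1 : Y_1 | Q U_2)$; with $\tilde m'_2 \ne m'_2$ one picks up $I^\epsilon_H(U_1 U_2 Z_1 : Y_1 | Q)$; the category $(m'_1, m'_2$ correct, $m''_1$ wrong$)$ yields $I^\epsilon_H(Z_1 : Y_1 | Q U_1 U_2)$; and so on. The six resulting rate constraints are precisely the first block in the statement; the symmetric analysis at $B_2$ (using a second application of the lemma with $t = 2^{R'_1}$) yields the second block. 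Setting $\delta := \epsilon^{1/3}$, $\alpha := \epsilon^{2/3}$ and applying Fact~\ref{fact:gentle} for derandomisation produces the stated error $2^{2^{14}} \epsilon^{1/6}$.

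The main obstacle is bookkeeping: Fact~\ref{fact:gencqtypical} provides exactly one level of union (over $t$ hypotheses) combined with intersections over subsets of $[c]$, but here the proof a priori seems to require two nested unions — one for non-unique decoding of $m'_2$ and one over the $N_1$ candidate positions of $m''_1$. The key observation that makes things work is that the position-based layer can be absorbed directly into the Hayashi-Nagaoka step: only the $m''_1$-th $Z_1$ register participates in the correlated state while the others sit in the product marginal, so each hypothesis $\hat\Pi_{m'_1, m''_1}$ touches a single $Z_1$-copy and the cross-terms between different $m''_1$ factorise cleanly. Consequently Fact~\ref{fact:gencqtypical} is invoked with only the $t = 2^{R'_2}$ union, and the position index $m''_1$ contributes a multiplicative factor $N_1 = 2^{R''_1}$ in precisely the categories where $m''_1$ is allowed to vary. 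Verifying that the six marginalised states $(\rho')_{i; S}$ indeed match the six hypothesis-testing mutual informations appearing in the theorem is the most delicate computation, and it mirrors step-for-step the analogous check in the unassisted proof above.
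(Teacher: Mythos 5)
Your high-level scheme matches the paper's: rate-split each message into a public part carried by a classical codeword and a private part carried by position-based coding on pre-shared entanglement, treat each receiver's problem as a three-sender multiple access channel, and perform simultaneous non-unique decoding with a union over the \emph{other} sender's public message index. The union over $\tilde m'_2$ via Fact~\ref{fact:gencqtypical} with $t = 2^{R'_2}$, the Hayashi--Nagaoka counting of $(m'_1, m''_1)$ hypotheses, and the parameter choices $\delta = \epsilon^{1/3}$, $\alpha = \epsilon^{2/3}$ are all consistent with what the paper does.

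However, there is a genuine gap in the soundness step. Fact~\ref{fact:gencqtypical} provides soundness only against alternative states of the form $\rho_{i;S}$ where $S \subseteq [c]$ is a subset of \emph{classical} coordinates: the quantum register $A$ is placed in the conditional state $\rho_{\vecx_S}^A$. The lemma has no mechanism to express an alternative that \emph{factorizes} a tensor factor of $A$. But the position-based coding error events with $\tilde m''_1 \ne m''_1$ require precisely such a factorized alternative, $\rho_{u_1,u_2}^{Y_1} \otimes \psi_1^{Z_1}$, because the untouched $Z_1$ copies sit in the fixed marginal $\psi_1^{Z_1}$, uncorrelated with $Y_1$. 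Four of the six constraints for $B_1$ in the theorem statement have $Z_1$ on the ``tested'' side of the hypothesis-testing mutual information, so they need exactly this factorized soundness. Your claim of ``intersection (over soundness claims for $U_1$, $Z_1$, $U_2$)'' therefore asserts a property the lemma does not give: its soundness covers marginalising $Q$, $U_1$, $U_2$, but not factorising $Z_1$ out of $A = Y_1 Z_1$. The observation that the position-based layer ``factorises cleanly'' correctly describes the structure of the true state under the error event, but it does not by itself show that the typicality POVM accepts the factorized alternative with the claimed small probability. The paper itself flags exactly this point by calling its decoder a ``hybrid of the two simultaneous decoders described in \cite{sen:oneshot}'' --- i.e.\ it invokes a more general joint typicality lemma from the companion paper that handles entanglement-assisted (position-based) inputs alongside classical ones. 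You would need to state and invoke that generalisation; Fact~\ref{fact:gencqtypical} alone is insufficient.
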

\begin{proof}
We employ rate splitting as in the traditional Han-Kobayashi inner bound
proof. We split the message $m_1$ into a common message $m'_1$ and a 
personal message $m''_1$; similarly for $m_2$. The message triple
$(m''_1, m'_1, m'_2)$ is sent to $B_1$ by treating the interference
channel as a three sender multiple access channel. Message $m''_1$
is transmitted using the position based coding technique which
requires the assistance of many independent copies of the 
state $\psi_1^{X'_1 Z}$ with the $X'_1$ parts under the possession
of $A_1$ and the $Z_1$ parts under the possession of $B_1$. Message
$m'_1$ is transmitted without entanglement assistance by applying
the encoding superoperator $\cE_{1, u_1(m'_1)}$ to 
register $X'_1(u_1(m'_1))$, and feeding its output
to the input register $X_1$  of the channel $\chan$.
Similar statements hold for messages $m''_2$ and $m'_2$.
An analogous consideration holds for receiver $B_2$.

We obtain the above region by employing simultaneous decoders for
the two three-sender multiple access channels with receivers $B_1$ and 
$B_2$ induced by $\chan$. The simultaneous decoders have to handle
both entanglement assisted as well as unassisted messages, so in a
sense, they are the hybrid of the two simultaneous decoders described
in \cite{sen:oneshot}. Another important difference from the standard 
decoders
for the multiple access channel is that we do not want the additional
constraints
\begin{eqnarray*}
R'_2 
& \leq &
I_H^{\epsilon}(U_2 : Y_1 U_1 Z_1 | Q) - 2 -  \log \frac{1}{\epsilon}, \\
R'_1 
& \leq &
I_H^{\epsilon}(U_1 : Y_2 U_2 Z_2 | Q) - 2 -  \log \frac{1}{\epsilon}
\end{eqnarray*}
to appear in the rate region. For this, we need to use `union of 
intersection of POVM elements', which can be done by appealing to
Fact~\ref{fact:gencqtypical}. The `union' expresses the observation
that it is unnecessary for $B_1$ to decode $m'_2$ if he has already
successfully decoded $(m'_1, m''_1)$, or equivalently, decoding
$m'_2$ wrongly is not a problem if $B_1$ has already successfully
decoded $(m'_1, m''_1)$.  A similar comment holds for
$B_2$. In the asymptotic iid setting, presence
of these constraints does not affect the rate region. In the one-shot
setting it is not clear if this is true, simply because of the lack
of chain rules for the hypothesis testing mutual information. In
the interest of obtaining as large an inner bound as possible, we
use the `union' technique.
\end{proof}

\medskip

\noindent
{\bf Remark:}

\noindent
It is possible to get another Han-Kobayashi style inner bound if we
allow independent prior entanglement between all the four possible 
sender-receiver pairs i.e. if we allow both cis and trans entanglement,
where trans entanglement refers to prior entanglement between 
$A_1$ and $B_2$, and $A_2$ and $B_1$.
In this scenario, we can directly employ, as a subroutine, the 
entanglement assisted
one-shot inner bound for the three sender quantum multiple access channel
described in \cite{sen:oneshot}. All the message parts will now
be transmitted using entanglement assistance. However, we do not discuss
this further in this paper because we feel that trans entanglement is
an unnatural resource.

\section{Conclusions}
\label{sec:conclusions}
In this paper, we have fruitfully used the quantum joint typicality 
lemmas from \cite{sen:oneshot} to prove some novel inner bounds for 
sending classical information through multiterminal quantum channels.
All our inner bounds require us to construct simultaneous decoders,
and hold in the one-shot setting. 
For some of these problems, one-shot inner bounds were hitherto unknown
even in the classical setting. All our one-shot inner bounds are strong 
enough to reduce to the standard inner bounds in the asymptotic iid
limit, and provide non-trivial second order rates.

The Han-Kobayashi inner bound for a quantum interference channel with 
entanglement
assistance given in this paper does not use the prior entanglement
to send the common parts of the messages.
It seems that there is scope for improvement in this regard, which is 
left for future work.

It will be interesting to find other applications of simultaneous
decoders in quantum network information theory. Already, Ding,
Gharibyan, Hayden and Walter~\cite{ding:relay} have 
used the joint typicality
lemmas to construct a simultaneous decoder for a particular quantum
relay channel.

The quantum joint typicality lemmas give us robust tools to handle
union and intersection for `packing type' problems.
However, they fail for `covering type' problems. Covering type problems
often arise in source coding. Constructing simultaneous decoders for
them remains a major open problem.

\section*{Acknowledgements}
I thank Patrick Hayden, David Ding and
Hrant Gharibyan for useful discussions, and Mark Wilde for pointers
to important references. I am grateful to the anonymous referees
of an earlier version of the paper,
whose comments helped greatly in improving the presentation.

\balance

\bibliography{simultaneous}

\end{document}